\def\dOi{10(3:19)2014}
\keywords{corecursive algebra, corecursive monads, Bloom monads,
  iteration theory}
\newcommand{\takeout}[1]{\empty}
\theoremstyle{plain}
\newtheorem{theorem}{Theorem}[section]
\theoremstyle{definition}
\newtheorem{construction}[theorem]{Construction}
\numberwithin{equation}{section}
\def\card{\mathrm{card}\,}
\def\emp{\emptyset}
\def\o{\cdot}
\def\colim{\mathop{\mathrm{colim}}}
\begin{document}
%
%
\FXRegisterAuthor{sm}{asm}{SM}
%
%
\title[Corecursive Algebras, Corecursive Monads and Bloom Monads]{%
  Corecursive Algebras, Corecursive Monads \\and Bloom Monads}


\author[J.~Ad\'amek]{Ji\v{r}\'\i\ Ad\'amek\rsuper a}
\address{{\lsuper a}Institut f\"ur Theoretische Informatik, Technische Universit\"at Braunschweig, Germany}
\email{adamek@iti.cs.tu-bs.de}

\author[M.~Haddadi]{Mahdie Haddadi\rsuper b}
\address{{\lsuper b}Department of Mathematics, Statistics and Computer Science, Semnan University, Semnan, Iran}
\email{mahdiehaddadi7@gmail.com}

\author[S.~Milius]{Stefan Milius\rsuper c}
\address{{\lsuper c}Lehrstuhl f\"ur Theoretische Informatik, Friedrich-Alexander-Universit\"at Erlangen-N\"urnberg, Germany}
\email{mail@stefan-milius.eu}


\begin{abstract}
An algebra is called corecursive if from every coalgebra a unique coalgebra-to-algebra homomorphism exists into it. We prove that free corecursive algebras are obtained as coproducts of the terminal coalgebra (considered as an algebra) and free algebras. The monad of free corecursive algebras is proved to be the free corecursive monad, where the concept of corecursive monad is a generalization of Elgot's iterative monads, analogous to corecursive algebras generalizing completely iterative algebras. We also characterize the Eilenberg-Moore algebras for the free corecursive monad and call them Bloom algebras.
\end{abstract}

\maketitle

%
%

\section{Introduction}

The study of structured recursive definitions is fundamental in many
areas of computer science. This study can use algebraic methods
extended by suitable recursion concepts. One such example are
completely iterative algebras: these are algebras in which recursive
equations with parameters have unique solutions, see \cite{m_cia}. In
the present paper we study corecursive algebras.  These are
$H$-algebras for a given endofunctor $H$ in which recursive equations
without parameters have unique solutions. Equivalently, for every
coalgebra there exists a unique coalgebra-to-algebra morphism. The
dual concept, recursive coalgebra, was introduced by G.~Osius in
\cite{g}, and for endofunctors weakly preserving pullbacks P.~Taylor
proved that this is equivalent to being parametrically recursive, see
\cite{t}. In the dual situation, since weak preservation of pushouts
is rare, the concepts of corecursive algebra and completely iterative
algebra usually do not coincide. The former was studied by
V.~Capretta, T.~Uustalu and V.~Vene \cite{cuv2}, and various
counter-examples demonstrating e.g. the difference of the two concepts
for algebras can be found there. In the present paper we contribute to
the development of the mathematical theory of corecursive
algebras. The goal is to eventually arrive at a useful body of results
and constructions for these algebras. A major ingredient of any theory
of algebraic structures is the study of how to freely endow an object
with the structure of interest. So the main focus of the present paper
are corecursive $H$-algebras freely generated by an object $Y$. Let
$FY$ denote the free $H$-algebra on $Y$ and let $T$ be the terminal
$H$-coalgebra (which, due to Lambek's Lemma, can be regarded as an
algebra). We prove that the coproduct of these two
algebras $$MY=T\oplus FY$$ is the free corecursive algebra on
$Y$. Here $\oplus $ is the coproduct in the category of
$H$-algebras. For example for the endofunctor $HX=X\times X$ the
algebra $MY$ consists of all (finite and infinite) binary trees with
finitely many leaves labelled in $Y$.

We also introduce the concept of a corecursive monad.  This is a
weakening of completely iterative monads of C.~Elgot, S.~Bloom and R.~Tindell \cite{ebt} 
analogous to corecursive algebras as a weakening of completely
iterative ones.  The monad $Y\mapsto MY$ of free corecursive algebras
is proved to be corecursive, indeed, this is the free corecursive
monad generated by $H$. For endofunctors of {\bf Set} we also prove
the converse: whenever $H$ generates a free corecursive monad, then it
has free corecursive algebras (and the free monad is then given by the
corresponding adjunction).

We characterize the Eilenberg-Moore algebras for the free corecursive monad:  these are $H$-algebras in which  every recursive equation without parameters has a solution (not necessarily unique), and which allow a functorial choice of solutions. We call these algebras \emph{Bloom algebras}; they are analogous to the complete Elgot algebras of \cite{amv3} where the corresponding monad was the free completely iterative monad on $H$.

We also study the finitary versions of our concepts. An algebra $A$ is
called finitary corecursive if all coalgebras on finitely presentable
objects have a unique coalgebra-to-algebra morphism in to $A$. And
finitary corecursive monads are defined analogously. Every finitary
endofunctor $H$ is proved to generate a free finitary corecursive
monad $\mathbb{M}_H$. We form the free strict functor $H_\bot=H+1$ on
$H$ and obtain a monad $\mathcal{M}^*$ on the category of finitary
functors given by
$$\mathcal{M}^*(H)=\mathbb{M}_{H_\bot}.$$ 
The Eilenberg-Moore algebras for $\mathcal{M}^*$ are called Bloom
monads. They correspond to iteration theories of Bloom and \'Esik:
recall from \cite{amv_what} that the latter are precisely the
Eilenberg-Moore algebras for the free-iteration-theory monad. Bloom
monads are monads $\mathbb{S}$ equipped with an operation $\dag$
assigning to every finitary non-parametric equation morphism a
solution in free algebras for $\mathbb{S}$. This operation satisfies
precisely the equational properties that non-parametric iteration in
Domain Theory satisfies. We list some of those equational
properties. The question whether our list is complete is open.

This paper is a revised and extended version of the conference paper
\cite{ahm11}. Here we added all technical details and proofs, and our
discussion of the equational properties of Bloom monads and their
properties in new.

\subsection*{Acknowledgments}
We are grateful to Zolt\'an \'Esik for a substantial contribution to the discussion of equations in Bloom monads. And to Paul Levy who suggested that  Proposition \ref{Bloom cat=slice cat} holds.

\section{Corecursive Algebras}
The following definition is the dual of the concept introduced by G. Osius in \cite{g} and studied by P. Taylor \cite{t,t2}. We assume throughout the paper that  a category $\mathcal A$ and an endofunctor $H:\mathcal A\rightarrow \mathcal A$ are given. We denote by $\mathsf{Alg}\, H$ the category of algebras $a:HA\rightarrow A$ and homomorphisms, and by $\mathsf{Coalg}\, H$ the category of coalgebras $e:X\rightarrow HX$ and homomorphisms. A coalgebra-to-algebra morphism from the latter to the former  is a morphism $f:X\rightarrow A$ such that $f=a\cdot Hf\cdot e$.

\begin{defi}\label{def of cor.alg}
An algebra $a:HA\rightarrow A$ is called {\it corecursive} if for every coalgebra $e:X\rightarrow HX$ there exists a unique coalgebra-to-algebra homomorphism $e^\dagger:X\rightarrow A$. That is, the square
\begin{equation}\label{eq:sol}
  \vcenter{
    \xymatrix{
      X\ar[r]^{e^\dagger}\ar[d]_{e}&A\\
      HX\ar[r]_{He^\dagger}&HA\ar[u]_{a}
    }
  }
\end{equation}
commutes. We call $e$ an {\it equation morphism} and $e^\dagger$ its {\it solution}.
\end{defi}
\begin{rem}
 For an endofunctor on {\bf Set}, we can view $e:X\rightarrow HX$ as a system of recursive equations using variables from the set $X$, and $e^{\dagger}:X\rightarrow A$ is the solution of the system. We illustrate this on classical $\Sigma$-algebras. These are the algebras for the polynomial set functor
$$H_\Sigma X=\coprod_{\sigma\in \Sigma} X^n$$
where $n$ is the arity of $\sigma$. For every set $X$ (of recursion variables) and every system of mutually recursive equations
$$x=\sigma(x_1,\ldots,x_n),$$
one for every $x\in X$, where $\sigma\in \Sigma$ has arity $n$ and $x_i\in X$, we get the corresponding coalgebra
$$
\xymatrix@1{
  e:X\ar[r] & H_{\Sigma}X;
}
\qquad 
x\mapsto(x_1,\ldots,x_n)
\qquad \textrm{in the $\sigma$-summand $X^n$.}
$$
The square~(\ref{eq:sol}) tells us that the substitution of $e^\dagger(x)$ for $x\in X$ makes the formal equations $x=\sigma(x_1,\ldots,x_n)$ identities in $A$:
$$e^\dagger (x)=\sigma^A(e^\dagger(x_1),\ldots,e^\dagger(x_n)).$$
\end{rem}
\begin{exa}\label{exam of core}\hfill
  \begin{enumerate}
  \item  In~\cite{cuv2} this concept of corecursive algebras is studied and compared with a number of related concepts. A concrete example of a corecursive algebra from that paper, for the endofunctor $HX=E\times X\times X$ on {\bf Set}, is the set $E^\infty$ of all streams. The operation $a:E\times E^\infty\times E^\infty\rightarrow E^\infty$ is given by $a(e,u,v)$ having head $e$ and continuing by the merge of $u$ and $v$.

\takeout{
(2) In a category with finite products consider algebras on one binary operation, i.e., $HX=X\times X$. An algebra $a:A\to A\times A$ is corecursive iff it has a unique idempotent (global) element. That is, a unique $i:1\to A$ with $a\cdot (i\times i)=a$.

Indeed, this is necessary since $i$ is the solution of $e=id :1\to
1\times 1$. And it is sufficient: given an equation morphism $e:X\to
X\times X$, the morphism $X\stackrel{!}\to 1\stackrel{i}\to A$ is
easily seen to be the unique solution of $e$ in $A$.
}

\item If $H$ has a terminal coalgebra $\tau:T\rightarrow HT$, then by Lambek's Lemma $\tau$ is invertible and the resulting algebra $\tau^{-1}:HT\rightarrow T$ is corecursive. In fact, this is the initial corecursive algebra, that is, for every corecursive algebra $(A,a)$  a unique algebra homomorphism  from $(T,\tau^{-1})$ exists, see the dual of~\cite[Proposition~2]{cuv2}. There also the converse is proved (dual of Proposition 7), that is, if the initial corecursive algebra exists, then it is a terminal coalgebra (via the inverse of the algebra structure).

\item The trivial terminal algebra $H1\rightarrow 1$, where $1$ is the terminal object in $\mathcal A$, is clearly corecursive.

\item If $a:HA\rightarrow A$ is a corecursive algebra, then so is $Ha:HHA\rightarrow HA$, see~\cite[Proposition~6]{cuv2}. We generalize this in Lemma \ref{generalized} below.

\item Combining (3) and (4) we conclude that the terminal $\omega^{op}$-chain
$$
\xymatrix{
1&H1\ar[l]_-a&HH1\ar[l]_-{Ha}&\ldots\ar[l]_-{HHa}
}
$$
consists of corecursive algebras. Indeed, the continuation to $H^i1$
for all ordinals (with $H^i1=\lim_{k\leq i} H^k1$ for all limit
ordinals) also yields corecursive algebras. This follows from the
following.
\end{enumerate}
\end{exa}

\begin{prop}\label{lim of core is core}
Let $\mathcal A$ be a complete category. Then corecursive algebras are closed under limits in $\mathsf{Alg}\, H$. Thus, limits of corecursive algebras are formed on the level of $\mathcal A$.
\end{prop}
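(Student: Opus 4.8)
The plan is to split the statement into two parts and verify them in turn. First I would invoke the standard fact that the forgetful functor $U\colon \mathsf{Alg}\, H\to\mathcal A$ creates limits, which gives the second sentence (that limits are formed on the level of $\mathcal A$). Concretely, given a diagram $D\colon\mathcal I\to\mathsf{Alg}\, H$ with $D(i)=(A_i,a_i)$, I would form the limit $L=\lim_i A_i$ in $\mathcal A$ with projections $\pi_i\colon L\to A_i$. The morphisms $a_i\cdot H\pi_i\colon HL\to A_i$ are compatible with the connecting maps of $D$ (since those are homomorphisms: for a $D$-map $f\colon A_i\to A_j$ one has $a_j\cdot H\pi_j=a_j\cdot Hf\cdot H\pi_i=f\cdot a_i\cdot H\pi_i$), so they constitute a cone and induce a unique $\ell\colon HL\to L$ with $\pi_i\cdot\ell=a_i\cdot H\pi_i$ for all $i$. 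One checks routinely that $(L,\ell)$ is the limit of $D$ in $\mathsf{Alg}\, H$ and that each $\pi_i$ is a homomorphism.

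Next I would prove that $(L,\ell)$ is corecursive whenever all $(A_i,a_i)$ are. Fix a coalgebra $e\colon X\to HX$; for each $i$ let $e_i^\dagger\colon X\to A_i$ be the unique solution in the corecursive algebra $A_i$. The key step is to show that the family $(e_i^\dagger)_i$ is a cone over $D$: for a connecting homomorphism $f\colon(A_i,a_i)\to(A_j,a_j)$ the composite $f\cdot e_i^\dagger$ is again a solution of $e$ in $A_j$, since
\[
f\cdot e_i^\dagger=f\cdot a_i\cdot He_i^\dagger\cdot e=a_j\cdot Hf\cdot He_i^\dagger\cdot e=a_j\cdot H(f\cdot e_i^\dagger)\cdot e,
\]
using that $f$ is a homomorphism. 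By uniqueness of solutions in $A_j$ we get $f\cdot e_i^\dagger=e_j^\dagger$, so the cone condition holds and the limit induces a unique $e^\dagger\colon X\to L$ with $\pi_i\cdot e^\dagger=e_i^\dagger$.

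It then remains to see that this $e^\dagger$ is the unique solution in $L$. For existence I would chase the defining equation through each projection: $\pi_i\cdot\ell\cdot He^\dagger\cdot e=a_i\cdot H\pi_i\cdot He^\dagger\cdot e=a_i\cdot He_i^\dagger\cdot e=e_i^\dagger=\pi_i\cdot e^\dagger$, and since the $\pi_i$ are jointly monic this yields $\ell\cdot He^\dagger\cdot e=e^\dagger$. For uniqueness, if $s\colon X\to L$ is any solution then each $\pi_i\cdot s$ is a solution in $A_i$ by the same computation, hence equals $e_i^\dagger=\pi_i\cdot e^\dagger$, and joint monicity of the $\pi_i$ forces $s=e^\dagger$.

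The argument is essentially a diagram chase, so I expect no deep difficulty; the one point requiring genuine care is the cone compatibility in the second paragraph, where the uniqueness of solutions in each $A_j$ is exactly what is needed to glue the individual solutions $e_i^\dagger$ into a cone. Everything else follows mechanically from the universal property of the limit and the joint monicity of the projections $\pi_i$.
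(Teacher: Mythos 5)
Your proposal is correct and follows essentially the same route as the paper: form the solution into the limit from the unique solutions $e_i^\dagger$ in the components, and deduce uniqueness from the fact that a morphism into the limit is a solution iff all its projections are solutions (joint monicity of the projections). The only difference is that the paper treats the product case explicitly and declares general limits "analogous," whereas you supply the one genuinely extra ingredient for arbitrary diagrams, namely that connecting homomorphisms carry solutions to solutions so that $(e_i^\dagger)_i$ is a cone --- this is exactly the content of the paper's later Lemma on solution preservation, proved here inline.
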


\begin{proof}
It is easy to verify that limits in $\mathsf{Alg}\, H$ are formed on the level of $\mathcal A$. Let us prove that the product of corecursive algebras is  corecursive. The proof for general limits is analogous.

Let $(A,a)$ be  the product of corecursive algebras $(A_i,a_i)$, with projections $p_i:A\rightarrow A_i$. For every coalgebra $e:X\rightarrow HX$ we have the unique coalgebra-to-algebra morphism $e^\dagger _i:X\rightarrow A_i$, for all $i\in I$, and the morphism $e^\dagger =\langle e_i^\dagger\rangle :X\rightarrow A=\prod_{i\in I}A_i$ is a coalgebra-to-algebra morphism. Indeed, for every $i\in I$, the diagram
$$
\xymatrix{
X\ar[r]^{e^\dagger}\ar[d]_{e}&A\ar[r]^{p_i}&A_i\ar@{<-}`u[l]`[ll]_{e^\dagger_i}[ll]\\
HX\ar[r]_{He^\dagger}&HA\ar[u]_a\ar[r]_{Hp_i}&HA_i\ar[u]_{a_i}\ar@{<-}`d[l]`[ll]^{He^\dagger_i}[ll]
}
$$
commutes, except perhaps for the left hand inner square; but this
suffices to establish the desired commutativity of the left hand
square.  Since all $A_i$ are corecursive, the uniqueness of
$e^\dagger$ follows from the observation that there is a one-one
correspondence between solutions $s:X\rightarrow A$ of $e$ in $A$ and
families of solutions $s_i:X\rightarrow A_i$ of $e$ in $A_i$, for all
$i\in I$.
\end{proof}

In the following we write $\mathsf{inl}:X\rightarrow X+Y$ and $\mathsf{inr}:Y\rightarrow X+Y$ for the injections of a coproduct.

\begin{lem}\label{generalized}
Let $(A,a)$ be an algebra and $f:B\rightarrow A$  a morphism. Then $(A,a)$ is a corecursive algebra if and only if the algebra
$$\xymatrix{\overline a\equiv H(HA+B)\ar[r]^-{H[a,f]}& HA\ar[r]^-{\mathsf{inl}}& HA+B}$$
is corecursive.
\end{lem}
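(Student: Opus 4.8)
The plan is to fix an arbitrary coalgebra $e:X\to HX$ and to exhibit a bijection between the solutions of $e$ in $(A,a)$ and the solutions of $e$ in the algebra $(HA+B,\overline a)$. Write $\overline A=HA+B$ for brevity. Since a bijection between solution sets preserves both existence and uniqueness, the equivalence of the two corecursiveness statements then follows at once, and in both directions simultaneously: for every $e$ there is a unique solution in $(A,a)$ if and only if there is a unique solution in $(\overline A,\overline a)$.

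The crux is a factorization forced by the special shape $\overline a=\mathsf{inl}\cdot H[a,f]$. First I would observe that every solution $g:X\to HA+B$ of $e$ in $\overline A$ automatically factors through the left-hand injection: from $g=\overline a\cdot Hg\cdot e=\mathsf{inl}\cdot H([a,f]\cdot g)\cdot e$ one reads off that $g=\mathsf{inl}\cdot Hh\cdot e$, where $h=[a,f]\cdot g:X\to A$. A one-line computation then yields $h=[a,f]\cdot\mathsf{inl}\cdot Hh\cdot e=a\cdot Hh\cdot e$, so $h$ is a solution of $e$ in $(A,a)$. This defines the passage from solutions in $\overline A$ to solutions in $A$.

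For the reverse passage, given a solution $h=a\cdot Hh\cdot e$ in $(A,a)$, I would put $g=\mathsf{inl}\cdot Hh\cdot e$ and verify directly that $g$ solves $e$ in $\overline A$: since $[a,f]\cdot g=a\cdot Hh\cdot e=h$, we get $\overline a\cdot Hg\cdot e=\mathsf{inl}\cdot H([a,f]\cdot g)\cdot e=\mathsf{inl}\cdot Hh\cdot e=g$. Finally I would check that the two passages are mutually inverse: starting from $h$ and returning gives $[a,f]\cdot(\mathsf{inl}\cdot Hh\cdot e)=h$, and starting from $g$ and returning reproduces $g=\mathsf{inl}\cdot Hh\cdot e$, which is exactly the factorization established above.

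The only real content is the factorization observation; the remaining steps are routine coproduct bookkeeping, and I anticipate no genuine obstacle beyond keeping the injections and the functor $H$ in their correct places. As a sanity check, specializing $B$ to an initial object (so that $HA+B\cong HA$, $[a,f]=a$, and $\overline a=Ha$) recovers the fact from Example~\ref{exam of core}(4) that $Ha$ is corecursive whenever $a$ is, confirming that the lemma is the intended generalization.
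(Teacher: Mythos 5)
Your proof is correct and follows essentially the same route as the paper: the two passages $g\mapsto[a,f]\cdot g$ and $h\mapsto\mathsf{inl}\cdot Hh\cdot e$ are exactly the maps the paper uses in its existence and uniqueness arguments. Packaging them as an explicit bijection of solution sets is a slightly tidier organization that handles both directions of the equivalence at once, but the underlying computations are identical.
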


\begin{proof}
Let $(A, a)$ be a corecursive algebra and $e:X\rightarrow HX$ be an equation morphism. Then there is a unique solution:
\begin{equation}\label{1}
\vcenter{\xymatrix{
X\ar[r]^{e^\dagger}\ar[d]_{e}&A\\
HX\ar[r]_{He^\dagger}&HA\ar[u]_a
}}
\end{equation}
Now inspection of the following commutative diagram shows that $\mathsf{inl}\cdot He^\dagger\cdot e:X\rightarrow HA+B$ is a solution of $e$ in $HA+B$.
\begin{equation}\label{*}
\vcenter{\xymatrix@C+1pc{
X\ar[r]^e\ar[d]_-{e}&HX\ar[r]^-{He^\dagger}\ar[d]^-{He}&HA\ar[r]^-{\mathsf{inl}}&HA+B\\
HX\ar@{=}[ur]\ar[r]_-{He}&HHA\ar[r]_-{HHe^\dagger}&HHA\ar[u]^{Ha}\ar[r]_-{H\mathsf{inl}}&H(HA+B)\ar[lu]^{H[a,f]}\ar[u]_{\overline a}
}}
\end{equation}
Indeed, commutativity of the middle rectangle follows from Diagram (\ref{1}), the lower triangle on the right is trivial and the upper triangle is the definition of $\overline a$.
To show the uniqueness of the solution, suppose that $s:X\rightarrow HA+B$ is a solution for $e$, so we have the following commutative diagram:
$$
\xymatrix@C+1pc{
X\ar[r]^-{s}\ar[d]_e&HA+B\ar[r]^-{[a,f]}&A\\
HX\ar[r]_-{Hs}&H(HA+B)\ar[u]^{\mathsf{inl}\cdot H[a,f]}\ar[r]_-{H[a,f]}&HA\ar[u]_{a}
}
$$
Since the solution $e^\dagger$ in $A$ is unique, $e^\dagger=[a,f]\cdot s$ and hence
$$\mathsf{inl}\cdot He^\dagger \cdot e = \mathsf{inl}\cdot H[a,f]\cdot Hs\cdot e = s.$$

Conversely, let $(HA+B, \mathsf{inl}\cdot H[a,f])$ be a corecursive algebra and $e:X\rightarrow HX$ be an equation morphism. So there exists a unique solution $e^\dagger $ of $e$ in $HA+B$, and hence we have the above commutative diagram with $e^\dagger$ in lieu  of $s$. That is $[a,f]\cdot e^\dagger$ is a solution of $e$ in the algebra  $(A,a)$. To show uniqueness suppose that $s:X\rightarrow A$ is a solution of $e$, that is $s=a\cdot Hs\cdot e$. Then we have Diagrams (\ref{1}) and (\ref{*}) with the morphism $s$ in lieu of $e^\dagger$. So, by uniqueness of solution in the corecursive algebra $(HA+B, \mathsf{inl}\cdot H[a,f])$, we have   $\mathsf{inl}\cdot Hs\cdot e=e^\dagger$, and hence $s=a\cdot Hs\cdot e=[a,f]\cdot\mathsf{inl}\cdot Hs\cdot e=[a,f]\cdot e^\dagger$.
\end{proof}

\begin{exa}
  \label{ex:binary}
Binary algebras: For $HX=X\times X$, every algebra  (given by the binary operation ``$*$" on a set $A$) which is corecursive has a unique {\it idempotent} $i=i*i$. This is the solution of the recursive equation $$x=x*x$$
expressed by the isomorphism $e:1\stackrel{\sim}\rightarrow 1\times
1$. Moreover the idempotent is {\it completely factorizable}, where
the set of all completely factorizable elements is defined to be the largest subset of $A$ such that every element $a$ in it can be factorized as $a=b*c$, with $b,c$ completely factorizable. The corecursiveness of $A$ implies  that no other element but $i$ is completely factorizable: consider the system of recursive equations
\begin{align}
x_\epsilon&=x_0*x_1,&
x_0&=x_{00}*x_{01}, &
&\cdots&
x_w&=x_{w 0}*x_{w 1}, &
\cdots\notag
\end{align}
for all finite binary words $w$. Every completely factorizable element $a$ provides a solution $e^\dagger$ with $e^{\dagger}(x_\epsilon )=a$. Since solutions are unique, $a=i$.

Conversely, every binary algebra $A$ with an idempotent $i$ which is the only completely factorizable element is corecursive. Indeed, given a morphism $e:X\rightarrow X\times X$, the constant map $e^{\dagger}:X\rightarrow A$ with value $i$ is a coalgebra-to-algebra morphism. Conversely, if $e^\dagger$ is a coalgebra-to-algebra morphism, then for every $x\in X$ the element $e^\dagger (x)$ is clearly completely factorizable. Therefore, $e^\dagger (x)=i$.
\end{exa}

\begin{rem}\label{cia}
Recall the concept of {\it completely iterative algebra} ({\it cia} for short) from \cite{m_cia}: it is an algebra $(A,a)$ such that for every ``flat equation" morphism $e:X\rightarrow HX+A$ there exists a unique solution, i.e. a unique morphism $e^\dagger$ such that the square
 $$
\xymatrix{
X\ar[r]^{e^\dagger}\ar[d]_{e}&A\\
HX+A\ar[r]_{He^\dagger+A}&HA+A\ar[u]_{[a,A]}
}
$$
commutes. This is obviously stronger than corecursiveness because every coalgebra $e:X\rightarrow HX$ yields a flat equation morphism ${\mathsf{inl}}\cdot e:X\rightarrow HX+A$. Then solutions are determined uniquely. Thus, for example, in the category of complete metric spaces with distance less than one and nonexpanding functions, all algebras for contracting endofunctors (in the sense of P. America and J. Rutten \cite{america+rutten}) are corecursive, because, as proved in \cite{m_cia}, they are cia's. Here is a concrete example: $HX=X\times X$ equipped with the metric taking $1/2$ of the maximum of the two distances is contracting. Thus every binary algebra whose operation is contracting is corecursive.
\end{rem}

\begin{exa}
The endofunctor $HX=X\times X$ has many corecursive algebras that are not cia's. For example the algebra $A$ of all binary trees with finitely many leaves. The operation is tree-tupling and the only completely factorizable tree is the complete binary tree $t$. Thus, $A$ is corecursive. However, if $a\in A$ denotes the root-only tree, then the system of recursive equations
\begin{align}
x&=x*y\notag\\
y&=a\notag
\end{align}
does not have a solution in $A$ (because the tree corresponding to $x$ has infinitely many leaves). Thus $A$ is not a cia.
\end{exa}

\begin{lem}\label{homos are solution prese}
Every homomorphism $h:(A,a)\rightarrow (B,b)$ in $\mathsf{Alg}\, H$ with $(A,a)$ and $(B,b)$ corecursive preserves solutions. That is,  given a coalgebra $e:X\rightarrow HX$ with a solution $e^{\dagger}:X\rightarrow A$ in the domain algebra, then $h\cdot e^\dagger:X\rightarrow B$ is the solution in the codomain one.
\end{lem}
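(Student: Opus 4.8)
The plan is to verify directly that $h\cdot e^\dagger$ is a coalgebra-to-algebra morphism from $(X,e)$ into $(B,b)$, and then to appeal to the corecursiveness of $(B,b)$ to conclude that it must be \emph{the} solution, solutions being unique there.

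The two facts I would use are, first, that $e^\dagger$ solves $e$ in $(A,a)$, so that
$$e^\dagger = a\cdot He^\dagger\cdot e,$$
and second, that $h$ is an $H$-algebra homomorphism, so that
$$h\cdot a = b\cdot Hh.$$
Starting from the right-hand side of the equation that defines a solution of $e$ in $B$, I would compute
$$b\cdot H(h\cdot e^\dagger)\cdot e = b\cdot Hh\cdot He^\dagger\cdot e = h\cdot a\cdot He^\dagger\cdot e = h\cdot e^\dagger,$$
where the first step is functoriality of $H$ applied to $h\cdot e^\dagger$, the second is the homomorphism law $h\cdot a = b\cdot Hh$, and the third is the solution equation above. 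This exhibits $h\cdot e^\dagger$ as a solution of $e$ in $(B,b)$.

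Since $(B,b)$ is corecursive, its solution of $e$ is unique, so $h\cdot e^\dagger$ coincides with the solution $e^\dagger$ of $e$ in the codomain algebra, as claimed. I expect no genuine obstacle: the argument is a two-step diagram chase resting entirely on functoriality and the homomorphism square. It is worth noting that corecursiveness of $(A,a)$ is never actually invoked — only the fact that $e^\dagger$ is \emph{a} solution is used — while corecursiveness of $(B,b)$ enters solely to license the phrase ``the solution'' in the conclusion.
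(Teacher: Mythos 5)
Your argument is correct and is essentially the paper's own proof: the paper simply pastes together the solution square for $e^\dagger$ with the homomorphism square for $h$ into one diagram, which is exactly your equational chain, and then uses uniqueness of solutions in $(B,b)$. Your observation that corecursiveness of $(A,a)$ is never used is also accurate.
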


\begin{proof}
This follows from the diagram
\[
\xymatrix{
X\ar[r]^{e^\dagger}\ar[d]_{e}&A\ar[r]^h&B\\
HX\ar[r]_{He^\dagger}&HA\ar[u]_{a}\ar[r]_{Hh}&B\ar[u]_b 
}
\vspace*{-22pt}
\]
\end{proof}

\noindent We thus consider corecursive algebras as a full subcategory $\mathsf{Alg}_C\, H$ of $\mathsf{Alg}\, H$. We obtain a forgetful functor
\[\begin{array}{cccccc}
\mathsf{Alg}_C\,H&\rightarrow &\mathcal A\qquad
(A,a)&\mapsto&A
\end{array}\]

In Section~\ref{sec:4} we prove that this forgetful functor has a left adjoint, that is, free corecursive algebras exist, if and only if a terminal coalgebra $T$ exists and every object $Y$ generates a free algebra $FY$ (i.e., the forgetful functor $\mathsf{Alg}\, H\rightarrow \mathcal A$ has a left adjoint). Our result holds for example for all set functors, and for them the formula for the free corecursive algebra is $T\oplus FY$, where $\oplus$ is the coproduct in $\mathsf{Alg}\, H$.

Recall from \cite{GU} that given an infinite cardinal number $\lambda$, a functor is called {\it $\lambda$-accessible} if it preserves $\lambda$-filtered colimits. An object $X$ whose hom-functor $\mathcal A(X,-)$ is $\lambda$-accessible is called {\it $\lambda$-presentable}. A category $\mathcal A$ is {\it locally $\lambda$-presentable} if it has
\begin{enumerate}[label=\({\alph*}]
\item colimits, and
\item a set of $\lambda$-presentable objects whose closure under $\lambda$-filtered colimits is all of $\mathcal A$. 
\end{enumerate}
A category $\mathcal A$ is called \emph{locally presentable} (and a functor $F$ \emph{accessible}, resp.) if there exists some $\lambda$ such that $\mathcal A$ is locally $\lambda$-presentable (and $F$ $\lambda$-accessible, resp.). 

For a $\lambda$-accessible endofunctor $H$, the category $\mathsf{Alg}\, H$ is also locally $\lambda$-presentable, see \cite{ar}. For corecursive algebras we have:

\begin{prop}\label{dirlim of cor is cor}
Let $\mathcal A$ be a locally presentable category. Then for every accessible endo\-functor $H$, the category $\mathsf{Alg}_C\, H$ of corecursive algebras is locally presentable.
\end{prop}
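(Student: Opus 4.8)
The plan is to exhibit $\mathsf{Alg}_C\, H$ as a full subcategory of the locally presentable category $\mathsf{Alg}\, H$ that is closed under limits and under $\mu$-filtered colimits for a suitable cardinal $\mu$, and then to invoke the standard result (see~\cite{ar}) that a full subcategory of a locally presentable category closed under limits and $\mu$-filtered colimits is again locally presentable (and, in fact, reflective). Closure under limits is already Proposition~\ref{lim of core is core}, so the real content is closure under filtered colimits, to which the name of the proposition refers.

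First I would fix $\lambda$ so that $\mathcal A$ is locally $\lambda$-presentable and $H$ is $\lambda$-accessible, and then choose $\mu\geq\lambda$ large enough that (i) the forgetful functor $\mathsf{Alg}\, H\to\mathcal A$ creates $\mu$-filtered colimits (this holds for every $\mu\geq\lambda$, since $H$ preserves such colimits), and (ii) every coalgebra $e:X\to HX$ is a $\mu$-filtered colimit, in $\mathsf{Coalg}\, H$, of coalgebras $(X_p,e_p)$ whose carriers $X_p$ are $\mu$-presentable in $\mathcal A$. Point (ii) is available because $\mathsf{Coalg}\, H$ is itself locally presentable: it is the inserter of $\mathrm{Id}$ and the accessible functor $H$, hence accessible, and it is cocomplete since colimits of coalgebras are formed on the level of $\mathcal A$; its colimit-preserving forgetful functor to $\mathcal A$ then sends $\mu$-presentable coalgebras to $\mu$-presentable objects.

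Now let $(A,a)=\colim_i(A_i,a_i)$ be a $\mu$-filtered colimit of corecursive algebras, with injections $c_i:A_i\to A$, computed on the level of $\mathcal A$. For \emph{existence}, given a coalgebra $(X,e)$ with canonical solutions $e^\dagger_i:X\to A_i$, the homomorphism equation $c_i\cdot a_i=a\cdot Hc_i$ shows directly that each $c_i\cdot e^\dagger_i$ is a solution of $e$ in $A$; these agree for all $i$ because the connecting homomorphisms preserve solutions by Lemma~\ref{homos are solution prese} (so $c_{ij}\cdot e^\dagger_i=e^\dagger_j$ and hence $c_i\cdot e^\dagger_i=c_j\cdot e^\dagger_j$), yielding a well-defined solution. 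For \emph{uniqueness} I would first treat the $\mu$-presentable case: if $X$ is $\mu$-presentable and $u:X\to A$ solves a coalgebra on $X$, then $u$ factors through some $c_i$, and $\mu$-filteredness together with the homomorphism equation forces this factorization to become the canonical solution $e^\dagger_j$ in some later $A_j$, whence $u=c_j\cdot e^\dagger_j$. For a general carrier $X$ I would decompose $(X,e)=\colim_p(X_p,e_p)$ as in~(ii); since the injections $x_p:X_p\to X$ are coalgebra homomorphisms, any solution $s$ restricts to a solution $s\cdot x_p$ of $e_p$, each of which is uniquely determined by the $\mu$-presentable case, so $s$ itself is unique. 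Thus $(A,a)$ is corecursive, i.e. $\mathsf{Alg}_C\, H$ is closed under $\mu$-filtered colimits.

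The step I expect to be the main obstacle is exactly this uniqueness argument for coalgebras with large carriers: existence and the $\mu$-presentable case are essentially formal diagram chases, but handling an arbitrary $X$ genuinely requires the accessibility of $H$, in the guise of the decomposition~(ii) of $\mathsf{Coalg}\, H$ into small-carrier coalgebras together with the fact that solutions restrict along coalgebra homomorphisms. Once closure under limits and under $\mu$-filtered colimits is established, the appeal to the general theorem on limit-closed, filtered-colimit-closed subcategories of locally presentable categories finishes the proof at once.
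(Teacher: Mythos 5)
Your proposal is correct and follows essentially the same route as the paper's proof: closure under limits via Proposition~\ref{lim of core is core}, closure under $\lambda$-filtered colimits by first proving uniqueness of solutions for coalgebras with presentable carriers (factoring through a colimit injection and using filteredness) and then decomposing an arbitrary coalgebra as a filtered colimit of such coalgebras, followed by an appeal to the Reflection Theorem. The only cosmetic difference is that you justify the decomposition of $\mathsf{Coalg}\,H$ via its local presentability as an inserter, whereas the paper cites the result of \cite{ap} directly.
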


\begin{proof}
Choose an uncountable cardinal number $\lambda$ such that $H$ preserves $\lambda$-filtered colimits and $\mathcal A$ is locally $\lambda$-presentable. Then $\lambda$-filtered colimits in $\mathsf{Alg}\,  H$ are clearly formed on the level of $\mathcal A$. And as proved in \cite{ap}, every coalgebra is a $\lambda$-filtered colimit of $\lambda$-presentable coalgebras, and these are precisely the coalgebras carried by $\lambda$-presentable objects in $\mathcal A$.

By the Reflection Theorem (see \cite[Corollary to Theorem 2.48]{ar}), in order to establish that the full subcategory $\mathsf{Alg}_C\, H$ of $\mathsf{Alg}\, H$ is locally $\lambda$-presentable, it suffices to see that it is closed in $\mathsf{Alg}\, H$ under limits and $\lambda$-filtered colimits. For limits see Proposition  \ref{lim of core is core}, and now we prove that $\lambda$-filtered colimits of corecursive algebras in $\mathsf{Alg}\, H$ are corecursive. Indeed, let $(A_t, a_t)_{t\in T}$ be a $\lambda$-filtered diagram with colimit $k_t:(A_t,a_t)\rightarrow (C,c)$. For every coalgebra $e:X\rightarrow HX$, a solution $e^\dagger:X\rightarrow A_t$ exists in $(A_t,a_t)$ and since $k_t$ is a homomorphism, $k_t\cdot e^\dagger$ is a solution in $C$, see Lemma \ref{homos are solution prese}.

To prove that solutions are unique, assume first that $X$ is $\lambda$-presentable in $\mathcal A$. For every solution $e^\dagger:X\rightarrow C$ there exists a $t\in T$ such that $e^\dagger$ factorizes through $k_t$ as follows

$$
\xymatrix{
X\ar[r]^{e^\dagger}\ar[d]_e&C&A_t\ar[l]_{k_t}\ar@{<-}`u[l]`[ll]_-s[ll]\\
HX\ar[r]_{He^\dagger}&HC\ar[u]_c&HA_t\ar[u]_{a_t}\ar[l]^{Hk_t}\ar@{<-}`d[l]`[ll]^-{Hs}[ll]\\
}
$$
The morphism $k_t$ merges $s$ and $a_t\cdot Hs \cdot e$:

\begin{align*}
k_t\cdot(a_t\cdot Hs\cdot e)&=c\cdot Hk_t\cdot Hs\cdot e\\
&=c\cdot He^\dagger \cdot e\\
&=e^\dagger \\
&=k_t\cdot s.
\end{align*}
Consequently, since $k_t$ is a colimit morphism of a $\lambda$-filtered colimit, there exists an object $t'\in T$ and a connecting morphism $u:A_t\rightarrow A_{t'}$ which also merges $s$ and $a_t\cdot Hs\cdot e$, that is $$u\cdot s=u\cdot a_t\cdot Hs\cdot e=a_{t'}\cdot Hu\cdot Hs\cdot e = a_{t'} \cdot H(u\cdot s) \cdot e.$$
This last equation proves that $u\cdot s$ is a solution of $e$ in $A_{t'}$, thus $u\cdot s$ is uniquely determined. Hence, $e^\dagger $ is uniquely determined from $e^\dagger=k_t\cdot s=k_{t'}\cdot u\cdot s$.

Next let $X$ be arbitrary. Express $(X,e)$ in the category of
coalgebras as a $\lambda$-filtered colimit of coalgebras $(X_i, e_i)$
with $X_i$ $\lambda$-presentable in $\mathcal A$. Let $x_i:X_i\rightarrow X$ be
the corresponding colimit cocone. For every solution
$e^\dagger:X\rightarrow C$ each $e^\dagger \cdot x_i$ is a solution of
$e_i$ since we have the following commutative diagram
$$
\xymatrix{
X_i\ar[r]^{x_i}\ar[d]_{e_i}&X\ar[r]^{e^\dagger}\ar[d]^e&C\\
HX_i\ar[r]_{Hx_i}&HX\ar[r]_{He^\dagger}&HC\ar[u]_c
}
$$
Thus, $e^\dagger \cdot x_i$ is uniquely determined by the previous case. Since the cocone of all $x_i$'s is collectively epic, this proves that $e^\dagger$ is uniquely determined.
\end{proof}

\begin{rem}
  We obtain from Proposition~\ref{dirlim of cor is cor} that for an uncountable cardinal number $\lambda$, if $H$ is $\lambda$-accessible and $\mathcal A$ locally $\lambda$-presentable, then so is $\mathsf{Alg}_C\, H$. And for $\lambda = \omega$, i.e.~$H$ is finitary on the locally finitely presentable category $\mathcal A$, we have that $\mathsf{Alg}_C\, H$ is locally $\aleph_1$-presentable.
\end{rem}

\section{Bloom Algebras}

In the case of iterative algebras, it was proved in \cite{amv_atwork}
that every finitary functor $H$ of $\mathcal A$ has a free iterative
algebra on every object of $\mathcal A$, and the resulting monad $\mathbb{R}$ on $\mathcal A$ is a
free iterative monad. The next step was a characterization of the
Eilenberg-Moore algebras for $\mathbb R$ that were called Elgot
algebras~\cite{amv3}. An Elgot algebra has for every finitary flat
equation $e$ a solution $e^\dagger$, but not necessarily
unique. Instead, Elgot algebras are equipped with a solution operation
$e\mapsto e^\dagger$ satisfying some ``natural" axioms.

In the present section we take the corresponding step for corecursive algebras. We introduce Bloom algebras as algebras equipped with an operation assigning to every coalgebra $e$ a solution $e^\dagger$ which forms a functor. Later we prove that Bloom algebras are (analogously to Elgot algebras) precisely the Eilenberg-Moore algebras for the free corecursive monad, see Theorems~\ref{4.13} and~\ref{M is free core monad}.

\begin{defi}
A {\it Bloom algebra} is a triple $(A,a,\dagger)$ where
$a:HA\rightarrow A$ is an $H$-algebra and $\dagger$ is an operation
assigning to every coalgebra $e:X\rightarrow HX$ a
coalgebra-to-algebra homomorphism $e^\dagger:X\rightarrow A$ so that
$\dagger $ is functorial. This means that we obtain a functor
$$\dagger:\mathsf{Coalg}\, H\rightarrow {\mathcal A}/A.$$
More explicitly, given a coalgebra homomorphism
$$
\xymatrix{
X\ar[r]^{e}\ar[d]_{h}&HX\ar[d]^{Hh}\\
X'\ar[r]_{f}&HX'
}
$$\enlargethispage{\baselineskip}
 the following triangle commutes
$$
\xymatrix{
X\ar[rr]^h\ar[dr]_{e^\dagger}&&X'\ar[ld]^{f^\dagger}\\
&A
}
$$
\end{defi}

\begin{exa}\label{ex 3}\hfill
  \begin{enumerate}[label=\({\alph*}]
  \item Every corecursive algebra is a Bloom algebra. Indeed,
    functoriality easily follows from the uniqueness of solutions due
    to the diagram
    $$
    \xymatrix{
      X\ar[r]^{h}\ar[d]_{e}&HX\ar[d]^{Hf}\ar[r]^{f^\dagger}&A\\
      HX\ar[r]_{Hh}&HX'\ar[r]_{Hf^\dagger}&HA\ar[u]_a
    }
    $$

  \item A unary algebra $a: A \to A$ ($H = Id$) is a Bloom algebra iff
    $a$ has a fixpoint, i.\,e., a morphism $t: 1 \to A$ with $a\cdot t
    = t$. More precisely:
    \begin{enumerate}[label=(\arabic*)]
    \item Given a fixpoint, then $(A, a, \dagger)$ is a Bloom algebra
      where $e^\dagger = t \cdot !$ for the unique morphism $!: X \to 1$.
    \item Given a Bloom algebra $(A,a,\dagger)$, then $id_1^\dagger: 1
      \to A$ is a fixpoint of $a$.
    \end{enumerate}
  \item Let $\mathcal A$ have finite products. An algebra $a:A\times
    A\rightarrow A$ for $HX=X\times X$ is a Bloom algebra if and only
    if it has an idempotent global element, that is $i:1\rightarrow A$
    satisfying $a\cdot(i\times i)=i$ (recall that $1\times 1=1$). More
    precisely:

    \begin{enumerate}[label=(\arabic*)]
    \item Given an idempotent $i$, we have a Bloom algebra $(A,a,\dagger)$, where $\dagger $ is the constant function with value $e^\dagger=i\cdot !$.

    \item Given a Bloom algebra $(A,a,\dagger)$, there exists an idempotent $i$ such that $\dagger$ is the constant function with value $e^\dagger=i\cdot !$.
    \end{enumerate}
    Compare this with Example \ref{ex:binary}. In particular every group, considered as a binary algebra in {\bf Set}, is thus a Bloom algebra in a unique sense. But no nontrivial group is corecursive.

  \item Every continuous algebra is a Bloom algebra if we define
    $e^\dagger$ to be the least solution of $e$. More detailed, let
    $H$ be a locally continuous endofunctor of the category
    $\mathsf{CPO}$ of complete ordered sets (i.\,e., partially ordered
    sets with a least element $\bot$ and with joins of
    $\omega$-chains). For every $H$-algebra $(A,a)$ and every equation
    morphism $e:X\rightarrow HX$, we can define in $\mathsf{CPO}(X,A)$
    a function $e^\dagger :X\rightarrow A$ as a join of the sequence
    $e_n^\dagger$ defined by $e_0^\dagger=\mathsf{const}_\bot$ and
    $e^\dagger_{n+1}=a\cdot He^\dagger_n\cdot e$. Then the least
    solution of $e$ is $e^\dagger =\bigvee_{n<\omega} e_n^\dagger$ and
    $(A,a,\dagger)$ is a Bloom algebra. Example~(b) demonstrates that
    this need not be corecursive.

  \item Every product of Bloom algebras is a Bloom algebra. We define $e^\dagger =\langle e_i^\dagger \rangle$ as in the proof of Proposition \ref{lim of core is core}. More generally: every limit of Bloom algebras is a Bloom algebra.

  \item Every complete Elgot algebra in the sense of
    \cite{amv_classes} is a Bloom algebra.
  \end{enumerate}
\end{exa}

\begin{defi}\label{preserv solu}
By a {\it homomorphism} of Bloom algebras from $(A,a,\dagger)$ to $(B,b,\ddag)$ is meant an algebra homomorphism $h:(A,a)\rightarrow (B,b)$ preserving solutions, that is, for every coalgebra $e:X\rightarrow HX$ the triangle
$$
\xymatrix{
A\ar[rr]^{h}&&B\\
&X\ar[ur]_{e^\ddag}\ar[ul]^{e^\dag}
}
$$
commutes. We denote by $\mathsf{Alg}_B\, H$ the corresponding category of Bloom algebras.
\end{defi}

\begin{prop}\label{Bloom cat=slice cat}
Let $(T,\tau)$ be a terminal coalgebra for $H$. The category of Bloom algebras for $H$ is isomorphic to the slice category $(T,\tau^{-1})/\mathsf{Alg}\, H$.
\end{prop}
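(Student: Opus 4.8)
The plan is to exhibit the isomorphism directly, by sending a Bloom algebra to the value of its solution operation at the terminal coalgebra. The crucial observation is that for the coalgebra $\tau:T\rightarrow HT$ a coalgebra-to-algebra morphism $g:T\rightarrow A$ into $(A,a)$ satisfies $g=a\cdot Hg\cdot\tau$, and, since $\tau$ is invertible, this is precisely the condition $g\cdot\tau^{-1}=a\cdot Hg$ making $g$ an $H$-algebra homomorphism from $(T,\tau^{-1})$ to $(A,a)$. Thus coalgebra-to-algebra morphisms for $\tau$ and algebra homomorphisms out of $(T,\tau^{-1})$ are literally the same morphisms, and this is the hinge of the whole argument.

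First I would define a functor $\Phi:\mathsf{Alg}_B\, H\rightarrow (T,\tau^{-1})/\mathsf{Alg}\, H$ by $\Phi(A,a,\dagger)=\tau^\dagger$, which the observation above identifies with an object of the slice, i.e.\ an algebra homomorphism $(T,\tau^{-1})\rightarrow(A,a)$. On a Bloom homomorphism $g:(A,a,\dagger)\rightarrow(A',a',\ddag)$ I set $\Phi(g)=g$; since $g$ preserves solutions it carries $\tau^\dagger$ to $\tau^\ddag$, hence is a morphism of the slice. Conversely I would define $\Psi$ sending an object $h:(T,\tau^{-1})\rightarrow(A,a)$ to the Bloom algebra $(A,a,\dagger)$ with $e^\dagger:=h\cdot{!_e}$, where ${!_e}:(X,e)\rightarrow(T,\tau)$ is the unique coalgebra homomorphism given by terminality of $T$. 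Two short diagram chases verify that this is a legitimate Bloom structure: that $e^\dagger$ is a coalgebra-to-algebra morphism follows by combining the coalgebra-homomorphism equation $H{!_e}\cdot e=\tau\cdot{!_e}$ with the algebra-homomorphism equation $a\cdot Hh\cdot\tau=h$; and functoriality of $\dagger$ follows because, for a coalgebra homomorphism $k:(X,e)\rightarrow(X',f)$, uniqueness into $T$ forces ${!_f}\cdot k={!_e}$, whence $f^\dagger\cdot k=e^\dagger$. On morphisms $\Psi(g)=g$, and preservation of solutions is immediate: $g\cdot e^\dagger=g\cdot h\cdot{!_e}=h'\cdot{!_e}=e^\ddag$.

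Finally I would check that $\Phi$ and $\Psi$ are mutually inverse. One composite is trivial: ${!_\tau}$ is the identity on $T$, so reconstructing from $h$ gives $\tau^\dagger=h\cdot{!_\tau}=h$. The other composite is the step where the real content lies, and hence the main point to get right: starting from $(A,a,\dagger)$ I must show the reconstructed operation $e\mapsto\tau^\dagger\cdot{!_e}$ agrees with the original $\dagger$. This is exactly functoriality of $\dagger$ applied to the coalgebra homomorphism ${!_e}:(X,e)\rightarrow(T,\tau)$, which yields $\tau^\dagger\cdot{!_e}=e^\dagger$. In other words, every Bloom structure is completely determined by its single value at the terminal coalgebra; the functoriality axiom is doing all the work here, and without it the solution operation could carry more information than the homomorphism $\tau^\dagger$ retains. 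Since both functors act as the identity on underlying algebra homomorphisms, being inverse on objects automatically makes them inverse on morphisms, giving the desired isomorphism of categories.
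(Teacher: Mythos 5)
Your proposal is correct and follows essentially the same route as the paper: both send a Bloom algebra to the solution $\tau^\dagger$ of the terminal coalgebra, reconstruct the solution operation as $e\mapsto h\cdot e^\sharp$ via terminality, and use functoriality of $\dagger$ as the key step showing the round trip recovers the original structure. The only difference is cosmetic: you spell out explicitly that coalgebra-to-algebra morphisms out of $(T,\tau)$ coincide with algebra homomorphisms out of $(T,\tau^{-1})$ and verify that $h\cdot e^\sharp$ is indeed a solution, details the paper leaves implicit.
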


\begin{proof}
Let us, for a coalgebra $(X,e)$, denote the unique coalgebra homomorphism from $X$ to $T$ by $e^\sharp:X\rightarrow T$.
We shall define two functors between $\mathsf{Alg}_B\,H$ and $\mathcal S=(T,\tau^{-1})/\mathsf{Alg}\, H$ and show that they are mutually  inverse.

(a) From Bloom algebras to the slice category $\mathcal{S}$: given a Bloom algebra $(A,a,\dagger)$ we form the solution $\tau^\dagger:T\rightarrow A$ which clearly is an object $(A,\tau^\dagger)$ of $\mathcal S$. For a homomorphism $h:(A,a,\dagger)\rightarrow (B,b,\ddagger)$ of Bloom algebras, we clearly have a morphism $h:(A,\tau^\dagger)\rightarrow (B,\tau^{\ddagger})$ of $\mathcal S$, since $h$ is solution preserving. This defines a functor from $\mathsf{Alg}_B\,H$ to $\mathcal S$.

(b) From $\mathcal S$ to Bloom algebras: Suppose we are given an object $(A,h)$ in $\mathcal S$, that is, an algebra homomorphism:
$$
\xymatrix{
T\ar[d]_h&HT\ar[l]_{\tau^{-1}}\ar[d]^{Hh}\\
A&HA\ar[l]^a
}
$$
We define for every  $e:X\rightarrow HX$ its dagger as $e^\dagger = h\cdot e^\sharp$. This is functorial; indeed, for every coalgebra homomorphism $k:(X,e)\rightarrow (Y,f)$ we have $f^\sharp\cdot k=e^\sharp$ by unicity of the universal property of the terminal coalgebra $(T,\tau)$, thus
\begin{align}
f^\dagger\cdot h&=h\cdot f^\sharp\cdot k\notag\\
&=h\cdot e^\sharp\notag\\
&=e^\dagger\notag
\end{align}
 In addition, every morphism $m:(A,h)\rightarrow (B,h')$ of $\mathcal
 S$ is a homomorphism of Bloom algebras $(A,a,\dagger)\rightarrow (B,b,\ddagger)$:
\begin{align}
m\cdot e^\dagger&=m\cdot h\cdot e^\sharp\notag&\text{by definition of $\dagger$\qquad}\\
&=h'\cdot e^\sharp\notag&\text{$m$ is a morphism in $\mathcal S$}\\
&=e^\ddag\notag&\text{by definition of $\ddag$\qquad}
\end{align}
That this gives a functor from $\mathcal S$ to $\mathsf{Alg}_B\,H$ is immediate.

(c) The two functors above  are mutually inverse. Indeed, it suffices to show that we have  a bijection on the level of objects, since both functors    are  the identity maps on morphisms. So for $(A,h)$ in $\mathcal S$ we form first $(A,a,\dagger)$ as in (b) and then $(A,\tau^\dagger)$ as in (a) and we have $\tau^\dagger=h\cdot\tau^\sharp=h$, as $\tau^\sharp$ is the identity (being the unique coalgebra homomorphism from $(T,\tau)$ to itself). Finally, given a Bloom algebra $(A,a,\ddagger)$ we first form $(A,\tau^\ddagger)$ as in (a) and then $(A,a,\dagger)$ as in (b). Then we have
\begin{align}
e^\dagger&=\tau^\ddagger\cdot e^\sharp\notag&\text{by definition of $\dagger$ in (b)}\\
&=e^\ddag\notag&\text{by functoriality of $\ddagger$\quad\ \,}
\end{align}
This completes the proof.
\end{proof}

\begin{rem}\label{final B=final coalg}
  Being an algebra homomorphism and preserving solutions are
  independent concepts: neither of them implies the other one. To see
  this, consider for $HX = X \times X$ an algebra $A=\{a,b\}$ with a
  binary operation $*$ such that $a$ and $b$ are idempotent. We turn
  $A$ into a Bloom algebra by taking $e^\dagger=\mathsf{const}_a$ for
  every $e:X\rightarrow X\times X$. Then there are two homomorphisms
  from the one-point binary algebra (which clearly is corecursive) to
  $A$, yet only one of them is solution preserving. Thus, there exist
  homomorphisms which are not solution preserving. 

  Conversely, there
  exist solution preserving morphisms which are not homomorphisms. To
  see this, let us assume that we have $x*y=a$ for all $x\neq y$ in
  $A$. There are two different structures of Bloom algebras on $A$,
  ($A,\alpha,e\mapsto \mathsf{const}_a$) and ($A,\alpha,e\mapsto
  \mathsf{const}_b$). The map on $A$ which swaps $a$ and $b$ is a
  solution preserving map between the two Bloom algebras, but not a
  homomorphism.

\end{rem}
 \begin{prop}\label{3x}
 An initial Bloom algebra is precisely a terminal coalgebra.
   \end{prop}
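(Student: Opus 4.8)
The plan is to prove the two implications behind ``precisely'' separately. For the easy half, suppose a terminal coalgebra $(T,\tau)$ exists. By Proposition~\ref{Bloom cat=slice cat} the category $\mathsf{Alg}_B\, H$ is isomorphic to the coslice category $(T,\tau^{-1})/\mathsf{Alg}\, H$, and the latter always has an initial object, namely the identity $\mathrm{id}:(T,\tau^{-1})\rightarrow(T,\tau^{-1})$. Tracing this object back through part~(b) of the proof of Proposition~\ref{Bloom cat=slice cat} (with $h=\mathrm{id}_T$) shows that the corresponding initial Bloom algebra is $(T,\tau^{-1})$ equipped with the solution operation $e^\dagger=e^\sharp$, i.e.\ the terminal coalgebra regarded as a Bloom algebra.

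For the converse I would start from an initial Bloom algebra $(A,a,\dagger)$ and first force $a$ to be invertible. The key gadget is a second Bloom algebra structure on the algebra $(HA,Ha)$, defined by $e^{\dagger'}=He^\dagger\cdot e$ for every coalgebra $e:X\rightarrow HX$. Applying $H$ to the solution identity $e^\dagger=a\cdot He^\dagger\cdot e$ shows that $e^{\dagger'}$ solves $e$ in $(HA,Ha)$, and functoriality of $\dagger'$ follows from that of $\dagger$; moreover $a:(HA,Ha,\dagger')\rightarrow(A,a,\dagger)$ is a homomorphism of Bloom algebras, being an algebra homomorphism for which $a\cdot e^{\dagger'}=e^\dagger$ is exactly the solution identity. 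By initiality there is a unique Bloom homomorphism $u:(A,a,\dagger)\rightarrow(HA,Ha,\dagger')$. Then $a\cdot u$ is a Bloom endomorphism of the initial object, hence $a\cdot u=\mathrm{id}_A$; and since $u$ is an algebra homomorphism, $u\cdot a=Ha\cdot Hu=H(a\cdot u)=\mathrm{id}_{HA}$. Thus $a$ is an isomorphism with $a^{-1}=u$.

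It then remains to see that $(A,a^{-1})$ is a terminal coalgebra. Because $a$ is invertible, the coalgebra homomorphisms $(X,e)\rightarrow(A,a^{-1})$ are exactly the coalgebra-to-algebra morphisms (solutions) of $e$ in $(A,a)$, and each $e^\dagger$ is such a homomorphism by the solution identity; this gives existence. For uniqueness I would consider the solution $(a^{-1})^\dagger:A\rightarrow A$ of the coalgebra $(A,a^{-1})$. Invertibility of $a$ turns the coalgebra-to-algebra equation for $(a^{-1})^\dagger$ into the statement that $(a^{-1})^\dagger$ is an algebra endomorphism of $(A,a)$, while functoriality of $\dagger$ applied to the coalgebra homomorphism $e^\dagger:(X,e)\rightarrow(A,a^{-1})$ yields $(a^{-1})^\dagger\cdot e^\dagger=e^\dagger$ for every $e$, so $(a^{-1})^\dagger$ preserves all solutions. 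Hence $(a^{-1})^\dagger$ is a Bloom endomorphism of the initial object and therefore equals $\mathrm{id}_A$. Finally, for an arbitrary coalgebra homomorphism $g:(X,e)\rightarrow(A,a^{-1})$, functoriality gives $e^\dagger=(a^{-1})^\dagger\cdot g=g$, which is the desired uniqueness.

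The main obstacle is the converse direction: neither invertibility of the structure map nor uniqueness of solutions is built into the rather weak notion of Bloom algebra, so both must be extracted from initiality. The two genuine steps are (i) recognising the auxiliary Bloom algebra $(HA,Ha,\dagger')$ together with the homomorphism $a$ into $(A,a,\dagger)$, which manufactures the inverse of $a$, and (ii) identifying $(a^{-1})^\dagger$ as a Bloom endomorphism so that initiality collapses it to the identity and thereby forces solutions to be unique. Everything else is routine diagram chasing, and the forward direction is essentially immediate from Proposition~\ref{Bloom cat=slice cat}.
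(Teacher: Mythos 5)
Your proof is correct, and it takes the route the paper intends: the paper itself gives no details but delegates to the dual of Propositions 2 and 7 of \cite{cuv2}, which is exactly the Lambek-style argument you carry out (manufacturing $a^{-1}$ from the auxiliary Bloom structure on $(HA,Ha)$ and then collapsing $(a^{-1})^\dagger$ to the identity), correctly replacing uniqueness of solutions by functoriality of $\dagger$ where the corecursive-algebra proof would have used it. The forward direction via Proposition~\ref{Bloom cat=slice cat} is likewise consistent with the paper's arrangement, so your writeup is essentially a fully detailed version of the proof the authors only cite.
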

More precisely, the statement in Example \ref{ex:binary} generalizes from corecursive algebras to Bloom algebras.
Indeed, the proof in \cite{cuv2} can be used again.
\begin{lem}\label{A is B then B is B}
If $(A,a,\dagger)$ is a Bloom algebra and $h:(A,a)\rightarrow (B,b)$ is a homomorphism of algebras, then there is a unique structure of a Bloom algebra on $(B,b)$ such that $h$ is a solution preserving morphism. We call it, \emph{the Bloom algebra induced by} $h$.
\end{lem}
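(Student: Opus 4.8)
The plan is to let the solution-preservation requirement dictate the induced structure, which makes uniqueness immediate and isolates the real content. Since any solution-preserving algebra homomorphism $h$ must satisfy $h\cdot e^\dagger = e^\ddagger$ for every coalgebra $e:X\rightarrow HX$, the only possible candidate for the induced operation is $e^\ddagger := h\cdot e^\dagger$. This observation already proves uniqueness, so what remains is to verify that this prescription genuinely defines a Bloom algebra structure on $(B,b)$ and that $h$ is then solution preserving (the latter holding by construction).

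First I would check that each $e^\ddagger = h\cdot e^\dagger$ is a bona fide solution of $e$ in $(B,b)$, i.e.\ a coalgebra-to-algebra homomorphism. The computation is
$$b\cdot He^\ddagger\cdot e = b\cdot Hh\cdot He^\dagger\cdot e = h\cdot a\cdot He^\dagger\cdot e = h\cdot e^\dagger = e^\ddagger,$$
where the first equality is functoriality of $H$, the second uses that $h$ is an algebra homomorphism ($h\cdot a = b\cdot Hh$), and the third uses that $e^\dagger$ solves $e$ in $(A,a)$. This is the same computation as in Lemma~\ref{homos are solution prese}, but since neither $(A,a)$ nor $(B,b)$ need be corecursive here, that lemma cannot be cited directly and the solution property has to be reverified by hand as above.

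Next I would confirm functoriality of $\ddagger$. Given a coalgebra homomorphism $k:(X,e)\rightarrow (Y,f)$, functoriality of $\dagger$ on $(A,a)$ yields $f^\dagger\cdot k = e^\dagger$; composing with $h$ gives $f^\ddagger\cdot k = h\cdot f^\dagger\cdot k = h\cdot e^\dagger = e^\ddagger$, which is exactly the required commuting triangle. Hence $(B,b,\ddagger)$ is a Bloom algebra, and $h\cdot e^\dagger = e^\ddagger$ says precisely that $h$ is solution preserving.

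I do not expect a genuine obstacle: the statement is essentially a bookkeeping exercise, and the two points demanding any care are (i) that Lemma~\ref{homos are solution prese} is unavailable in the absence of corecursiveness, so the one-line solution computation above must be carried out explicitly, and (ii) that uniqueness should be read off directly from the solution-preservation constraint rather than argued separately. Both are routine, so the proof is short.
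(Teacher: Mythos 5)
Your proposal is correct and follows essentially the same route as the paper: define $e^\ddagger = h\cdot e^\dagger$, verify the solution square using that $h$ is an algebra homomorphism, check functoriality by composing the functoriality of $\dagger$ with $h$, and note that solution preservation forces this definition, giving uniqueness. Your explicit remark that Lemma~\ref{homos are solution prese} cannot be invoked here (since neither algebra is assumed corecursive) is a correct and worthwhile clarification, but the substance of the argument is identical to the paper's.
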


\begin{proof}
For every coalgebra $e:X\rightarrow HX$ we define

$$e^*\equiv X\stackrel{e^\dagger}\rightarrow A\stackrel{h}\rightarrow B$$
and verify that $e^*$ is a solution of $e$ by the following commutative diagram
 $$
\xymatrix{
X\ar[r]^{e^\dagger}\ar[d]_{e}&A\ar[r]^{h}&B\ar@{<-}`u[l]`[ll]_{e^*}[ll]\\
HX\ar[r]_{He^\dagger}&HA\ar[u]_a\ar[r]_{Hh}&HB\ar[u]_{b}\ar@{<-}`d[l]`[ll]^{He^*}[ll]
}
$$
Functoriality is easily checked too:
let $g:(X,e)\rightarrow (Y,f)$ be a coalgebra homomorphism. Then the following equations hold:
\begin{align}
f^*\cdot g&=h\cdot f^\dagger\cdot g &\text{by the definition of $(-)^*$}\notag\\
&=h\cdot e^\dagger&\text{by functoriality of $\dagger$\qquad}\notag\\
&=e^*&\text{by the definition of $(-)^*$}\notag
\end{align}
Finally, $h$ is clearly solution preserving.

The unicity of the Bloom algebra structure given by $(-)^*$ is clear.
\end{proof}

\begin{rem}
 We are going to characterize the left adjoint of the forgetful functor
\[\begin{array}{cccccc}
U:\mathsf{Alg}_B\,H&\rightarrow &\mathcal A,\qquad
(A,a,\dagger)&\mapsto &A.
\end{array}\]
In other words, we characterize the free Bloom algebras: they are
coproducts $T\oplus FY$ of the terminal coalgebra and free algebras. For that we first attend to the existence of those ingredients.
\end{rem}

\begin{lem}\label{free B-> final coalg}
Let $\mathcal A$ be a complete category. If $H$ has a free Bloom algebra on an object $Y$ with $\mathcal A(Y,HY)\neq \emptyset$, then $H$ has a terminal coalgebra.
\end{lem}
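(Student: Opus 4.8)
The plan is to exhibit the free Bloom algebra on $Y$ as a \emph{weakly initial} object of $\mathsf{Alg}_B\, H$, then to manufacture a genuine \emph{initial} Bloom algebra from it using completeness, and finally to invoke Proposition~\ref{3x} (``an initial Bloom algebra is precisely a terminal coalgebra'') to conclude that $H$ has a terminal coalgebra. Write $(\Phi Y,\varphi,\dagger)$ for the free Bloom algebra on $Y$, with universal arrow $\eta:Y\rightarrow \Phi Y$, so that for every Bloom algebra $(A,a,\ddag)$ precomposition with $\eta$ gives a bijection between Bloom-algebra homomorphisms $\Phi Y\rightarrow A$ and morphisms $Y\rightarrow A$ of $\mathcal A$.

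First I would use the hypothesis $\mathcal A(Y,HY)\neq\emptyset$ to fix a coalgebra $e_0:Y\rightarrow HY$. For each Bloom algebra $(A,a,\ddag)$ the solution $e_0^{\ddag}:Y\rightarrow A$ is a morphism of $\mathcal A$, and by the universal property it transposes to a Bloom homomorphism $\sigma_A:\Phi Y\rightarrow A$ with $\sigma_A\cdot\eta=e_0^{\ddag}$. In particular $\mathsf{Alg}_B\,H(\Phi Y,A)\neq\emptyset$ for every $A$, so $\Phi Y$ is weakly initial. (One even checks that $\sigma$ is natural in $A$: since Bloom homomorphisms preserve solutions, any $k:(A,a,\ddag)\rightarrow(B,b,\ddagger)$ satisfies $k\cdot e_0^{\ddag}=e_0^{\ddagger}$, whence $k\cdot\sigma_A=\sigma_B$ follows from the uniqueness half of the universal property after precomposing with $\eta$; but only non-emptiness is needed below.) This is precisely the step where the assumption $\mathcal A(Y,HY)\neq\emptyset$ is consumed, as it guarantees a map from $Y$ into the carrier of every Bloom algebra.

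Next I would record that $\mathsf{Alg}_B\,H$ is complete. Indeed its forgetful functor to $\mathcal A$ creates limits: a limit of the underlying objects exists since $\mathcal A$ is complete, it carries the canonical $H$-algebra structure just as for ordinary algebras, and Example~\ref{ex 3}(e) endows it with the Bloom structure making it the limit in $\mathsf{Alg}_B\,H$. Given a complete category with the weakly initial object $\Phi Y$, I would form the wide equalizer $m:I\rightarrow \Phi Y$ of the \emph{set} of all endomorphisms of $\Phi Y$ in $\mathsf{Alg}_B\,H$ (a set by local smallness of $\mathcal A$), and argue in the standard way that $I$ is initial: for any Bloom algebra $A$ a morphism $I\rightarrow A$ arises as $\sigma_A\cdot m$, while for uniqueness of $f,g:I\rightarrow A$ one takes their equalizer $\epsilon:E\rightarrow I$, picks any morphism $\Phi Y\rightarrow E$, and observes that since $m$ equalizes all endomorphisms of $\Phi Y$ the mono $\epsilon$ is split, hence iso, forcing $f=g$. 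Finally, Proposition~\ref{3x} converts this initial Bloom algebra into a terminal coalgebra for $H$.

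The step I expect to be the main obstacle is precisely the passage from a weakly initial object to an initial one: one must verify that the relevant endomorphisms form an honest set (so that the wide equalizer is a legitimate small limit) and that the equalizer argument for uniqueness goes through cleanly. The remaining verifications---that $e_0^{\ddag}$ transposes correctly across the adjunction and that limits in $\mathsf{Alg}_B\,H$ are created on the level of $\mathcal A$ via Example~\ref{ex 3}(e)---are routine.
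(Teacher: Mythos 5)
Your proposal is correct and follows essentially the same route as the paper: exhibit the free Bloom algebra as weakly initial using a coalgebra structure $e_0:Y\to HY$ and the solutions $e_0^{\ddag}$, use completeness of $\mathsf{Alg}_B\,H$ (Example~\ref{ex 3}(e)) together with Freyd's Initial Object Theorem to obtain an initial Bloom algebra, and conclude via Proposition~\ref{3x}. The only difference is that you unfold the wide-equalizer proof of Freyd's theorem, which the paper simply cites.
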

\begin{proof}
The free Bloom algebra $(A,a,\dagger)$ on $Y$, is weakly initial in $\mathsf{Alg}_B\, H$. To see this, choose a morphism $e:Y\rightarrow HY$. For every Bloom algebra $(B,b,\ddag)$ the solution $e^\ddag:Y\rightarrow B$ extends to a homomorphism $h:(A,a,\dagger)\rightarrow (B,b,\ddag)$ of Bloom algebras.

Since $\mathsf{Alg}_B\,H$ is complete by Example \ref{ex 3}(e), we can use Freyd's Adjoint Functor Theorem.\smnote{Should be Freyd's Initial Object Theorem}
The existence of a weakly initial object implies that $\mathsf{Alg}_B\, H$ has an initial object. Now apply Proposition \ref{3x}.
\end{proof}

\takeout{
\begin{exa}\label{set functor}
Every set functor with a free Bloom algebra has a terminal coalgebra. The condition {\bf Set}$(Y, HY)\neq \emptyset$ is here automatically satisfied except when $H$ is constantly $\emptyset$.
\end{exa}
}

\begin{construction}\label{free chain} Free-Algebra Chain.
Recall from \cite{a} that if $\mathcal A$ is cocomplete, we can define a chain constructing the free $H$-algebra on $Y$ as follows:
$$
\xymatrix{
Y\ar[r]^-{\mathsf{inr}}&HY+Y\ar[rr]^-{H\mathsf{inr}+Y}&&H(HY+Y)+Y\ar[r]^{}&\ldots
}
$$
We mean the essentially unique chain $V:\mathsf{Ord}\rightarrow \mathcal A$ with
\begin{align}
V_0&=Y\notag\\
V_{i+1}&=HV_i+Y\notag
\end{align}
and for limit ordinals $i$
\begin{align}
V_j&=\colim_{k<j}V_k\notag
\end{align}
whose connecting morphisms $v_{i,j}:V_i\rightarrow V_j$ are defined by
$$v_{0,1}\equiv \mathsf{inr}:Y\rightarrow HY+Y\ \ \ \text{and} \ \ \ v_{i+1,j+1}\equiv Hv_{i,j}+Y$$
and for limit ordinals $j$
$$(v_{k,j})_{k<j}\ \ {\rm is\ the \ colimit\ cocone.}$$

This chain is called the {\it free-algebra chain}. If it \emph{converges} at some ordinal $\lambda$, that is, if $v_{\lambda,\lambda+1}$ is an isomorphism, then $V_\lambda$ is a free algebra on $Y$. More detailed: this isomorphism turns $V_{\lambda}$ into a coproduct
$$V_\lambda=HV_\lambda+Y$$
and thus $V_\lambda$ is an algebra via the left-hand coproduct injection, and the right-hand one $Y\rightarrow V_\lambda$ yields the universal arrow.
\end{construction}

\begin{cor}\label{c-access}
Every accessible endofunctor of a cocomplete category has free algebras.
\end{cor}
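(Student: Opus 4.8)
The plan is to reduce everything to Construction~\ref{free chain}: by that construction it suffices to show that, for every object $Y$, the free-algebra chain $V:\mathsf{Ord}\rightarrow\mathcal A$ converges, i.e.\ that its connecting morphism $v_{\lambda,\lambda+1}:V_\lambda\rightarrow V_{\lambda+1}$ becomes invertible at a suitable ordinal $\lambda$. The whole chain exists because $\mathcal A$ is cocomplete, and the point of accessibility is precisely to locate where it stabilizes.

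First I would fix the cardinal. Since $H$ is accessible, choose an infinite cardinal with $H$ preserving $\mu$-filtered colimits and pass to $\lambda=\mu^{+}$; this keeps the preservation property (a $\lambda$-filtered diagram is a fortiori $\mu$-filtered) while making $\lambda$ a \emph{regular} cardinal. Regularity is what I need: viewed as a linearly ordered poset, a regular cardinal $\lambda$ is $\lambda$-filtered (any subset of cardinality $<\lambda$ is bounded below $\lambda$), so the restriction of the free-algebra chain to the indices $k<\lambda$ is a $\lambda$-filtered diagram, with colimit $V_\lambda=\colim_{k<\lambda}V_k$.

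Next comes the main computation establishing that $v_{\lambda,\lambda+1}$ is an isomorphism. I would compute
$$V_{\lambda+1}=HV_\lambda+Y=H\bigl(\colim_{k<\lambda}V_k\bigr)+Y=\bigl(\colim_{k<\lambda}HV_k\bigr)+Y,$$
the last equality by $\lambda$-accessibility of $H$. The crucial step is to commute the coproduct with $Y$ past the colimit: a $\lambda$-filtered diagram is connected, so the constant diagram with value $Y$ has colimit $Y$, and since colimits commute with colimits this gives $\bigl(\colim_{k<\lambda}HV_k\bigr)+Y=\colim_{k<\lambda}(HV_k+Y)=\colim_{k<\lambda}V_{k+1}$. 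Finally the successor ordinals are cofinal in the limit ordinal $\lambda$, so $\colim_{k<\lambda}V_{k+1}=\colim_{k<\lambda}V_k=V_\lambda$. Tracking the coproduct injections through these identifications shows that the composite isomorphism $V_\lambda\cong V_{\lambda+1}$ is exactly $v_{\lambda,\lambda+1}$, so the chain converges at $\lambda$. By Construction~\ref{free chain}, $V_\lambda$ is then the free $H$-algebra on $Y$; as $Y$ was arbitrary, the forgetful functor $\mathsf{Alg}\,H\rightarrow\mathcal A$ has a left adjoint.

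I expect the main obstacle to be precisely the commutation step, i.e.\ justifying $\bigl(\colim_{k<\lambda}HV_k\bigr)+Y\cong\colim_{k<\lambda}(HV_k+Y)$ through connectedness of the $\lambda$-filtered diagram, together with the bookkeeping needed to identify the resulting isomorphism with the \emph{canonical} connecting morphism $v_{\lambda,\lambda+1}$ rather than some other isomorphism. A secondary point worth stating carefully is that passing to a regular cardinal does not destroy preservation of the relevant filtered colimits; once that is granted, the rest is the standard convergence argument.
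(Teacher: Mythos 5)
Your proposal is correct and follows essentially the same route as the paper, whose entire proof is the one-line remark that the free-algebra chain of Construction~\ref{free chain} converges at $\lambda$ for a $\lambda$-accessible functor. You simply expand that remark into the standard convergence argument (pass to a regular cardinal, use preservation of the $\lambda$-filtered colimit by $H$, connectedness for the $+Y$ summand, and cofinality of successors), and the bookkeeping you flag at the end does go through: the composite isomorphism is inverse to $v_{\lambda,\lambda+1}$ by the uniqueness of morphisms out of the colimit $V_\lambda$.
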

Indeed, if the given functor is $\lambda$-accessible, the free-algebra chain converges at $\lambda$.

\begin{defi}(See \cite{at})
We say that in a given category monomorphisms are {\it constructive} provided that
\begin{enumerate}[label=\({\alph*}]
\item if $m_i:A_i\rightarrow B_i$ are monomorphisms for $i=1,2$ then $m_1+m_2:A_1+A_2\rightarrow B_1+B_2$ is a monomorphism,

\item coproduct injections are monomorphisms, and

\item if $a_i:A_i\rightarrow A$, ($i<\lambda$), is a colimit of an
  $\lambda$-chain and $m:A\rightarrow B$ has all composites $m\cdot
  a_i$ monic, then $m$ is monic.
\end{enumerate}
\end{defi}

\begin{exa}
Sets, posets, graphs, abelian groups have constructive monomorphisms. If $\mathcal A$ has constructive monomorphisms, then all functor categories $\mathcal A^{\mathcal C}$ do. In all locally finitely presentable categories condition (c) holds (see \cite{ar}), but (a) and (b) can fail.
\end{exa}

\begin{prop}\label{free B then free alg}
Let $\mathcal A$ be a cocomplete, wellpowered category with constructive monomorphisms. If $H$ has a free Bloom algebra on $Y$ and preserves monomorphisms, then it also has a free algebra on $Y$.
\end{prop}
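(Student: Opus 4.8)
The plan is to prove that the free-algebra chain of Construction~\ref{free chain} converges, using the free Bloom algebra as an external bound and invoking wellpoweredness. Write $(M,m,\dagger)$ for the free Bloom algebra on $Y$ with universal arrow $\eta\colon Y\to M$. The heart of the argument is the claim that $M$ is a \emph{fixpoint} of $H(-)+Y$, i.e.\ that the canonical morphism $[m,\eta]\colon HM+Y\to M$ is an isomorphism. Granting this, the free-algebra chain will map into $M$ by a cocone of monomorphisms, and since $\mathcal A$ is wellpowered this cocone cannot grow through a proper class of distinct subobjects of $M$, forcing the chain to converge.

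To establish the fixpoint isomorphism I would first turn $HM+Y$ into a Bloom algebra. Equip it with the algebra structure $\overline m=\mathsf{inl}\cdot H[m,\eta]\colon H(HM+Y)\to HM+Y$ (this is exactly the algebra $\overline a$ of Lemma~\ref{generalized} with $f=\eta$); a direct check gives $[m,\eta]\cdot\overline m=m\cdot H[m,\eta]$, so $[m,\eta]\colon(HM+Y,\overline m)\to(M,m)$ is an algebra homomorphism. Next, for every coalgebra $e\colon X\to HX$ put $e^\ddagger=\mathsf{inl}\cdot He^\dagger\cdot e$; using the solution equation $e^\dagger=m\cdot He^\dagger\cdot e$ one verifies that $e^\ddagger$ solves $e$ in $(HM+Y,\overline m)$ and that $\ddagger$ is functorial, so $(HM+Y,\overline m,\ddagger)$ is a Bloom algebra and $[m,\eta]$ is solution preserving, hence a homomorphism of Bloom algebras.

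Now the universal property of $(M,m,\dagger)$ applied to the point $\mathsf{inr}\colon Y\to HM+Y$ yields a unique Bloom homomorphism $\sigma\colon M\to HM+Y$ with $\sigma\cdot\eta=\mathsf{inr}$. The composite $[m,\eta]\cdot\sigma$ is a Bloom endomorphism of $M$ fixing $\eta$, so by uniqueness it equals $\mathrm{id}_M$. Conversely, the algebra-homomorphism equation for $\sigma$ reads $\sigma\cdot m=\overline m\cdot H\sigma=\mathsf{inl}\cdot H([m,\eta]\cdot\sigma)=\mathsf{inl}$, and together with $\sigma\cdot\eta=\mathsf{inr}$ this gives $\sigma\cdot[m,\eta]=\mathrm{id}_{HM+Y}$. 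Hence $[m,\eta]$ is an isomorphism. This step is the main obstacle: the universal property presents $[m,\eta]$ only as a split epimorphism at first, and the decisive point is to notice that the homomorphism equation for the section $\sigma$ forces the other composite to be the identity as well.

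Finally I would run the convergence argument. Define the cocone $w_i\colon V_i\to M$ over the free-algebra chain by $w_0=\eta$, $w_{i+1}=[m,\eta]\cdot(Hw_i+Y)$, and at limit stages let $w_j$ be the morphism induced on $V_j=\colim_{k<j}V_k$, so that $w_j\cdot v_{i,j}=w_i$. By transfinite induction each $w_i$ is a monomorphism: the base case uses that $\eta=[m,\eta]\cdot\mathsf{inr}$ is a composite of an isomorphism with a coproduct injection (condition~(b)); the successor case uses that $H$ preserves monomorphisms together with closure under coproducts (condition~(a)) and composition with the isomorphism $[m,\eta]$; and the limit case is precisely condition~(c) applied to the colimit cocone $v_{k,j}$ and the map $w_j$, all of whose composites $w_j\cdot v_{k,j}=w_k$ are monic. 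Since $w_j\cdot v_{i,j}=w_i$ with $w_i$ monic, each $v_{i,j}$ is monic and the $V_i$ form an increasing ordinal-indexed chain of subobjects of the fixed object $M$. Wellpoweredness forbids such a chain from being strictly increasing on all of $\mathsf{Ord}$, so there is an ordinal $\lambda$ with $v_{\lambda,\lambda+1}$ an isomorphism; by Construction~\ref{free chain} the chain converges at $\lambda$ and $V_\lambda$ is the free $H$-algebra on $Y$.
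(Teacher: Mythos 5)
Your proposal is correct and follows essentially the same route as the paper's proof: first show that $[m,\eta]\colon HM+Y\to M$ is an isomorphism by turning $HM+Y$ into a Bloom algebra (exactly the structure of Lemma~\ref{generalized}) and playing the universal property of $M$ against the homomorphism equation for the section, then map the free-algebra chain into $M$ by a cocone of monomorphisms built from constructivity and invoke wellpoweredness to force convergence. The one compressed step is the passage from ``the subobjects $V_i\hookrightarrow M$ and $V_j\hookrightarrow M$ coincide for some $i<j$'' to ``some successor connecting map $v_{\lambda,\lambda+1}$ is an isomorphism''; this is true but not immediate, and the paper fills it by observing that $v_{i,j}$ iso gives $v_{i+1,j+1}=Hv_{i,j}+Y$ iso, whence $v_{j,j+1}$ is both a monomorphism (it is cancelled by the monic $m_{j+1}$) and a split epimorphism, hence an isomorphism.
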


\begin{proof}
Let $(A,a,\dagger)$ be a free Bloom algebra and $\eta:Y\rightarrow A$ be the universal arrow.\medskip

\noindent (1) We prove that $A=HA+Y$ with coproduct injection $a$ and $\eta$. The algebra
$$\xymatrix{b\equiv H(HA+Y)\ar[r]^-{H[a,\eta]}& HA\ar[r]^-{\mathsf{inl}}&HA+Y}$$
is a Bloom algebra when we put, as in Lemma \ref{generalized},
$$e^\ddag=\mathsf{inl}\cdot He^\dag\cdot e, \ \ \ {\rm for\ all\ } e:X\rightarrow HX.$$
Indeed, functoriality is easy to verify. Consequently, there exists a unique solution preserving homomorphism
$$h:A\rightarrow HA+Y\ \ \ {\rm with}\ \ \ h\cdot \eta=\mathsf{inr}.$$

\noindent We prove that $h$ is inverse to $[a,\eta]$. For that, observe that $[a,\eta]$ is also a homomorphism
$$
\xymatrix{
H(HA+Y)\ar[d]_{H[a,\eta]}\ar[r]^-{H[a,\eta]}&HA\ar[r]^-{\mathsf{inl}}\ar[rd]^a&HA+Y\ar[d]^{[a,\eta]}\\
HA\ar[rr]_a&&A
}
$$
and also preserves solutions:
$$[a,\eta]\cdot e^\ddag=a\cdot He^\dag\cdot e=e^\dag.$$
The composite with $h$ yields an endomorphism of $(A,a,\dagger)$ such that
$$ ([a,\eta]\cdot h)\cdot \eta=[a,\eta]\cdot \mathsf{inr}=\eta.$$
The universal property thus implies
$$[a,\eta]\cdot h=id_A.$$
Since $h$ is a homomorphism, 
$$h\cdot a=b\cdot Hh=\mathsf{inl}\cdot H([a,\eta]\cdot h)=\mathsf{inl}.$$
We conclude that $h$ is inverse to $[a,\eta]$:
$$h\cdot [a,\eta]=[h\cdot a,h\cdot\eta]=[\mathsf{inl},\mathsf{inr}]=id.$$
Therefore, $A=HA+Y$ with coproduct injections $a$ and $\eta$.\medskip

\noindent(2) We define a cone $m_i:V_i\rightarrow A$ of the free-algebra chain by
$$m_0\equiv \eta:Y\rightarrow A$$
and
$$\xymatrix{ m_{i+1}\equiv  HV_i+Y\ar[rr]^-{Hm_i+id}&& HA+Y\ar[r]^-{[a,\eta]}& A}.$$
More precisely: there is a unique cone with the above properties. The
verification of $m_{i+1}\cdot v_{i,i+1}=m_i$ is an easy transfinite
induction on $i$, and the limit steps then follow automatically from
$V_i=\colim_{k<i}V_k$. Moreover, since monomorphisms are
constructive, all $m_i$'s are monomorphisms: $m_0$ is a coproduct
injection of $A=HA+Y$, if $m_i$ is a monomorphism, then so is
$Hm_i+id$, hence, so is $m_{i+1}$, and the limit steps are
clear. Since $A$ has only a set of subobjects, 
there exist $j>i$ such
that $m_i$ and $m_j$ represent the same subobject, i.\,e., $v_{i,j}$
is an isomorphism. Then also $v_{i+1,j+1}$ is an isomorphism with an
inverse $v'$, say. It follows that $v_{j,j+1}$ is an isomorphism, too:
it is monomorphic since $m_{j+1}\cdot v_{j,j+1}=m_j$, and it is a
split epimorphism since $v_{j,j+1}\cdot v_{i+1,j} \cdot v' =
v_{i+1,j+1} \cdot v' =id$. Therefore, $V_j$ is a free algebra on $Y$.
\end{proof}

\begin{cor}\label{cor:3.16}
If a set functor has a free Bloom algebra on $Y$, it has both a terminal coalgebra $T$ and a free algebra $FY$ on $Y$.
\end{cor}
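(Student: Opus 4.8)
The plan is to specialize the two preceding results to $\mathcal A={\bf Set}$. First I would record that ${\bf Set}$ is complete, cocomplete and wellpowered, and has constructive monomorphisms (by the Example preceding this Corollary), so that the structural hypotheses of both Lemma \ref{free B-> final coalg} and Proposition \ref{free B then free alg} hold automatically. Only two side conditions remain to be checked, and I would treat the two conclusions separately: the terminal coalgebra via Lemma \ref{free B-> final coalg}, which additionally asks that ${\bf Set}(Y,HY)\neq\emptyset$, and the free algebra via Proposition \ref{free B then free alg}, which additionally asks that $H$ preserve monomorphisms.

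For the terminal coalgebra, the hypothesis ${\bf Set}(Y,HY)\neq\emptyset$ can fail only when $Y\neq\emptyset$ and $HY=\emptyset$. I would dispose of this degenerate case directly: for any nonempty $Z$ a constant map $Z\to Y$ yields $HZ\to HY=\emptyset$, forcing $HZ=\emptyset$, and the map $\emptyset\to Y$ likewise forces $H\emptyset=\emptyset$; hence $H$ is constantly $\emptyset$. But then the only coalgebra is the empty one, which is trivially terminal, so $T=\emptyset$ exists. In every other case ${\bf Set}(Y,HY)\neq\emptyset$ and Lemma \ref{free B-> final coalg} applies, producing $T$.

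For the free algebra I would apply Proposition \ref{free B then free alg}, the remaining point being mono-preservation. Here I would invoke the classical fact (\cite{at}) that every set functor preserves monomorphisms with nonempty domain. When $Y\neq\emptyset$ this suffices: each object $V_i$ of the free-algebra chain receives a map from the nonempty $V_0=Y$ and is therefore nonempty, so every $m_i:V_i\to A$ occurring in the proof of Proposition \ref{free B then free alg} has nonempty domain and $Hm_i$ is monic. The transfinite induction of that proof then runs unchanged and yields $FY$.

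The step I expect to be the main obstacle is the degenerate case $Y=\emptyset$, in which $FY$ is the initial algebra and the free-algebra chain opens with the empty monomorphism $\emptyset\to H\emptyset$, which a set functor need not preserve, so the argument above breaks down at its very first step. One natural route is to exploit that $T$ already exists: by the theory of \cite{at} a set functor with a terminal coalgebra is bounded, hence accessible (after, if necessary, correcting its value on $\emptyset$), and an accessible functor on a cocomplete category has all free algebras by Corollary \ref{c-access}, in particular the initial one. Making this reduction precise — in particular controlling the modification of $H$ at $\emptyset$ so that it does not disturb the free algebra we are after — is the delicate part; combining it with the case $Y\neq\emptyset$ gives $FY$ for every $Y$ and finishes the proof.
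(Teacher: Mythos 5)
Your treatment of the terminal coalgebra, and of the free algebra for $Y\neq\emptyset$, is correct and close to the paper's own argument: the degenerate case $Y\neq\emptyset$, $HY=\emptyset$ is disposed of exactly as in the paper (one deduces that $H$ is constantly $\emptyset$, so $T=\emptyset$), and your observation that only monomorphisms with \emph{nonempty} domain enter the transfinite induction in the proof of Proposition~\ref{free B then free alg} (because every $V_i$ receives a map from $V_0=Y$) is a legitimate, and arguably more direct, substitute for the paper's device of redefining $H$ at $\emptyset$ to obtain a mono-preserving functor $H'$ with $H'X=HX$ for $X\neq\emptyset$.

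The genuine gap is your proposed route for $Y=\emptyset$. The implication ``a set functor with a terminal coalgebra is bounded, hence accessible'' is false, and the paper itself contains a counterexample: the functor $HX=\{M\subseteq X\mid \card M\in\mathbb B \text{ or } M=\emptyset\}$ exhibited after Theorem~\ref{the bloom alg=eilenberg moor off M_H} satisfies $H1=1$, hence has a terminal coalgebra, yet is not $\lambda$-accessible for any $\lambda$. So Corollary~\ref{c-access} cannot be reached along that path, and no amount of ``controlling the modification at $\emptyset$'' will repair it. The argument the paper intends (compressed into ``the case $Y=\emptyset$ is obtained from Proposition~\ref{3x}'') is different: the free Bloom algebra on $\emptyset$ is the initial Bloom algebra, hence by Proposition~\ref{3x} it is the terminal coalgebra $T$ with $HT\cong T$. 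This exhibits a pre-fixpoint of $H$, and for set functors the existence of a pre-fixpoint already forces the free-algebra chain over $\emptyset$ to converge --- one embeds the chain into $T$ (using again that set functors preserve monomorphisms with nonempty domain) and invokes wellpoweredness, exactly as in step~(2) of the proof of Proposition~\ref{free B then free alg}; this is the circle of results from \cite{takr} that the paper cites again in the proof of Theorem~\ref{equivalence 1}. Replacing your accessibility argument by this one closes the gap; everything else in your proposal stands.
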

\begin{proof}
  For the existence of $T$ use Lemma~\ref{free B-> final coalg}: for $\mathcal A = \mathbf{Set}$ we have $\mathcal A(Y, HY) \neq \emptyset$ whenever $HY \neq \emptyset$ or $Y = \emptyset$, and in the remaining case where $Y \neq\emptyset$ and $HY = \emptyset$, $T$ trivially exists since one deduces that $H$ is the constant functor on $\emptyset$; indeed, for an any set $X$ pick some $f: X \to Y$, then $Hf: HX \to HY = \emptyset$ implies $HX = \emptyset$.\smnote{I added the reasoning since this this will help readers and only costs us 3 lines.}

For the existence of $FY$ apply Proposition \ref{free B then free alg} in the case where $H$ preserves monomorphisms. If it does not, redefine it in $\emptyset$ and obtain a set functor $H'$ with $H'X=HX$, for all $X\neq \emptyset$ and $H'\emptyset=\emptyset$. Then $H'$ preserves monomorphisms and, whenever $Y\neq\emptyset$, it has a free Bloom algebra on $Y$ if and only if $H$ does. The case $Y=\emptyset$ is obtained from Proposition \ref{3x}.
\end{proof}

We are going to characterize free Bloom algebras. From Lemma \ref{free B-> final coalg} and Proposition \ref{free B then free alg}
we know that the assumption that $T$ and $FY$ exist is ``natural". Since the terminal coalgebra $\tau:T\rightarrow HT$ has by Lambek's Lemma an invertible structure map $\tau $, we can view it as an algebra. In the next proposition we assume that the coproduct of $T$ and $FY$ exists in $\mathsf{Alg}\, H$. In Section~\ref{sec:4} we will see that this actually follows from the existence of a free Bloom algebra.

\begin{nota}
Coproducts in $\mathsf{Alg}\, H$ are denoted by $(A,a)\oplus(B,b)$.
\end{nota}

\begin{thm}\label{thm:3.16}
Suppose that $H$ has a terminal coalgebra $T$, a free algebra $FY$ on $Y$ and their coproduct $T\oplus FY$. Then the last algebra is the free Bloom algebra on $Y$.
\end{thm}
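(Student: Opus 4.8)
The plan is to reduce the whole statement to Proposition~\ref{Bloom cat=slice cat}, which identifies $\mathsf{Alg}_B\, H$ with the slice category $\mathcal S = (T,\tau^{-1})/\mathsf{Alg}\, H$. Under this isomorphism the forgetful functor $U\colon \mathsf{Alg}_B\, H \to \mathcal A$ corresponds to the functor sending an object $h\colon (T,\tau^{-1}) \to (A,a)$ of $\mathcal S$ to its underlying object $A$. Hence it suffices to show that $T \oplus FY$, regarded as an object of $\mathcal S$ via the left-hand coproduct injection $\mathsf{inl}\colon T \to T \oplus FY$, is the free object of $\mathcal S$ over $Y$. Transporting back along Proposition~\ref{Bloom cat=slice cat}(b), this object carries the Bloom-algebra structure $e^\dagger = \mathsf{inl}\cdot e^\sharp$, where $e^\sharp\colon X \to T$ denotes the unique coalgebra homomorphism into the terminal coalgebra.

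First I would fix the universal arrow. Writing $\eta_Y\colon Y \to FY$ for the universal arrow of the free algebra and $\mathsf{inr}\colon FY \to T \oplus FY$ for the right-hand coproduct injection, I set $\eta = \mathsf{inr}\cdot\eta_Y\colon Y \to T \oplus FY$. The claim to verify is the following universal property: for every Bloom algebra $(B,b,\ddagger)$ and every morphism $f\colon Y \to B$ in $\mathcal A$ there is a unique Bloom-algebra homomorphism $g\colon T \oplus FY \to B$ with $g\cdot\eta = f$.

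This verification is a routine chaining of three universal properties. By Proposition~\ref{Bloom cat=slice cat}(a) the target corresponds to the slice object $\tau^\ddagger\colon T \to B$, and a Bloom-algebra homomorphism $g$ out of our algebra is exactly an $H$-algebra homomorphism with $g\cdot\mathsf{inl} = \tau^\ddagger$. By the universal property of the coproduct $T \oplus FY$ in $\mathsf{Alg}\, H$, such $g$ are in bijection with $H$-algebra homomorphisms $q\colon FY \to B$, the component along $\mathsf{inl}$ being pinned to $\tau^\ddagger$. By the universal property of the free algebra, these $q$ are in turn in bijection with morphisms $Y \to B$ in $\mathcal A$ via $q \mapsto q\cdot\eta_Y$. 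Since $g\cdot\eta = g\cdot\mathsf{inr}\cdot\eta_Y = q\cdot\eta_Y$, the composite bijection is precisely $g \mapsto g\cdot\eta$, which is the asserted universal property.

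I expect no essential obstacle here: once Proposition~\ref{Bloom cat=slice cat} is in hand, the theorem is the instance of the general fact that free objects in a slice category under a fixed object $D_0$ are computed as the coproducts $D_0 \oplus F(-)$. The only points needing care are bookkeeping: that the Bloom-algebra structure transported back from $(T \oplus FY,\mathsf{inl})$ really is $e^\dagger = \mathsf{inl}\cdot e^\sharp$, and that solution preservation of $g$ translates into the slice condition $g\cdot\mathsf{inl} = \tau^\ddagger$ and not something weaker. Both follow directly from the explicit mutually inverse functors built in the proof of Proposition~\ref{Bloom cat=slice cat}.
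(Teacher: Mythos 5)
Your proof is correct. It ultimately establishes the same chain of bijections as the paper's proof --- the universal property of the coproduct in $\mathsf{Alg}\,H$ composed with that of the free algebra $FY$ --- but you discharge the solution-preservation bookkeeping by a different key lemma. The paper works directly with the Bloom structure induced by $\mathsf{inl}$ via Lemma~\ref{A is B then B is B}: it builds $[f,\overline g]$, where $f\colon T\to B$ is the unique solution-preserving homomorphism supplied by Proposition~\ref{3x}, checks by hand that $[f,\overline g]$ preserves solutions (using that solutions in $T\oplus FY$ have the form $\mathsf{inl}\cdot e^\sharp$), and for uniqueness observes that $h\cdot\mathsf{inl}$ is solution-preserving and hence equals $f$. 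You instead invoke Proposition~\ref{Bloom cat=slice cat} once and for all to convert ``solution-preserving homomorphism out of $(T\oplus FY,\,\mathsf{inl}\cdot(-)^\sharp)$'' into ``$H$-algebra homomorphism $g$ with $g\cdot\mathsf{inl}=\tau^\ddagger$'', after which the theorem is the standard fact that free objects of the slice $(T,\tau^{-1})/\mathsf{Alg}\,H$ over $\mathcal A$ are computed as $T\oplus F(-)$. The two arguments are reorganizations of one another: your $\tau^\ddagger$ is exactly the paper's $f$, and Proposition~\ref{3x} is just the statement that $\mathrm{id}_T$ is initial in the slice, transported along the same isomorphism. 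What your route buys is that no solution-preservation computation is ever performed explicitly; what the paper's route buys is an explicit description of the universal map and independence from Proposition~\ref{Bloom cat=slice cat}. The two points you flag as needing care --- that the transported Bloom structure is $e^\dagger=\mathsf{inl}\cdot e^\sharp$ (agreeing with the structure induced by $\mathsf{inl}$ per Lemma~\ref{A is B then B is B}), and that solution preservation translates precisely into the slice condition $g\cdot\mathsf{inl}=\tau^\ddagger$ --- are indeed the only delicate spots, and both follow from the explicit mutually inverse functors constructed in the proof of Proposition~\ref{Bloom cat=slice cat}.
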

\begin{rem}\smnote{remark added to address referee's point}
  More precisely, let $\eta: Y \to FY$ be the universal arrow of the free algebra $FY$. Take the unique Bloom algebra structure $\dagger$ on $T \oplus FY$ induced by $\mathsf{inl}:T\rightarrow T\oplus FY$ following Lemma~\ref{A is B then B is B}.  Then this forms the free Bloom algebra $(T \oplus FY, a, \dagger)$ on $Y$ with the universal arrow $\mathsf{inr}\cdot\eta$.  The latter means that for every Bloom algebra $(B,b,\ddagger)$ and every morphism $g: Y \to B$ in $\mathcal A$ there exists a unique homomorphism of Bloom algebras $h: T\oplus FY \to B$ such that $h \cdot \mathsf{inr}\cdot\eta = g$. 
\end{rem}
\begin{proof}
Given a Bloom algebra $(B,b,\ddag)$ and morphism $g:Y\rightarrow B$, we obtain a unique homomorphism $\overline g:(FY,\varphi_Y)\rightarrow (B,b)$ with $g=\overline g\cdot\eta$. We also have a unique solution-preserving homomorphism $f:(T,\tau^{-1},\dagger)\rightarrow (B,b,\ddag)$, see Proposition \ref{3x}. This yields a homomorphism $[f,\overline g]:T\oplus FY\rightarrow B$ which is clearly solution-preserving: recall from Lemma~\ref{A is B then B is B} that solutions in $T\oplus FY$ have the form $\mathsf{inl}\cdot e^\dagger$. Thus, $[f,\overline g]\cdot(\mathsf{inl}\cdot e^\dagger)=f\cdot e^\dagger=e^\ddag.$ And this is the desired morphism since $[f,g]\cdot \mathsf{inr}\cdot\eta=\overline g\cdot\eta=g$.

Conversely, given a solution-preserving homomorphism $h:T\oplus FY\rightarrow B$ with $h\cdot\mathsf{inr}\cdot\eta=g$, then $h=[f,\overline g]$, because $h\cdot\mathsf{inl}:T\rightarrow B$ is clearly solution-preserving, hence $h\cdot\mathsf{inl}=f$. Also $h\cdot\mathsf{inr}$ is a homomorphism from $FY$ with $h\cdot\mathsf{inr}\cdot\eta=g$, thus $h\cdot\mathsf{inr}=\overline g$.
\end{proof}

\begin{cor}\label{free B exists}
Every accessible endofunctor of a locally presentable category has free Bloom algebras. They have the form $T\oplus FY$.
\end{cor}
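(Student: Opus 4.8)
The plan is to reduce everything to Theorem~\ref{thm:3.16}, which already asserts that $T\oplus FY$ is the free Bloom algebra on $Y$ once its three hypotheses are met: the existence of a terminal coalgebra $T$, of a free algebra $FY$ on $Y$, and of their coproduct $T\oplus FY$ in $\mathsf{Alg}\, H$. So the entire task is to check that an accessible endofunctor $H$ on a locally presentable category $\mathcal A$ provides all three ingredients. Two of them are essentially immediate. The free algebra $FY$ exists by Corollary~\ref{c-access}, since a locally presentable category is cocomplete and hence the free-algebra chain converges. The coproduct $T\oplus FY$ exists because, for accessible $H$ on locally presentable $\mathcal A$, the category $\mathsf{Alg}\, H$ is itself locally presentable (as recalled before Proposition~\ref{dirlim of cor is cor}), hence cocomplete, so in particular all binary coproducts—and specifically $T\oplus FY$—exist there.

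The step I expect to be the main obstacle is the existence of the terminal coalgebra $T$, since accessibility concerns preservation of filtered \emph{colimits} whereas a terminal coalgebra is a \emph{limit}-flavoured object, so the naive terminal-coalgebra chain argument is not directly available. The cleanest route is to observe that $\mathsf{Coalg}\, H$ is itself locally presentable. On one hand, $\mathsf{Coalg}\, H$ is (up to isomorphism) the inserter of the two accessible functors $\mathrm{Id},H\colon\mathcal A\to\mathcal A$, its objects being pairs $(X,e\colon X\to HX)$; and inserters of accessible functors between accessible categories are again accessible, so $\mathsf{Coalg}\, H$ is accessible. On the other hand, $\mathsf{Coalg}\, H$ is cocomplete: dually to the fact that limits in $\mathsf{Alg}\, H$ are formed on the level of $\mathcal A$, colimits in $\mathsf{Coalg}\, H$ are created by the forgetful functor to $\mathcal A$, hence exist because $\mathcal A$ is cocomplete. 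An accessible cocomplete category is locally presentable, hence complete, and in particular has a terminal object; this terminal object is exactly the terminal coalgebra $T$.

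With $T$ in hand, Lambek's Lemma shows that its structure map $\tau\colon T\to HT$ is invertible, so $T$ carries the algebra structure $\tau^{-1}\colon HT\to T$ and may be regarded as an object of $\mathsf{Alg}\, H$, as required by Theorem~\ref{thm:3.16}. Having produced $T$, $FY$, and the coproduct $T\oplus FY$, I would simply invoke Theorem~\ref{thm:3.16} to conclude that $T\oplus FY$ is the free Bloom algebra on $Y$, which is precisely the asserted form. The only genuinely delicate point to get right is the local presentability of $\mathsf{Coalg}\, H$; if one prefers to avoid the inserter machinery, the same conclusion can be drawn from the result of \cite{ap} (used already in the proof of Proposition~\ref{dirlim of cor is cor}) that every coalgebra is a $\lambda$-filtered colimit of $\lambda$-presentable coalgebras, which exhibits $\mathsf{Coalg}\, H$ as locally $\lambda$-presentable and hence as possessing a terminal object.
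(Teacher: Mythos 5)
Your proof is correct and follows essentially the same route as the paper: reduce to Theorem~\ref{thm:3.16} by producing the three ingredients, obtaining $FY$ from Corollary~\ref{c-access}, the coproduct from cocompleteness of the locally presentable category $\mathsf{Alg}\,H$, and $T$ as the terminal object of the locally presentable category $\mathsf{Coalg}\,H$ (the paper simply cites \cite{ap} for the latter fact, which you also offer as an alternative to your inserter argument). No substantive difference.
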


Indeed, recall that the assumptions mean that there exists an infinite ordinal $\lambda$ such that
\begin{enumerate}[label=\({\alph*}]
\item  $\mathcal A$ is cocomplete and has a set $\mathcal A_\lambda$
  of $\lambda$-presentable objects (that is, objects whose
  hom-functors preserve $\lambda$-filtered colimits) such that every
  object is a $\lambda$-filtered colimit of objects in $\mathcal
  A_\lambda$, and

\item $H$ preserves $\lambda$-filtered colimits.
\end{enumerate}

From this it follows that $\mathsf{Coalg}\, H$ is locally
$\bar\lambda$-presentable, where $\bar\lambda = \max\{\lambda,
\aleph_1\}$ (see \cite{ap}). Thus, this category  has a terminal object, $T$.
We know from Corollary \ref{c-access} that the free algebra $FY$
exists. And $T \oplus FY$ exists since the category of algebras for an
accessible functor on a locally presentable category is itself locally
presentable and therefore comcomplete. 
%

\begin{rem}
For concrete examples of $T\oplus FY$ see Example~\ref{some exams} below.
\end{rem}

\begin{prop}\label{lim of B is B}
Let $\mathcal A$ be a complete category. Then so is $\mathsf{Alg}_B\, H$ and limits of Bloom algebras are formed on the level of $\mathcal A$.
\end{prop}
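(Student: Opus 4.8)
The plan is to construct limits of Bloom algebras directly on the underlying level of $\mathcal A$, exactly as was done for corecursive algebras in Proposition~\ref{lim of core is core}, and then to supply and verify the extra solution-operation data. Let $D:\mathcal I\to\mathsf{Alg}_B\,H$ be a diagram, $i\mapsto(A_i,a_i,\dagger_i)$, whose connecting morphisms are Bloom-algebra homomorphisms. First I would form the limit $(A,a)$ of the underlying $H$-algebras: since the forgetful functor $\mathsf{Alg}\,H\to\mathcal A$ creates limits (as noted in the proof of Proposition~\ref{lim of core is core}), we have $A=\lim A_i$ in $\mathcal A$ with projections $p_i:A\to A_i$, and $a:HA\to A$ is the unique morphism with $p_i\cdot a=a_i\cdot Hp_i$ for all $i$.

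Next I would equip $(A,a)$ with a solution operation. For a coalgebra $e:X\to HX$ the individual solutions $e^{\dagger_i}:X\to A_i$ form a cone over $D$: for each connecting morphism $u:(A_i,a_i,\dagger_i)\to(A_j,a_j,\dagger_j)$ solution-preservation gives $u\cdot e^{\dagger_i}=e^{\dagger_j}$. Hence they factor through the limit via a unique $e^\dagger:X\to A$ with $p_i\cdot e^\dagger=e^{\dagger_i}$, and I would take this as the definition of $\dagger$ (this is the $\langle e_i^\dagger\rangle$ of Example~\ref{ex 3}(e)). To see that $e^\dagger$ is a genuine solution, post-compose the desired equation $e^\dagger=a\cdot He^\dagger\cdot e$ with each $p_i$: the right-hand side becomes $a_i\cdot H(p_i\cdot e^\dagger)\cdot e=a_i\cdot He^{\dagger_i}\cdot e=e^{\dagger_i}=p_i\cdot e^\dagger$, so the two sides agree after every projection and therefore coincide by joint monicity of the projections. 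Functoriality of $\dagger$ is checked the same way: for a coalgebra homomorphism $k:(X,e)\to(Y,f)$ one has $p_i\cdot(f^\dagger\cdot k)=f^{\dagger_i}\cdot k=e^{\dagger_i}=p_i\cdot e^\dagger$ for all $i$ by functoriality of each $\dagger_i$, whence $f^\dagger\cdot k=e^\dagger$. Thus $(A,a,\dagger)$ is a Bloom algebra and, by construction, each $p_i$ is solution-preserving, i.e.\ a homomorphism in $\mathsf{Alg}_B\,H$.

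Finally I would verify the universal property in $\mathsf{Alg}_B\,H$. Given any cone $q_i:(B,b,\ddagger)\to(A_i,a_i,\dagger_i)$ of Bloom homomorphisms, the limit of the underlying $H$-algebras yields a unique $H$-algebra homomorphism $q:B\to A$ with $p_i\cdot q=q_i$. It remains to see that $q$ is solution-preserving: for every $e:X\to HX$ one computes $p_i\cdot(q\cdot e^\ddagger)=q_i\cdot e^\ddagger=e^{\dagger_i}=p_i\cdot e^\dagger$ for all $i$, using that each $q_i$ preserves solutions, so $q\cdot e^\ddagger=e^\dagger$. Uniqueness of $q$ is inherited from the limit in $\mathcal A$. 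This exhibits $(A,a,\dagger)$ as $\lim D$ with limits formed on the level of $\mathcal A$, and since $\mathcal A$ is complete, $\mathsf{Alg}_B\,H$ is complete. The only mildly delicate point is recognizing that the connecting morphisms of $D$ must be invoked through their solution-preservation in order to make the $e^{\dagger_i}$ cohere into a cone; everything else is a routine appeal to the joint monicity of the limit projections.
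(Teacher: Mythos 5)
Your proof is correct and follows exactly the route the paper intends: it forms the limit of the underlying $H$-algebras in $\mathcal A$, defines $e^\dagger=\langle e^{\dagger_i}\rangle$ (using solution-preservation of the connecting morphisms to obtain the cone), and checks the solution property, functoriality, and the universal property via joint monicity of the projections. The paper's own proof is just a one-line reference to the argument of Proposition~\ref{lim of core is core} plus the remark that functoriality of $e\mapsto\langle e_i^\dagger\rangle$ is trivial, so you have simply written out the same argument in full detail.
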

\begin{proof}
This is completely analogous to the proof of Proposition \ref{lim of core is core}. The verification that the function $e\mapsto \langle e^\dagger_i\rangle$ is functorial is trivial.
\end{proof}

\begin{cor}\label{cor:mono}
  For a complete category $\mathcal A$, the monomorphisms in $\mathsf{Alg}_B\, H$ are precisely those homomorphisms carried by monomorphisms in $\mathcal A$. 
\end{cor}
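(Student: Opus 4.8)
The plan is to characterize monomorphisms in $\mathsf{Alg}_B\, H$ by leveraging the limit structure established in Proposition~\ref{lim of B is B}, specifically the fact that kernel pairs (as limits) are computed at the level of $\mathcal A$. The standard categorical fact I would invoke is that a morphism is monic if and only if its kernel pair consists of two equal projections; equivalently, a morphism $h$ is monic iff the diagram $A \xrightarrow{h} B$ has the property that $h \cdot x = h \cdot y$ implies $x = y$.

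**First I would** prove the easy direction: if a Bloom algebra homomorphism $h\colon (A,a,\dagger) \to (B,b,\ddagger)$ is carried by a monomorphism in $\mathcal A$, then it is monic in $\mathsf{Alg}_B\, H$. This is immediate, since for any parallel pair $u, v\colon (C,c,\ast) \to (A,a,\dagger)$ of Bloom homomorphisms with $h \cdot u = h \cdot v$, the underlying morphisms satisfy $h \cdot u = h \cdot v$ in $\mathcal A$, and monicity of $h$ in $\mathcal A$ forces $u = v$ as morphisms in $\mathcal A$, hence as Bloom homomorphisms.

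**The converse is the substantive direction**, and it is where I expect the main work (and the only possible obstacle) to lie. Here I would argue that monicity in $\mathsf{Alg}_B\, H$ forces the underlying morphism to be monic in $\mathcal A$. The clean approach is via kernel pairs: by Proposition~\ref{lim of B is B}, the kernel pair of $h$ in $\mathsf{Alg}_B\, H$ is formed on the level of $\mathcal A$, so it coincides with the underlying kernel pair of $h$ in $\mathcal A$ (equipped with the induced Bloom structure). A morphism is monic precisely when its kernel pair has equal projections. Thus $h$ is monic in $\mathsf{Alg}_B\, H$ iff the two projections of its $\mathcal A$-kernel-pair agree, which is exactly the condition that $h$ be monic in $\mathcal A$.

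**The one point requiring care** — and the step I would watch most closely — is verifying that the kernel pair genuinely exists in $\mathsf{Alg}_B\, H$ and is computed as in $\mathcal A$; but this is precisely guaranteed by Proposition~\ref{lim of B is B}, since a kernel pair is a pullback and hence a limit, and limits of Bloom algebras are formed on the level of $\mathcal A$. With that in hand the result follows without further calculation. I therefore expect the proof to be short, essentially reducing the claim to the general categorical principle that a faithful functor which creates (or preserves and reflects) limits reflects and preserves monomorphisms.
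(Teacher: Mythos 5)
Your proposal is correct and uses essentially the same argument as the paper: the paper's proof also reduces the claim to the fact that a morphism is monic iff its kernel pair is trivial, combined with Proposition~\ref{lim of B is B} guaranteeing that this kernel pair (being a limit) is computed on the level of $\mathcal A$. Your write-up merely spells out the easy direction and the reduction in more detail than the paper's one-line proof.
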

\begin{proof}
  To see this use that in any category with pullbacks a morhism $m$ is a monomorphism iff its kernel pair consists of two identity morphisms. 
\end{proof}

\section{Free Corecursive Algebras}
\label{sec:4}

For accessible functors $H$ we prove that free corecursive algebras $MY$ exist and, in the case where $H$ preserves monomorphisms, they coincide with the free Bloom algebras $MY=T\oplus FY$. Moreover an iterative construction of these free algebras (closely related to the free algebra chain in \ref{free chain}) is presented.

We first prove that the category of corecursive algebras is strongly epireflective in the category of Bloom algebras. That is, the full embedding is a right adjoint, and the components of the unit of the adjunction are strong epimorphisms.

\begin{prop}\label{cor reflected of B}
For every accessible endofunctor of  a locally presentable category, corecursive algebras form a strongly epireflective subcategory of the category of Bloom algebras. In particular, every Bloom subalgebra of a corecursive algebra in $\mathsf{Alg}_B\, H$ is corecursive.
\end{prop}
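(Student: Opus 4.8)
The plan is to exhibit, for the category of Bloom algebras $\mathsf{Alg}_B\,H$, a reflection onto its full subcategory $\mathsf{Alg}_C\,H$ of corecursive algebras, and to show that the reflection unit is a strong epimorphism. The natural strategy is to invoke the adjoint functor machinery now available: by Proposition~\ref{dirlim of cor is cor} the category $\mathsf{Alg}_C\,H$ is locally presentable (hence cocomplete), and by Proposition~\ref{lim of B is B} together with the remark after Proposition~\ref{dirlim of cor is cor} the ambient category $\mathsf{Alg}_B\,H$ is also locally presentable. So first I would verify that the full embedding $\mathsf{Alg}_C\,H \hookrightarrow \mathsf{Alg}_B\,H$ preserves limits and sufficiently filtered colimits, so that it is an accessible, limit-preserving functor between locally presentable categories; the Adjoint Functor Theorem for such categories then yields the left adjoint (the reflector), giving epireflectivity in the abstract sense.

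To get the \emph{strong} epireflectivity and the concrete consequence about subalgebras, I would argue more explicitly via factorizations. The key observation is Corollary~\ref{cor:mono}: monomorphisms in $\mathsf{Alg}_B\,H$ are exactly the homomorphisms carried by monomorphisms in $\mathcal A$. Since $\mathcal A$ is locally presentable it has a (strong epi, mono)-factorization system, and this lifts to $\mathsf{Alg}_B\,H$ because limits (and in particular the data detecting monos, namely kernel pairs) are computed on the level of $\mathcal A$. The plan is then to show that $\mathsf{Alg}_C\,H$ is closed under subobjects in $\mathsf{Alg}_B\,H$: given a Bloom subalgebra $m:(B,b,\ddag)\rightarrowtail (A,a,\dagger)$ with $(A,a)$ corecursive and $m$ carried by a monomorphism, I would show $(B,b)$ is itself corecursive. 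For this, take any coalgebra $e:X\rightarrow HX$; its solution $e^\ddag$ in $B$ exists (Bloom structure) and, composed with $m$, must equal the unique solution $e^\dagger$ in $A$ by solution preservation, so $m\cdot e^\ddag = e^\dagger$. Uniqueness of a solution $s:X\rightarrow B$ then follows because $m\cdot s$ is a solution of $e$ in $A$, hence equals $e^\dagger = m\cdot e^\ddag$, and $m$ monic forces $s = e^\ddag$. This establishes the ``in particular'' clause directly.

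Closure under subobjects, together with closure under products (Proposition~\ref{lim of core is core}) and the existence of a factorization system, is exactly what makes $\mathsf{Alg}_C\,H$ a strongly epireflective subcategory: the reflection of a Bloom algebra $(A,a,\dagger)$ is obtained by (strong epi, mono)-factoring the canonical comparison of $A$ with a product of corecursive algebras through which it maps, and closure under subobjects guarantees the mono-part lands in $\mathsf{Alg}_C\,H$ while the strong-epi part is the reflection unit. Concretely I would factor the unique map into the initial-corecursive-to-product construction, or equivalently show that every Bloom algebra admits a strong quotient that is corecursive and is universal among corecursive quotients.

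The main obstacle, I expect, is not the abstract existence of the reflector but pinning down that the reflection unit is a \emph{strong} epimorphism and that the reflector can be described by the above factorization. The delicate point is checking that when one factors through corecursive algebras, the resulting quotient genuinely carries a \emph{corecursive} (not merely Bloom) structure and that solutions descend along the strong-epi unit; this requires knowing that corecursive algebras are closed under strong quotients in $\mathsf{Alg}_B\,H$, or else arguing that the mono-closed, product-closed subcategory is automatically reflective with strong-epi units via the standard orthogonality argument. Verifying this orthogonality—that every corecursive algebra is orthogonal to the reflection units—is where the real work lies, and I would handle it by combining Corollary~\ref{cor:mono} with the uniqueness-of-solutions property that characterizes objects of $\mathsf{Alg}_C\,H$.
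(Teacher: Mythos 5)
Your proposal is correct and follows essentially the same route as the paper: establish that $\mathsf{Alg}_B\,H$ is locally presentable (hence complete, well- and cowellpowered, with (strong epi, mono)-factorizations), use closure under products from Proposition~\ref{lim of core is core}, prove closure under subobjects via Corollary~\ref{cor:mono} exactly as you do, and conclude strong epireflectivity by the standard characterization (the paper simply cites \cite[Theorem~16.8]{ahs} where you sketch its proof). The initial Adjoint Functor Theorem detour is superfluous, and the ``delicate point'' you flag at the end is already resolved by your own closure-under-subobjects argument, since the mono part of the factorization lands in a product of corecursive algebras.
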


\begin{proof}
Let $\lambda$ be an infinite cardinal such that $\mathcal A$ is a
locally $\lambda$-presentable category and $H$ preserves
$\lambda$-filtered colimits. Since $\lambda $ can be chosen
arbitrarily large, we can assume that $\lambda$ is uncountable. The
category $\mathsf{Alg}_B\, H$ is locally $\lambda$-presentable. The
proof is analogous to that of Proposition \ref{dirlim of cor is cor}
(the only difference is that in the proof of the uniqueness of
$e^\dagger$ we simply observe that since $k_t$'s are supposed to be
solution-preserving, we have $e^\dagger =k_t\cdot s$, where $s$ is the
dagger of $e$ in $A_t$). Consequently, $\mathsf{Alg}_B\,H$ is a
complete, well-powered, and cowellpowered category, and it has (strong
epi-mono) factorization of morphisms, see \cite{ar}. The full
subcategory of corecursive algebras is closed under products by
Proposition \ref{lim of core is core}. Thus, the proof will be
completed when we prove that the subcategory of  corecursive algebras
is closed in $\mathsf{Alg}_B\,H$ under subalgebras, then it is
strongly epireflective (see \cite[Theorem~16.8]{ahs}).

Let $m:(A,a,\dagger)\rightarrow (B,b,\ddag)$ be a monomorphism in
$\mathsf{Alg}_B\,H$ with $(B,b,\ddag)$ corecursive. From Corollary~\ref{cor:mono} we have that $m$ is a monomorphism in $\mathcal A$. It is our task to prove that for every coalgebra $e:X\rightarrow HX$ the morphism $e^\dagger$ is the unique solution in $A$. This follows from the fact that $m\cdot e^\dagger=e^\ddag$. Since $e^\ddag$ is unique in $B$ and $m$ is a monomorphism in $\mathcal A$, the proof is concluded.
\end{proof}

\begin{cor}\smnote{added cor. to satisfy referee}
Every accessible endofunctor of a locally presentable category has free corecursive algebras.
\end{cor}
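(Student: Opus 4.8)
The plan is to compose two adjunctions that have already been established in the preceding results. First I would observe that the forgetful functor $U_C\colon \mathsf{Alg}_C\, H \to \mathcal A$ factors through the category of Bloom algebras: it is the composite of the full embedding $J\colon \mathsf{Alg}_C\, H \hookrightarrow \mathsf{Alg}_B\, H$ followed by the forgetful functor $U_B\colon \mathsf{Alg}_B\, H \to \mathcal A$. The embedding $J$ is well-defined because every corecursive algebra is a Bloom algebra by Example~\ref{ex 3}(a), its solution operation being the assignment of the unique coalgebra-to-algebra homomorphism. Crucially, this factorization lives over $\mathcal A$, since $J$ leaves both the underlying object and the $H$-algebra structure unchanged, so $U_B\cdot J = U_C$.

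Next I would invoke the two left adjoints already at hand. By Corollary~\ref{free B exists}, the functor $U_B$ has a left adjoint, namely the free-Bloom-algebra functor $Y \mapsto T \oplus FY$. By Proposition~\ref{cor reflected of B}, corecursive algebras form a strongly epireflective subcategory of Bloom algebras, so the embedding $J$ has a left adjoint $R$, the associated reflector.

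Finally I would simply compose: the functor $R\cdot(T \oplus F(-))$ is a left adjoint to $U_B\cdot J = U_C$, because a composite of left adjoints is left adjoint to the composite of the corresponding right adjoints. This exhibits the required left adjoint to the forgetful functor from corecursive algebras, which is exactly the statement that free corecursive algebras exist.

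I expect no genuine obstacle here beyond assembling the pieces; all the substantive work was carried out in establishing the free Bloom algebras (Corollary~\ref{free B exists}) and the reflectivity of corecursive algebras inside Bloom algebras (Proposition~\ref{cor reflected of B}). The only point that deserves a moment's verification is the compatibility of the embedding $J$ with the two forgetful functors, and this is immediate from the definitions, since passing from a corecursive algebra to the corresponding Bloom algebra alters neither the carrier nor the structure map.
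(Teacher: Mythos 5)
Your proposal is correct and is essentially the paper's own argument: the paper likewise notes that the embedding $\mathsf{Alg}_C\,H\hookrightarrow\mathsf{Alg}_B\,H$ and the forgetful functor $\mathsf{Alg}_B\,H\to\mathcal A$ both have left adjoints (by Proposition~\ref{cor reflected of B} and Corollary~\ref{free B exists}, respectively) and concludes that their composite has a left adjoint. Your additional verification that the embedding commutes with the forgetful functors is a sensible explicit check, but the substance is identical.
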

Indeed, since the functors $\mathsf{Alg}_C\,H\hookrightarrow \mathsf{Alg}_B\,H$ and $\mathsf{Alg}_B\,H\rightarrow \mathcal A$ have left adjoints by Corollary \ref{free B exists} and Proposition \ref{cor reflected of B}, their composite has a left adjoint.

\begin{rem}
We believe  that in the generality of the above corollary, the free corecursive algebras are $T\oplus FY$ (as in Corollary \ref{free B exists}). But we can only prove this in the case where $H$ preserves monomorphisms and monomorphisms are constructive. We are going to apply the following transfinite construction of free corecursive algebras closely related to the free algebra construction of \ref{free chain}
\end{rem}

\begin{construction}\label{free core.chain}Free-Corecursive-Algebra Chain.

Let $\mathcal A$ be cocomplete and $H$ have a terminal coalgebra $(T,\tau)$. We define an essentially unique chain $U:\mathsf{Ord} \rightarrow \mathcal A$ by
\begin{align}
U_0&=T\notag\\
U_{i+1}&=HU_i+Y\notag
\end{align}
and for limit ordinals $j$
\begin{align}
U_j&=\colim_{k<j}U_k.\notag
\end{align}
The connecting morphisms $u_{i,j}:U_i\rightarrow U_j$ are defined by
$$u_{0,1}\equiv T\stackrel{\tau}\rightarrow HT\stackrel{\mathsf{inl}}\rightarrow HT+Y$$
$$u_{i+1,j+1}=Hu_{i,j}+id_Y$$
and for limit ordinals $j$
$$(u_{k,j})_{k<j} \text{\ is the colimit cocone}.$$
We say that the chain {\it converges at $\lambda$} if the connecting morphism $u_{\lambda,\lambda+1}$ is an isomorphism, thus $U_\lambda=HU_\lambda+Y$. Then $U_\lambda$ is an algebra (via $\mathsf{inl}$) connected to $Y$ via $\mathsf{inr}$.
\end{construction}

\begin{prop}\label{cor chain converges U=T+FY}
Let $\mathcal A$ be a cocomplete and wellpowered category with constructive mono\-morphisms, and let $H$ preserve monomorphisms and have a terminal coalgebra $T$. If the corecursive chain for $Y$ converges in $\lambda$ steps, then $U_\lambda=T\oplus FY$.
\end{prop}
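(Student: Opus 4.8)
The plan is to show that $U_\lambda$, made into an $H$-algebra by the structure $\hat u = u_{\lambda,\lambda+1}^{-1}\cdot\mathsf{inl}:HU_\lambda\to U_\lambda$ arising from convergence, satisfies the universal property of the coproduct $(T,\tau^{-1})\oplus FY$ in $\mathsf{Alg}\,H$, with injections $\alpha=u_{0,\lambda}:T\to U_\lambda$ and a second injection $\beta:FY\to U_\lambda$ to be constructed. That $\alpha$ is an algebra homomorphism is a short diagram chase: using $u_{0,1}=\mathsf{inl}\cdot\tau$, the successor rule $u_{1,\lambda+1}=Hu_{0,\lambda}+id_Y$, and the invertibility of $\tau$ (Lambek's Lemma), one obtains $u_{0,\lambda}\cdot\tau^{-1}=\hat u\cdot Hu_{0,\lambda}$.

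The first genuine task is to produce $FY$ and $\beta$ out of the mere convergence of the corecursive chain, and here I would reuse the technique of Proposition~\ref{free B then free alg}. I build a cone $n_i:V_i\to U_\lambda$ over the free-algebra chain of Construction~\ref{free chain} by setting $n_0=u_{\lambda,\lambda+1}^{-1}\cdot\mathsf{inr}$ (the universal arrow $\eta^U:Y\to U_\lambda$) and $n_{i+1}=u_{\lambda,\lambda+1}^{-1}\cdot(Hn_i+id_Y)=[\hat u\cdot Hn_i,\eta^U]$, with limit stages induced by colimits. Since $u_{\lambda,\lambda+1}$ is invertible, $H$ preserves monomorphisms, and monomorphisms are constructive, every $n_i$ is monic: coproduct injections give the base case, clause (a) of constructiveness carries monos through $Hn_i+id_Y$ at successors, and clause (c) handles limit stages. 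As $\mathcal A$ is wellpowered, $U_\lambda$ has only a set of subobjects, so some connecting map $v_{i,j}$ of the free-algebra chain is invertible; the split-epi/mono argument of Proposition~\ref{free B then free alg} then forces $v_{j,j+1}$ to be an isomorphism, whence the free-algebra chain converges, $FY=V_j$, and $\beta:=n_j$ is an algebra homomorphism $FY\to U_\lambda$ with $\beta\cdot\eta=\eta^U$.

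It then remains to verify the coproduct universal property. Given an algebra $(B,b)$ with homomorphisms $f:(T,\tau^{-1})\to(B,b)$ and $g:(FY,\varphi_Y)\to(B,b)$, I define a cocone $h_i:U_i\to B$ by $h_0=f$, $h_{i+1}=[b\cdot Hh_i,\,g\cdot\eta]$, and colimits at limit stages; a transfinite induction confirms that this is a cone, the base case $h_1\cdot u_{0,1}=h_0$ using that $f$ is a homomorphism and $\tau$ is invertible. Putting $h=h_\lambda$ and reading off $h_\lambda\cdot u_{\lambda,\lambda+1}^{-1}=h_{\lambda+1}=[b\cdot Hh_\lambda,\,g\cdot\eta]$ shows in one stroke that $h$ is a homomorphism, that $h\cdot\alpha=f$, and that $h\cdot\beta=g$. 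For uniqueness, any homomorphism $h'$ with $h'\cdot\alpha=f$ and $h'\cdot\beta=g$ satisfies $h'\cdot u_{\lambda,\lambda+1}^{-1}=[b\cdot Hh',\,g\cdot\eta]$, and a transfinite induction yields $h'\cdot u_{i,\lambda}=h_i$ for all $i\le\lambda$, hence $h'=h$.

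The main obstacle is the second paragraph: manufacturing the free algebra $FY$ and the injection $\beta$ from convergence of the corecursive chain alone. Everything there depends on keeping the comparison maps $n_i$ monic throughout the transfinite construction, which is precisely where the hypotheses that $H$ preserves monomorphisms, that monomorphisms are constructive, and that $\mathcal A$ is wellpowered become indispensable. By contrast, the universal-property computation of the last paragraph is routine transfinite bookkeeping that uses only convergence of the $U$-chain and Lambek's Lemma.
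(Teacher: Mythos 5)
Your proof is correct and follows essentially the same route as the paper's: a monic cone from the free-algebra chain into the converged corecursive chain, wellpoweredness to force convergence of the former and thereby manufacture $FY$ inside $U_\lambda$, and then the cocone $h_0=f$, $h_{i+1}=[b\cdot Hh_i,\,g\cdot\eta]$ for the coproduct universal property with uniqueness by transfinite induction. The only (immaterial) difference is that you map each $V_i$ directly into $U_\lambda$ using the algebra structure obtained from convergence, whereas the paper first builds a natural transformation $m_i:V_i\to U_{1+i}$ between the two chains and only then composes with the isomorphisms $u_{\lambda,1+i}^{-1}$.
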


\begin{rem}
(a) More detailed, we prove that a free algebra $FY$ exists and the algebra $\mathsf{inl}:HU_\lambda\rightarrow U_\lambda$ (obtained from $U_\lambda=HU_\lambda+Y$) is a coproduct of $T$ and $FY$ in $\mathsf{Alg}\, H$.

(b) The proposition is valid for every fixpoint of $H$, not necessarily a terminal coalgebra: given any isomorphism $\tau:T\rightarrow HT$ and forming the corresponding chain with $U_0=T$ and $U_{i+1}=HU_i+Y$, then  whenever it converges, it yields a coproduct of $(T,\tau^{-1})$ and the free algebra on $Y$ in $\mathsf{Alg}\, H$.
\end{rem}
\begin{proof}
(1) A free algebra $FY$ exists. To prove this, we define a natural transformation $(m_i:V_i\rightarrow U_{1+i})_{i\in \mathsf{Ord}}$ from the free-algebra chain to the corecursive chain (delayed on finite ordinals by one step, recall that $1+i=i$ for all infinite ordinals). Put $$m_0=\mathsf{inr}:Y\rightarrow HT+Y$$
and $$m_{i+1}=Hm_i+id_Y:HV_i+Y\rightarrow HU_{1+i}+Y.$$
The first naturality square
$$
\xymatrix{
Y\ar[rr]^-{\mathsf{inr}}\ar[d]_-{\mathsf{inr}}&&HY+Y\ar[d]^-{H\mathsf{inr}+id}\\
HT+Y\ar[rr]_-{H(\mathsf{inl}\cdot\tau)+Y}&&H(HY+Y)+Y
}
$$
 clearly commutes, and the $i$-th square implies the next one easily. Therefore, the limit ordinals $i$ define $m_i$ automatically. Since $H$ preserves monomorphisms, we see by transfinite induction that all $m_i$'s are monic. Consequently, we obtain a transfinite chain of subobjects of $U_\lambda$
 $$\xymatrix{V_i\ar[r]^-{m_i}& U_{1+i}\ar[r]^-{u_{\lambda, 1+i}^{-1}}& U_\lambda,\ \ {\rm for} \ i\geq\lambda}.$$
Since $U_{\lambda}$ has only a set of subobjects, there exist ordinals $\sigma$ and $\rho$ with $\rho > \sigma\geq \lambda$ such that above monomorphisms with $i=\rho$ and $i=\sigma$ represent the same subobject. Similar reasoning as in point~(2) \smnote{clarified reasoning as required by referee}%
of the proof of Proposition~\ref{free B then free alg} then shows that $v_{\rho,\rho+1}$ is an isomorphism using the commutative triangle
$$
\xymatrix{
V_\rho
\ar[rrrr]^{v_{\rho,\rho+1}}
\ar[dr]_{m_\rho}
&&&&
V_{\rho+1}=HV_\rho+Y
\ar[dl]^{m_{\rho+1}}
\\
&U_{1+\rho}\ar[dr]_{u_{\lambda,1+\rho}^{-1}}&&U_{1+\rho+1}\ar[dl]^{u^{-1}_{\lambda,1+\rho+1}}\\
&&U_\lambda
}
$$
We thus proved that the algebra $(V_\rho,\mathsf{inl})$ is free on $Y$ with respect to $\mathsf{inr}:Y\rightarrow V_\rho$. Shortly $FY=V_\rho$.

(2) Analogously, $U_\lambda$ is an algebra with respect to $\mathsf{inl}:HU_\lambda \rightarrow U_\lambda$ (since $u_{\lambda,\lambda+1}$ is invertible). We will prove that this algebra is the coproduct of $T$ and $FY$ with injections $$\overline{\mathsf{inl}}\equiv u_{0,\lambda}:T\rightarrow U_\lambda$$ and $$\xymatrix{\overline{\mathsf{inr}}\equiv FY=V_\rho\ar[r]^-{m_\rho}& U_{1+\rho}\ar[r]^-{u_{\lambda,{1+\rho}}^{-1}}& U_\lambda}$$
in $\mathsf{Alg}\, H$. Indeed, $\overline{\mathsf{inl}}$ is an algebra homomorphism: $\overline{\mathsf{inl}}=u_{1,\lambda}\cdot u_{0,1}$ and we have $u_{1,\lambda+1}=Hu_{0,\lambda}+Y$, therefore the following square commutes:
$$
\xymatrix{
HT\ar[rr]^{\tau^{-1}}\ar[rd]^-{\mathsf{inl}}\ar[dd]_{Hu_{0,\lambda}}&&T\ar[ld]^{u_{0,1}}\ar[dd]^{u_{0,\lambda}}\\
&HT+Y\ar[dr]^{u_{1,\lambda}}\\
HU_\lambda\ar[rr]_-{\mathsf{inl}}&&U_\lambda\simeq HU_\lambda+Y
}
$$

Also $\overline{\mathsf{inr}}$ is a homomorphism: the following diagram

$$
\xymatrix{
HV_\rho\ar[r]^-{\mathsf{inr}}\ar[d]_{Hm_\rho}&HV_\rho+Y\simeq V_\rho\ar[d]^{ Hm_\rho+Y\simeq m_\rho}\\
HU_{1+\rho}\ar[r]^-{\mathsf{inr}}&HU_{1+\rho}+Y\simeq U_{1+\rho}\\
HU_\lambda\ar[u]^{Hu_{\lambda,1+\rho}}\ar[r]^-{\mathsf{inr}}&HU_\lambda+Y\simeq U_\lambda\ar[u]_{Hu_{\lambda,1+\rho}+Y\simeq u_{\lambda,1+\rho}}
}
$$
commutes and yields (by inverting $u_{\lambda,1+\rho}$) the square
$$
\xymatrix{
HFY\ar[r]^{\mathsf{inr}}\ar[d]_{H\overline{\mathsf{inr}}}&FY\ar[d]^{\overline{\mathsf{inr}}}\\
HU_\lambda\ar[r]_-{\mathsf{inr}}&U_\lambda
}
$$

To verify the universal property, let $b:HB\rightarrow B$ be an algebra and $f:T\rightarrow B$ and $g:FY\rightarrow B$ be homomorphisms. We prove that there exists a unique homomorphism $h:U_\lambda\rightarrow B$ with $f=h\cdot\overline{\mathsf{inl}}$ and $g=h\cdot\overline{\mathsf{inr}}$. Put $\hat{g}=g\cdot\mathsf{inr}:Y\rightarrow B$.

{\it Existence}: Define a cocone $h_i:U_i\rightarrow B$ of the chain from
Construction~\ref{free core.chain} by
 $$h_0= f,\quad\text{and}\quad h_{i+1}\equiv \xymatrix@1@+1pc{HU_i+Y\ar[r]^-{Hh_i+Y}& HB+Y\ar[r]^-{[b,\hat g]}& B}$$
The first naturality triangle commutes because $f$ is a homomorphism, thus, $f=b\cdot Hf\cdot\tau$:
$$
\xymatrix{
T\ar[r]^{\tau}\ar[rdd]_f&HT\ar[rr]^-{\mathsf{inl}}\ar[d]^{Hf}&&HT+Y\ar[ld]^{Hf+Y}\\
&HB\ar[r]^-{\mathsf{inl}}\ar[d]^{b}&HB+Y\ar[ld]^-{[b,\hat g]}\\
&B
}
$$
The further verification of the naturality, $h_i=h_{i+1}\cdot u_{i+1}$, is now an easy transfinite induction. We obtain the desired homomorphism $h_\lambda:U_\lambda\rightarrow B$.
Indeed, since $h_{\lambda+1}\cdot\mathsf{inl}=b\cdot Hh_\lambda$ (by the above rule for $h_{i+1}$), the square
$$
\xymatrix{
HU_\lambda\ar[r]^-{\mathsf{inl}}\ar[d]_{Hh_\lambda}&HU_\lambda+Y\simeq U_\lambda\ar[d]^{h_{\lambda+1}}\\
HB\ar[r]_h&B
}
$$
commutes.

The first equation $f=h_\lambda\cdot\overline{\mathsf{inl}}$ follows
from $\overline{\mathsf{inl}}=u_{0,\lambda}$ and $h_\lambda\cdot
u_{0,\lambda}=h_0=f$. For the second one
$g=h_\lambda\cdot\overline{\mathsf{inr}}:FY\rightarrow B$ observe that
both sides are algebra homomorphisms. Thus, it is sufficient to prove
that the universal arrow $\mathsf{inr}:Y\rightarrow V_\rho(\simeq HV_\rho+Y)$ merges them. Recall that $\overline{\mathsf{inr}}=u_{\lambda,1+\rho}^{-1}\cdot m_\rho$, thus we need to prove
$$\hat g=h_\lambda\cdot\overline{\mathsf{inr}}\cdot\mathsf{inr}=h_{1+\rho}\cdot m_\rho\cdot\mathsf{inr}.$$
This follows from $h_{1+\rho+1}=[b,\hat g]\cdot(Hh_{1+\rho}+Y)=[b\cdot Hh_{1+\rho},\hat g]$, thus the outside of the diagram below commutes as desired:
$$
\xymatrix{
Y\ar`d[ddr][rrdd]_(.4){\hat g}\ar[r]^-{\mathsf{inr}}\ar[rd]_{\mathsf{inr}}&HV_\rho+Y\ar[r]^-{\sim}\ar[d]^{Hm_\rho+Y}&V_\rho\ar[d]^{m_\rho}\\
&HU_{1+\rho}+Y\ar[r]^-{\sim}\ar[rd]_{[b\cdot Hh_{1+\rho},\hat g]}&U_{1+\rho}\ar[d]^{h_{1+\rho}}\\
&&B
}
$$

{\it Uniqueness}: Consider an algebra homomorphism
$h:U_\lambda\rightarrow B$ with $h\cdot\overline{\mathsf{inl}}=f$ and
$h\cdot\overline{\mathsf{inr}}=g$, we prove $h\cdot u_{i,\lambda}=h_i$
by transfinite induction on $i\leq\lambda$. The case $i=\lambda$ yields $h=h_\lambda$. The initial step is clear:$$h\cdot u_{0,\rho}=h\cdot\overline{\mathsf{inl}}=f=h_0.$$
Assuming $h\cdot u_{i,\lambda}=h_i$, we are going to prove that the triangle
$$
\xymatrix{
HU_i+Y=U_{i+1}\ar[d]_{{Hu_{i,\lambda}+Y}\simeq u_{i+1,\lambda}}\ar[dr]^(.7)*+{\labelstyle h_{i+1}=[b\cdot Hh_i,\hat g]}\\
HU_\lambda+Y\simeq U_\lambda\ar[r]_-h&B
}
$$
commutes. The left-hand component with domain $HU_i$ commutes because $h$ is a homomorphism, that is, $h\cdot\mathsf{inl}=b\cdot Hh$:
$$
\xymatrix{
HU_i\ar[d]_{{Hu_{i,\lambda}}}\ar[dr]^{Hh_i}\\
HU_\lambda\ar[r]_{Hh}\ar[d]_{\mathsf{inl}}&HB\ar[d]^{b}\\
U_\lambda\ar[r]_h&B
}
$$
The right-hand one with domain $Y$ follows from $g=h\cdot \overline{\mathsf{inr}}=h\cdot u_{\lambda,1+\rho}^{-1}\cdot m_{\rho}$: we have
$$\hat g=g\cdot\mathsf{inr}=h\cdot u^{-1}_{\lambda,1+\rho}\cdot m_{\rho}\cdot\mathsf{inr}=h\cdot u_{i+1,\lambda}\cdot\mathsf{inr},$$
where the last equation follows from the commutative diagram below expressing $u_{i+1,\lambda+1}=Hu_{i,\lambda}+Y$:
$$
\xymatrix{
&HV_\rho+Y\ar[r]^-\sim\ar[d]^{Hm_\rho+Y}&V_\rho\ar[d]^{m_\rho}\\
Y\ar[ur]^-{\mathsf{inr}}\ar[r]^-{\mathsf{inr}}\ar[dr]^-{\mathsf{inr}}\ar[ddr]_-{\mathsf{inr}}&HU_{1+\rho}+Y\ar[d]^{Hu^{-1}_{\lambda,{1+\rho}}+Y}
\ar[r]^-\sim&U_{1+\rho}\ar[d]^{u^{-1}_{\lambda,1+\rho}}\\
&HU_\lambda+Y\ar[r]^-\sim&U_\lambda\\
&HU_i+Y\ar[u]_{Hu_{i,\lambda}+Y}\ar@{=}[r]&U_{i+1}\ar[u]_{u_{i+1,\lambda}}
}
$$
Finally, for a limit ordinal $\alpha$ we easily derive $h\cdot u_{\alpha,\lambda}=h_\alpha$ by extending with the colimit injections $u_{i,\alpha}$, $i\leq \alpha$, and using that they are jointly epic: $$h\cdot u_{\alpha,\lambda}\cdot u_{i,\alpha}=h\cdot u_{i,\lambda}=h_i=h_\alpha\cdot u_{i,\alpha}.$$
\end{proof}

\begin{thm}\label{free cor exists}
Let $\mathcal A$ be a locally presentable category with constructive monomorphisms. Every accessible endofunctor preserving monomorphisms has free corecursive algebras $MY=T\oplus FY$.
\end{thm}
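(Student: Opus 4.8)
The plan is to run the free-corecursive-algebra chain of Construction~\ref{free core.chain}, show it converges under the present hypotheses, identify its colimit with $T\oplus FY$ by Proposition~\ref{cor chain converges U=T+FY}, and then upgrade ``free Bloom algebra'' to ``free corecursive algebra'' by proving that $T\oplus FY$ is genuinely corecursive.

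First I would fix a regular cardinal $\lambda$ with $H$ $\lambda$-accessible and $\mathcal A$ locally $\lambda$-presentable (an accessible functor is $\lambda$-accessible for arbitrarily large regular $\lambda$). Then $T$ exists, since $\mathsf{Coalg}\,H$ is again locally presentable and so has a terminal object, $FY$ exists by Corollary~\ref{c-access}, and $T\oplus FY$ exists. Next I would check that the corecursive chain $U$ converges at $\lambda$: the subdiagram $(U_i)_{i<\lambda}$ is $\lambda$-filtered, so $H$ preserves its colimit $U_\lambda$, and since $-+Y$ preserves colimits one gets $U_{\lambda+1}=HU_\lambda+Y=\colim_{i<\lambda}(HU_i+Y)=\colim_{i<\lambda}U_{i+1}=U_\lambda$, i.e.\ $u_{\lambda,\lambda+1}$ is invertible. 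Proposition~\ref{cor chain converges U=T+FY} then gives $U_\lambda=T\oplus FY$ as $H$-algebras, with the injection $\overline{\mathsf{inl}}=u_{0,\lambda}:T\to U_\lambda$ an algebra homomorphism.

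The substantial step is to prove that $(U_\lambda,\mathsf{inl})$ is corecursive. Existence of solutions is immediate: for $e:X\to HX$ let $e^\sharp:X\to T$ be the unique coalgebra homomorphism into the terminal coalgebra, so $e^\sharp=\tau^{-1}\cdot He^\sharp\cdot e$; since $\overline{\mathsf{inl}}$ is a homomorphism, $\overline{\mathsf{inl}}\cdot e^\sharp=\mathsf{inl}\cdot H(\overline{\mathsf{inl}}\cdot e^\sharp)\cdot e$, so $\overline{\mathsf{inl}}\cdot e^\sharp$ is a solution. For uniqueness I would compare $U_\lambda$ with the free completely iterative algebra $\Phi=\nu(H(-)+Y)$, the terminal coalgebra of $H(-)+Y$: it exists because $H(-)+Y$ is accessible, and being a cia it is corecursive by Remark~\ref{cia}. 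There is a canonical algebra homomorphism $j=[j_T,j_F]:T\oplus FY\to\Phi$, where $j_T:T\to\Phi$ is the unique homomorphism out of the initial corecursive algebra $T$ and $j_F:FY\to\Phi$ extends the variable map $Y\to\Phi$. As algebra homomorphisms carry solutions to solutions, for any solution $s$ of $e$ in $U_\lambda$ the composite $j\cdot s$ is a solution of $e$ in $\Phi$; because $\Phi$ is corecursive this composite is uniquely determined, independently of $s$. Hence, if $j$ is monic, $s$ is unique and $U_\lambda$ is corecursive. (Equivalently, $j$ is a Bloom monomorphism into a corecursive algebra, so $U_\lambda$ is a Bloom subalgebra of a corecursive algebra and is corecursive by Proposition~\ref{cor reflected of B}.)

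The main obstacle is exactly the monicity of $j$, and this is where the two standing hypotheses enter: that $H$ preserves monomorphisms and that monomorphisms are constructive. I expect to obtain it by the same monomorphism bookkeeping as in the proof of Proposition~\ref{cor chain converges U=T+FY}, using constructivity to push the monic property through the coproducts and colimits that build $j$; monicity in $\mathsf{Alg}_B\,H$ then reduces to monicity in $\mathcal A$ by Corollary~\ref{cor:mono}. Finally, once $U_\lambda=T\oplus FY$ is known to be corecursive it is the \emph{free} corecursive algebra on $Y$: it is the free Bloom algebra by Theorem~\ref{thm:3.16}, every corecursive algebra is a Bloom algebra, and by Lemma~\ref{homos are solution prese} every homomorphism between corecursive algebras automatically preserves solutions; thus the universal property of $T\oplus FY$ as a free Bloom algebra restricts precisely to that of a free corecursive algebra, giving $MY=T\oplus FY$.
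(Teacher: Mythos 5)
Your overall strategy is the paper's: identify $T\oplus FY$ with the colimit $U_\lambda$ of the corecursive chain, embed it into the cia $\nu(H(-)+Y)$ (your $\Phi$, the paper's $TY$), and conclude via Proposition~\ref{cor reflected of B} that a Bloom subalgebra of a corecursive algebra is corecursive. The framing is right, and the convergence argument and the existence of solutions are fine. But the step you yourself flag as ``the main obstacle'' --- monicity of $j$ --- is exactly the technical content of the theorem, and your sketch of it does not go through as stated. First, $j=[j_T,j_F]$ is a copairing out of a coproduct in $\mathsf{Alg}\,H$, which is \emph{not} the coproduct in $\mathcal A$; so even if $j_T$ and $j_F$ were each monic, nothing forces $[j_T,j_F]$ to be. The ``monomorphism bookkeeping'' of Proposition~\ref{cor chain converges U=T+FY} cannot be transplanted directly: there the comparison is a levelwise natural transformation between two colimit chains whose base case ($\mathsf{inr}:Y\to HT+Y$) is monic by constructivity. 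Here the analogous cocone $p_i:U_i\to TY$ has base case $p_0:T\to TY$, the unique homomorphism out of the initial corecursive algebra, and there is no a priori reason for \emph{that} to be monic.

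The missing idea is the paper's separate argument for $p_0$: since $H$ is accessible and preserves monomorphisms, the terminal chains $W$ of $H$ and $\overline W$ of $H(-)+Y$ both converge, and one can arrange convergence at a common ordinal $\lambda$; a levelwise natural transformation $m_i:W_i\to\overline W_i$ with $m_0=id_1$ and $m_{i+1}=\mathsf{inl}\cdot Hm_i$ consists of monomorphisms (using constructivity for $\mathsf{inl}$ and preservation of monos by $H$), and $m_\lambda:T\to TY$ is then a monic coalgebra-to-algebra comparison. Only with this base case in hand can one run the transfinite induction $p_{i+1}=[\tau_Y,\eta_Y]\cdot(Hp_i+Y)$ to get all $p_i$ monic, and invoke constructivity condition (c) at limit ordinals to conclude that $p_\lambda:U_\lambda\to TY$ is monic; one then checks separately that $p_\lambda$ is an algebra homomorphism. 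So your proposal is correct in outline and identifies where the hypotheses must enter, but it leaves unproved the one step that actually uses them, and the route you gesture at (monicity of the copairing from monicity of its components) would fail. Supplying the terminal-chain comparison closes the gap and recovers the paper's proof.
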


\begin{proof}
From Corollary \ref{free B exists} we know that $T\oplus FY$ is a free Bloom algebra, thus, it is sufficient to prove that this algebra is corecursive. For that, we use Proposition \ref{cor reflected of B} and find a corecursive algebra such that $T\oplus FY$ is its subalgebra; this will finish the proof.

The endofunctor $H(-)+Y$ is also accessible. Thus, it also has a terminal
coalgebra (see the proof of Corollary~\ref{free B exists}). We denote it by
$TY$. The components of the inverse of its coalgebra structure
$TY\stackrel{\sim}\rightarrow H(TY)+Y$ are denoted by
$\tau_Y:H(TY)\rightarrow TY$ and $\eta_Y:Y\rightarrow TY$, respectively. As proved in \cite{m_cia} the algebra $TY$ is a cia for $H$, cf. Remark \ref{cia}. We are going to prove that $T\oplus FY$ is a subalgebra of this $H$-algebra $TY$.

Since $H$ is accessible and preserves monomorphisms, the terminal chain
of $H$ converges (and yields a terminal coalgebra $T$), as proved in \cite{at2}. That is, if we define an chain $W:\mathsf{Ord}^{op}\rightarrow \mathcal A$ on objects by
$$W_0=1\ \ \ {\rm and}\ \ \ W_{i+1}=HW_i$$
with $W_i=\lim_{k<i}W_k$ for limit ordinals  (and on morphisms by
$w_{1,0}:H1\rightarrow 1$ unique, $w_{i+1,j+1}=Hw_{i,j}$ and
$(w_{i,k})_{k<i}$ forming a limit cone), then there exists an ordinal
$\lambda$ such that $w_{\lambda+1,\lambda}:HW_\lambda\rightarrow
W_\lambda$ is invertible. We then get 
$$T=W_\lambda\ \ \ {\rm and}\ \ \ \tau=w_{\lambda+1,\lambda}^{-1}.$$
We can choose $\lambda$ arbitrarily large, thus we can assume that $\lambda$ is a cardinal such that $H$ is $\lambda$-accessible.

Since monomorphisms are constructive, $H(-)+Y$ also preserves monomorphisms and is also $\lambda$-accessible. Denote by $\overline W:\mathsf{Ord}^{op}\rightarrow \mathcal A$ its terminal chain. Then this chain converges and yields a terminal coalgebra $TY$. Since we again can choose an arbitrary large ordinal for the convergence of $\overline W$, we can assume that this is the above cardinal $\lambda$, thus, $TY=\overline W_\lambda$ and $[\tau_Y,\eta_Y]=\overline w_{\lambda+1,\lambda}:H(TY)+Y\rightarrow TY$. We conclude that $T$ is a (canonical) subalgebra of $TY$: define a natural transformation $m_i:W_i\rightarrow \overline W_i$ by $m_0=id_1$ and $m_{i+1}\equiv HW_i\stackrel{Hm_i}\longrightarrow H\overline W_i\stackrel{\mathsf{inl}}\longrightarrow H\overline W_i+Y$. The constructivity of monomorphisms implies that $\mathsf{inl}$ is monic, thus, we see by easy transfinite induction that $m_i$'s are monomorphisms for all $i\in \mathsf{Ord}$. And $m_\lambda:T\rightarrow TY$ is a coalgebra homomorphism because the $\lambda$-th naturality square of ($m_i)$ yields $m_\lambda\cdot\tau=\tau_Y\cdot Hm_\lambda$:
$$
\xymatrix{
HT\ar[d]_{\mathsf{inl}\cdot Hm_\lambda}&&T\ar[ll]_-{\tau}\ar[d]^{m_\lambda}\\
H(TY)+Y&&TY\ar[ll]^-{[\tau_Y,\eta_Y]^{-1}}
}
$$

We now define a cocone $p_i: U_i\rightarrow TY$, for $i\leq\lambda$, of
the chain from Construction~\ref{free core.chain} by
$$p_0=m_\lambda:T\rightarrow TY$$
and
$$\xymatrix{p_{i+1}\equiv HU_i+Y\ar[rr]^-{Hp_i+Y}&& H(TY)+Y\ar[rr]^-{[\tau_Y,\eta_Y]}&&TY}.$$
We need to verify compatibility, $p_i=p_{i+1}\cdot u_{i,i+1}$, from which the limit steps follow automatically. For $i=0$, use that $m_\lambda$ is a coalgebra homomorphism:
$$
\xymatrix{
T\ar[r]^\tau\ar[rdd]_{m_\lambda}&HT\ar[rr]^{\mathsf{inl}}\ar[d]^-{Hm_\lambda}&&HT+Y\ar[ld]^(.4)*+{\labelstyle Hm_\lambda+Y}\ar@{<-}`u[l]`[lll]_-{u_{0,1}}[lll]\\
&HTY\ar[r]^-{\mathsf{inl}}\ar[d]^-{\tau_Y}&H(TY)+Y\ar[ld]^(.4)*+{\labelstyle [\tau_Y,\eta_Y]}\\
&TY
}
$$
The isolated  step is easy because if the compatibility holds for $i$, then the diagram
$$
\xymatrix{
HU_i+Y\ar[rr]^{Hu_{i,i+1}+Y}\ar[rd]^{Hp_i+Y}\ar[rdd]_{p_{i+1}}&&HU_{i+1}+Y\ar[ld]_{Hp_{i+1}+Y}\ar[ddl]^{p_{i+2}}\\
&H(TY)+Y\ar[d]|(.45){[\tau_Y,\eta_Y]}\\
&TY
}
$$
commutes. It is obvious (by transfinite induction) that all $p_i$'s
are monomorphisms. It remains to verify that
$p_\lambda:U_\lambda\rightarrow TY$ is an algebra
homomorphism. Indeed, since $H$ preserves $\lambda$-filtered colimits,
the corecursive chain $U_i$ converges after $\lambda$ steps, so that $U_\lambda = T\oplus FY$, by Proposition \ref{cor chain converges U=T+FY}. We have following commutative square:
$$
\xymatrix@C+1pc{
HU_\lambda\ar[r]^-{\mathsf{inr}}\ar[d]_{Hp_\lambda}&HU_\lambda+Y\simeq U_\lambda\ar[d]^{{p_{\lambda+1}}\simeq p_\lambda}\\
H(TY)\ar[r]_-{\tau_Y(=\mathsf{inr})}&H(TY)+Y\simeq TY 
}
$$

\vspace*{-20pt}
\end{proof}

\begin{exa}\label{some exams}
Free corecursive algebras $MY$ that are obtained as $U_\omega$.
\begin{enumerate}
\item For $H=Id$ we have
$$MY=U_\omega= 1+Y+Y+Y+\ldots$$
Indeed, the terminal object $1$ is $T$, and
\begin{align}
U_1&=T+Y&
U_2&=T+Y+Y&
\cdots\notag
\end{align}
with colimit $U_\omega=1+Y+Y+Y+\cdots$.

\item More generally, let $H:\mathcal A\rightarrow \mathcal A$ preserve countable coproducts and have a terminal coalgebra $T$. Then $$MY=U_\omega=T+Y+HY+H^2Y+\cdots$$

\item For the endofunctor $HX=X\times X$ of {\bf Set} (of binary algebras) we have the free corecursive algebra
$$MY=\text{all binary trees with finitely many leaves, all of which are labelled in $Y$.}$$
Indeed, the corecursive chain $U$ yields
$$U_0=1$$
which we represent by the complete binary tree $t$. Then
$$U_1=1\times 1+Y \simeq 1+Y$$
is represented by $t$ and all singleton trees labelled in $Y$, and
$$U_2=(1+Y)\times (1+Y)+Y$$
is represented by all binary trees with leaves of depth at most $1$ labelled in $Y$.
We conclude that
$$U_n=\text{all binary trees with no leaves of depth $\geq n$ and leaves labelled in $Y$}.$$
Consequently, the corecursive chain yields
$$T\oplus FY=\colim_{n<\omega}U_n=\bigcup_{n<\omega}U_n$$
which is the above set of trees. Observe that this description of $MY$
corresponds well with the fact that $MY$ is a free binary algebra with
an additional idempotent (see Example \ref{ex:binary}): the unique
idempotent of $MY$ is the complete binary tree. And $MY$ is generated
by this tree and all finite trees in $FY$.

\item More generally, let $\Sigma=(\Sigma_k)_{k<\omega}$ be a signature. Then $\Sigma$-algebras are precisely the algebras for the polynomial
endofunctor
$$H_\Sigma X=\Sigma_0+\Sigma_1\times X+\Sigma_2\times X^2+\cdots$$
Recall that the terminal coalgebra is the coalgebra of all {\it $\Sigma $-trees}, that is, trees labelled in $\Sigma$ so that every node with a label of arity $n$ has precisely $n$ children. And $FY$ is the algebra of all finite $(\Sigma+Y)$-trees, where members of $Y$ are considered to have arity $0$. Then $U_n$ is the set of all $(\Sigma+Y)$-trees with no leaf of depth greater than $n$ having a label from $Y$. (That is, all leaves on level $n$ or more are labelled by a nullary symbol in $\Sigma_0$.) Consequently the free corecursive algebra $\colim_{n<\omega}U_n=\bigcup_{n<\omega}U_n$ is
\[
MY=T\oplus FY= \parbox[t]{9cm}{all $(\Sigma+Y)$-trees in which finitely many leaves
are labelled in $Y$ and other leaves labelled in $\Sigma_0$.}
\]

\item For the finite power set functor $\mathcal P_f$, J. Worrell
\cite{w} described the terminal coalgebra $T$ as the coalgebra of all finitely branching,
non-ordered, strongly extensional  trees. Recall that a (non-ordered)
tree is called {\it strongly extensional} if for every node, the subtrees rooted at distinct children of the given node are not tree bisimilar. The free corecursive algebra is
\[
T\oplus FY =
\parbox[t]{10cm}{all finitely branching, strongly extensional trees with finitely
 \quad many leaves labelled in $Y$ (and other leaves unlabelled).
}
\]
Indeed, this is analogous to (3) since
\[
\begin{array}{rcccp{9cm}}
  U_1 & \simeq & T+Y & = & all the trees in $T$ and all singleton
  trees labelled in $Y$,\\
  U_2 & = & \mathcal P_fU_1+Y & = &
  all finitely branching, strongly extensional trees with
  some leaves of depth less than $2$ labelled in $Y$,
  \\
  \multicolumn{2}{l}{\textrm{etc.}}
\end{array}
\]
Thus $T\oplus FY=\bigcup_{n<\omega}U_n$ is the above set of trees. The
corecursive chain converges in $\omega$ steps  because $\mathcal P_f$
preserves $\omega$-colimits.
\end{enumerate}
\end{exa}
\begin{exa}
Now we illustrate the need of $U_i$ for infinite ordinals $i$. The
functor $H:\mathbf{ Set}\rightarrow \mathbf{ Set}$ with
$HX=X^{\mathbb{N}}$ has the following free corecursive algebras, where
$t_0$ is the is the complete countably branching tree:
\[
MY =
\parbox[t]{10cm}{all countably branching trees with leaves
labelled in $Y$ such that every infinite path contains a node whose
subtree is $t_0$.}
\]
To see this, represent $U_0=1$ by the single tree $t_0$ and obtain
\[\begin{array}{rcccp{10cm}}
 U_1&\simeq& T+Y&=& all singleton trees labelled in $Y$, plus $t_0$ \\
U_2&\simeq& (T+Y)^\mathbb{N}+Y&=& all countably branching trees with
leaves at levels at most $1$, all of which are labelled in $Y$\\
&&&\vdots\notag\\
U_\omega &\simeq &T+Y&=&
all countably branching trees which, for some $n< \omega$, have all leaves at level at most $n$, and they are labelled in $Y$.
\end{array}\]
 But $U_\omega$ is not an algebra because it does not contain the following tree
 $$
 \xymatrix{
 &&\bullet\ar@{-}[lld]\ar@{-}[d]\ar@{-}[rrrrd]\\
 y&&\bullet\ar@{-}[ld]\ar@{-}[d]\ar@{-}[rd]^(.7)*+{\ldots}&&&&\bullet\ar@{-}[ld]\ar@{-}[rrd]&\ldots\\
 &y&y&y&&\bullet\ar@{-}[ld]\ar@{-}[d]\ar@{-}[rd]^(.7)*+{\ldots}&&&\bullet\ar@{-}[ld]\ar@{-}[d]\ar@{-}[rd]^(.7)*+{\ldots}\\
 &&&&y&y&y&y&y&y
}
 $$
 This tree is in $U_{\omega+1}=U_\omega^\mathbb{N}+Y$.

 We can identify, for every ordinal $i$, the set $U_i$ with the set of all countably branching trees of type $i$ where type $0$ means the tree is $t_0$, type $i+1$ means that all maximal subtrees have type $i$, and for limit ordinals $j$, type $j$ means type $i$ for some $i<j$. Then
 $$U_{\omega_1}=\text {all countably branching trees of countable type.}$$
 It is easy to verify that a tree has countable type if and only if on every infinite path there exists a node whose subtree is $t_0$. Thus, $U_{\omega_1}=T\oplus FY$ is the above free corecursive algebra.
 \end{exa}

For the proof of Theorem~\ref{equivalence 1}, the main result of this section, we are going to use the following technical lemma concerning fixpoints and coproducts of algebras.\smnote{explanation added due to referee comment}

\begin{lem}\label{fixed point lemma}
Let $a:HA\stackrel{\sim}\to A$ be a fixpoint and $B=HB+Y$. If the algebras $(A,a)$ and $(B,\mathsf{inl})$ have a coproduct $(C,c)=(A,a)\oplus (B,\mathsf{inl})$ with injections $i$ and $j$ in $\mathsf{Alg}\,H$, then we have a coproduct $C=HC+Y$ in $\mathcal A$ with injections
$$c:HC\to C\qquad \text{and}\qquad  d\equiv Y\stackrel{\mathsf{inr}}\to B\stackrel{j}\to C$$
\end{lem}

\begin{proof}
For the algebra $\overline C=HC+Y$ with structure
$$\xymatrix{
\overline c\equiv H(HC+Y)\ar[r]^-{H[c,d]}&HC\ar[r]^-{\mathsf{inl}}&HC+Y
}$$
we observe that $a^{-1}\cdot Hi\cdot \mathsf{inl}:A\to \overline C$ and $Hj+id:B\to \overline C$  are homomorphisms.
Indeed, for the former we have the commutative diagram
$$\xymatrix@C+1pc{
HA\ar[rr]^-a\ar[d]_-{Ha^{-1}}\ar@{=}[dr]&&A\ar[d]^-{a^{-1}}\\
HHA\ar[r]^-{Ha}\ar[d]_-{HHi}&HA\ar[dd]_-{Hi}&HA\ar[d]^-{Hi}\\
HHC\ar[d]_-{H\mathsf{inl}}\ar[dr]^-{Hc}&&HC\ar[d]^-{\mathsf{inl}}\\
H(HC+Y)\ar[r]_-{H[c,d]}&HC\ar[r]_-{\mathsf{inl}}&\overline{C}
}$$
whose middle left-hand part commutes because $i$ is a homomorphism. For the latter one consider the diagram
$$\xymatrix@C+1pc{
HB\ar[rr]^{\mathsf{inl}}\ar[d]_-{H(Hj+id)}\ar[rd]^-{Hj}&&B=HB+Y\ar[d]^-{Hj+Y}\\
H(HC+Y)\ar[r]_-{H[c,d]}&HC\ar[r]_-{\mathsf{inl}}&HC+Y
}$$
The left-hand triangle commutes because $d=j\cdot\mathsf{inr}$ and because $j$ is a homomorphism, that is, $c\cdot Hj=j\cdot \mathsf{inl}$. Therefore, we have a unique homomorphism $h:C\to \overline C$ with
\begin{equation}\label{eq 4.1}
h\cdot i=\mathsf{inl}\cdot Hi\cdot a^{-1}
\end{equation}
and
\begin{equation}\label{eq 4.2}
h\cdot j= Hj+id.
\end{equation}
This is the inverse of $[c,d]:\overline C\to C$. Indeed, we prove $[c,d]\cdot h=id$ by using the universal property of coproducts. Firstly, $[c,d]$ is a homomorphism:
$$\xymatrix{
H(HC+Y)\ar[r]^-{H[c,d]}\ar[d]_-{H[c,d]}&HC\ar[r]^-{\mathsf{inl}}\ar[dr]^-c&HC+Y\ar[d]^-{[c,d]}\\
HC\ar[rr]_-c&&C
}$$
Therefore, $[c,d]\cdot h$ is an endomorphism of $(C,c)$. And we have both
\begin{align}
[c,d]\cdot h\cdot i&=[c,d]\cdot \mathsf{inl}\cdot Hi\cdot a^{-1}&\text{by (\ref{eq 4.1})\qquad\qquad\qquad}\notag\\
&=c\cdot Hi\cdot a^{-1}\notag\\
&=i\cdot a\cdot a^{-1}&\text{$i$ is a homomorphism}\notag\\
&=i\notag
\end{align}
and
\begin{align}
[c,d]\cdot h\cdot j &=[c,d]\cdot(Hj+id)&\text{by (\ref{eq 4.2})}\notag\\
&=[c\cdot Hj,d]\notag\\
&=[j\cdot\mathsf{inl},j\cdot\mathsf{inr}]&\text{by (\ref{eq 4.1})}\notag\\
&=j.\notag
\end{align}
The remaining identity $h\cdot [c,d]=id$ translates into $h\cdot c=\mathsf{inl}$ and $h\cdot d=\mathsf{inr}$. The former equality follows from $[c,d]\cdot h=id$ and the fact that $h$ is a homomorphism
$$\xymatrix{
HC\ar[rr]^-c\ar[d]_-{Hh}\ar@{=}[dr]&&C\ar[d]^-h\\
H\overline C\ar[r]_-{H[c,d]}&HC\ar[r]_-{\mathsf{inl}}&\overline C
}$$
the latter one follows from (\ref{eq 4.1}) and (\ref{eq 4.2}) since $h\cdot d=h\cdot j\cdot\mathsf{inr}=(Hj+id)\cdot\mathsf{inr}=\mathsf{inr}$.
\end{proof}

\begin{rem}\label{remark pre fixed point}\hfill
\begin{enumerate}[label=\({\alph*}]
\item A {\it pre-fixpoint} of a functor $H$ is an object $A$ such that $HA$ is a subobject of $A$.

\item A fixpoint, i.\;e.~ an object $A\simeq HA$, can be considered as
  an algebra or a coalgebra for $H$. When we speak about {\it
    corecursive fixpoints}, we mean fixpoints
  $\alpha:HA\stackrel{\sim}\rightarrow A$ that are corecursive
  algebras.
\end{enumerate}
\end{rem}

\begin{thm}\label{equivalence 1}
For every set functor, the following statements are equivalent:
\begin{enumerate}[label=(\roman*)]
\item free corecursive algebras exist,
\item free algebras and a terminal coalgebra exist, and
\item arbitrarily large pre-fixpoints and a corecursive fixpoint
  exist.
\end{enumerate}
They imply that the free corecursive algebra on $Y$ is $T\oplus FY$.
\end{thm}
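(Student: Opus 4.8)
The plan is to treat statement (ii) as a hub and prove the two equivalences (i)$\Leftrightarrow$(ii) and (ii)$\Leftrightarrow$(iii) separately, reading off the formula $MY=T\oplus FY$ from the direction (ii)$\Rightarrow$(i). Two of the four implications are essentially free. For (ii)$\Rightarrow$(iii): a free algebra satisfies $FY=HFY+Y$, so the left-hand coproduct injection exhibits $HFY$ as a subobject of $FY$, i.e.\ $FY$ is a pre-fixpoint; since $\mathsf{inr}\colon Y\to FY$ is monic we have $|FY|\ge|Y|$, and letting $Y$ range over all sets yields arbitrarily large pre-fixpoints. The terminal coalgebra, viewed as the algebra $\tau^{-1}$, is a corecursive fixpoint by Example~\ref{exam of core}(2). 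For the first half of (i)$\Rightarrow$(ii): the free corecursive algebra on $\emptyset$ is the initial object of $\mathsf{Alg}_C\,H$, and an initial corecursive algebra is a terminal coalgebra, again by Example~\ref{exam of core}(2); so $T$ exists.

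For (iii)$\Rightarrow$(ii) I would first observe that a corecursive fixpoint $\alpha\colon HA\stackrel{\sim}\to A$ \emph{is} a terminal coalgebra: a coalgebra homomorphism $(X,e)\to(A,\alpha^{-1})$ is precisely a coalgebra-to-algebra morphism into $(A,\alpha)$, of which corecursiveness provides exactly one, so $(A,\alpha^{-1})=T$. To build $FY$ I would bound the free-algebra chain of Construction~\ref{free chain}. Picking an infinite pre-fixpoint $P$ of $H$ with $|P|\ge|Y|$ (available since pre-fixpoints are arbitrarily large), the estimate $|HP|+|Y|\le|P|$ makes $P$ a pre-fixpoint of $H(-)+Y$ as well, hence an $H(-)+Y$-algebra with monic structure map. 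Exactly as in step~(2) of the proof of Proposition~\ref{free B then free alg}, the induced cone $m_i\colon V_i\to P$ consists of monomorphisms (using constructivity), so by well-poweredness some $v_{i,j}$ becomes invertible and $FY$ exists; the case $Y=\emptyset$ (the initial algebra) is the same argument applied to the initial-algebra chain.

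The substantive direction is (ii)$\Rightarrow$(i) together with the formula. Here I would first upgrade (ii) to accessibility of $H$: a set functor possessing a terminal coalgebra is bounded, hence accessible (see \cite{at2}); the degenerate case $T=\emptyset$, forcing $H$ constant on $\emptyset$, is dispatched as in Corollary~\ref{cor:3.16}. This accessibility also completes (i)$\Rightarrow$(ii), since by Corollary~\ref{c-access} every accessible functor on the cocomplete category $\mathbf{Set}$ has free algebras. To meet the monomorphism-preservation hypothesis of Theorem~\ref{free cor exists}, I would redefine $H$ at $\emptyset$ to obtain an accessible, monomorphism-preserving set functor $H'$ agreeing with $H$ on all nonempty sets; then $H$ and $H'$ share the same terminal coalgebra and the same free and free corecursive algebras on every nonempty $Y$. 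Theorem~\ref{free cor exists} applied to $H'$ then produces the free corecursive algebra $T\oplus FY$, which transfers back to $H$ and proves (i) and the formula for $Y\ne\emptyset$. For $Y=\emptyset$ the free corecursive algebra is $T$ itself, and indeed $T=T\oplus F\emptyset$ because $F\emptyset$ is the initial $H$-algebra, i.e.\ the initial object of $\mathsf{Alg}\,H$. (Lemma~\ref{fixed point lemma} may be invoked here to confirm directly that $C=T\oplus FY$ satisfies $C=HC+Y$.)

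I expect the main obstacle to be this last direction. Everything hinges on converting the bare existence of a terminal coalgebra into accessibility of $H$ — without which neither the coproduct $T\oplus FY$ need exist nor the corecursive chain of Construction~\ref{free core.chain} need converge — and then on the careful bookkeeping that lets one pass between $H$ and its monomorphism-preserving modification $H'$ while tracking the empty-set cases, so that the single clean formula $MY=T\oplus FY$ holds uniformly in $Y$.
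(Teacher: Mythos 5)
Your treatment of the peripheral implications is essentially sound and close in spirit to the paper: the paper handles (ii)$\Leftrightarrow$(iii) by citing Theorem~II.4 of \cite{takr} (arbitrarily large pre-fixpoints $\Leftrightarrow$ free algebras) and Proposition~7 of \cite{cuv2} (corecursive fixpoint $=$ terminal coalgebra), where you reprove these directly; and (i)$\Rightarrow$(ii) is obtained in the paper from Lemma~\ref{free B-> final coalg} and Proposition~\ref{free B then free alg}, which is compatible with your route. The WLOG reduction to a monomorphism-preserving $H'$ is also exactly what the paper does.

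The genuine gap is in the substantive direction (ii)$\Rightarrow$(i), and it is fatal to your plan. Your hinge claim --- ``a set functor possessing a terminal coalgebra is bounded, hence accessible'' --- is false, and the paper itself exhibits a counterexample at the end of Section~4: the functor $HX=\{M\subseteq X \mid \card M\in\mathbb B \text{ or } M=\emptyset\}$ has $H1=1$, hence a terminal coalgebra, and has free algebras, yet is not $\lambda$-accessible for any $\lambda$. (What \emph{is} true, by \cite{ak2}, is only that the terminal chain of $H$ converges.) Consequently you cannot reduce (ii)$\Rightarrow$(i) to Theorem~\ref{free cor exists}, and the entire point of Theorem~\ref{equivalence 1} is to cover precisely this non-accessible case. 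Without accessibility you must independently establish: (a) that $T\oplus FY$ exists at all --- the paper gets this from cocompleteness of $\mathsf{Alg}\,H\simeq\mathbf{Set}^{\mathbb F}$ via the free monad; (b) that the corecursive chain of Construction~\ref{free core.chain} converges --- the paper proves this by using Lemma~\ref{fixed point lemma} to identify $C=T\oplus FY$ with $HC+Y$ and then running a monomorphic cocone $m_i:U_i\to C$ through a wellpoweredness argument (your parenthetical mention of Lemma~\ref{fixed point lemma} gestures at this but you use it only as a consistency check, not as the engine of convergence); and (c) that $T\oplus FY$ embeds into some corecursive algebra --- here the terminal coalgebra $TY$ of $H(-)+Y$ used in Theorem~\ref{free cor exists} need not exist for non-accessible $H$, so the paper replaces it by the approximants $\overline W_\lambda$ of the terminal chain of $H(-)+Y$, which are corecursive by Example~\ref{exam of core}(5). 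Your proposal contains no substitute for steps (a)--(c), so the implication (ii)$\Rightarrow$(i) and the formula $MY=T\oplus FY$ remain unproved.
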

\begin{proof}
  We can assume, without loss of generality, that $H$ preserves
  monomorphisms: see the argument for Corollary~\ref{cor:3.16}.  The
  equivalence of (ii) and (iii) follows from the fact that (a)
  arbitrarily large pre-fixpoints are necessary and sufficient for the
  existence of free algebras, see Theorem II.4 in \cite{takr} and (b)
  every corecursive fixpoint is an initial corecursive algebra, thus,
  a terminal coalgebra, see Proposition 7 in \cite{cuv2}.

Let us prove (i) $\Leftrightarrow$ (ii).

\smallskip\noindent
${\rm (i)\Rightarrow (ii)}$: This follows from Lemma \ref{free B-> final coalg} and Proposition \ref{free B then free alg} which were formulated for Bloom algebras, but hold for corecursive algebras as well. The condition ${\bf Set}(Y,HY) \neq \emptyset$ is automatic for all set functors except the constant functor on $\emptyset$, and (ii) is trivial in that case.

\smallskip
\noindent
${\rm (ii)\Rightarrow(i)}$: Precisely as in the proof of Theorem \ref{free cor exists} we need to prove that $T\oplus FY$ exists and is a subalgebra of a corecursive algebra. The existence is clear, since $\mathsf{Alg}\,H$ is cocomplete. Indeed, the existence of free algebras implies that $H$ generates a free monad $\mathbb{F}$, and then $\mathsf{Alg}\,H$ is equivalent to the category ${\bf Set}^{\mathbb F}$ of Eilenberg-Moore algebras for $\mathbb F$, see \cite{b}. And monadic algebras over {\bf Set} always form cocomplete categories.
We prove that the chain from Construction~\ref{free core.chain}
converges, thus, from Proposition \ref{cor chain converges U=T+FY} we
derive $T\oplus FY=U_\gamma$ for some ordinal $\gamma$. We apply Lemma
\ref{fixed point lemma} to the coproduct
$(C,c)=(T,\tau^{-1})\oplus(FY,\phi_Y)$ with injections
$\overline{\mathsf{inl}}, \overline {\mathsf{inr}}$: put
$d=\mathsf{inr}\cdot \eta_Y:Y\to C$, then $C=HC+Y$ with injections
$c,d$. Let us define a cocone $m_i:U_i\to C$ of the chain $U_i$ by $m_0=\overline{\mathsf{inl}}:T\to C$ and
$$\xymatrix{
m_{i+1}\equiv HU_i+Y\ar[rr]^-{Hm_i+id}&&HC+Y\ar[r]^-{[c,d]}&C.
}$$
We verify that $m_i$ form a cocone: the first triangle commutes since $\overline {\mathsf{inl}}$ is a homomorphism:
$$
\xymatrix{
  T
  \ar[r]^-{\tau}
  \ar[ddr]_-{m_0=\overline{\mathsf{inl}}}
  &
  HT
  \ar[rr]^-{\mathsf{inl}}
  \ar[d]_-{H\overline{\mathsf{inl}}}
  &&
  HT+Y
  \ar[ld]^-{H\overline{\mathsf{inl}}+Y}
  \ar `d[dd] [ddll]^-{m_1}
  \\
  &
  HC\ar[r]^-{\mathsf{inl}}\ar[d]_-c&HC+Y\ar[ld]^-{[c,d]}
  \\
  &
  C && }
$$
If the $i$-th triangle commutes, it is immediately seen that
the next one commutes, too. Furthermore, each $m_i$ is a monomorphism. Indeed,
$m_0=\overline{\mathsf{inl}}$ is a split monomorphism: choose any
morphism $f_0:Y\to T$ in $\mathbf {Set}$ and extend it to a homomorphism
$f:(FY,\phi_Y)\to (T,\tau^{-1})$, then $[id_T,f]$ splits
$\overline{\mathsf{inl}}$. If $m_i$ is a monomorphism, then so is
$m_{i+1}$ because $[c,d]$ is an isomorphism. And limit steps are clear.

Since $C$ has only a set of subobjects,\smnote{referee asked to show details of proof of $u_{\gamma,\gamma+1}$ iso} we know that there exist $\delta > \gamma$ such that $m_\gamma$ and $m_\delta$ represent the same subobject, i.\,e., $u_{\gamma,\delta}: U_\gamma \to U_\delta$ is an isomorphism. Then also $u_{\gamma+1,\delta+1}: U_{\gamma+1} \to U_{\delta+1}$ is clearly an isomorphism with an inverse $u': U_{\delta+1} \to U_{\gamma+1}$, say. It follows that $u_{\delta, \delta+1}$ is an isomorphism, too: it is a monomorphism since $m_{\delta+1} \cdot u_{\delta,\delta+1} = m_\delta$ and a split epimorphism since
$
u_{\delta, \delta+1} \cdot u_{\gamma+1,\delta} \cdot u' = id. 
$
Thus we conclude that $T\oplus FY=U_\delta$.

The rest of the proof is analogous to the proof of Theorem
\ref{free cor exists}, except that we are not using the terminal
coalgebra $TY$ of $H(-)+Y$ (which need not exist), but one of its
``approximations" given by the members $\overline W_i$ of the terminal
chain. I more detail, for every ordinal $i$ the algebra $\overline
w_{i+1,i}:H\overline W_i+Y\rightarrow \overline W_i$ for $H(-)+Y$ is
corecursive (apply Example \ref{exam of core}(5) for $H(-)+Y$). It is
easy to derive that $\overline W_i$ is corecursive also when
considered as an algebra for $H$:
$$\xymatrix{
H\overline W_i\ar[r]^-{\mathsf{inr}} &H\overline W_i+Y\ar[r]^-{\overline w_{i+1,i}}& \overline{W_i}
}$$
Now the existence of the terminal coalgebra $T$ for $H$ implies, as proved in \cite{ak2}, that the terminal chain $W$ for $H$ converges. Let $\lambda$ be an ordinal with $T=W_\lambda$ and $\tau=w^{-1}_{\lambda+1,\lambda}$. As in the proof of Theorem \ref{free cor exists} we define a natural transformation $m_i:W_i\rightarrow \overline W_i$ by $$m_0=id_1\ \ \ \text{ and} \ \ \ m_{i+1}=\mathsf{inl}\cdot Hm_i$$
and after $\lambda$ steps obtain a homomorphism
$m_\lambda:T\rightarrow \overline W_\lambda$ of algebras for $H$ which is monic. And then we obtain a cocone of the corecursive chain $p_i:U_i\rightarrow \overline W_\lambda$ by
$$\xymatrix{ p_0 = m_\lambda &{\rm and}& p_{i+1}\equiv HU_i+Y\ar[rr]^-{Hp_i+id}&&H\overline W_\lambda+Y\ar[r]^-{\overline w_{\lambda+1,\lambda}}&\overline W_\lambda.}$$
Analogously to the proof of Theorem \ref{free cor exists}, if the corecursive chain converges after $\gamma$ steps, then
$$p_\gamma:T\oplus FY=U_\gamma\rightarrow \overline W_\lambda$$
is the desired monomorphism in $\mathsf{Alg}\,H$.
\end{proof}

\begin{nota}\label{4.13}
Let $H$ have free corecursive algebras $MY$. Denote by $$\delta_Y:HMY\rightarrow MY$$ the algebra structure and by $$\eta_Y:Y\rightarrow MY$$ the universal map. Then we obtain a unique homomorphism $$\mu_Y:(MMY,\delta_{MY})\rightarrow (MY,\delta_Y)$$ with $\mu_Y\cdot\eta_{MY}=id$:
\begin{equation}\label{mu is homo}\vcenter{
\xymatrix{
HMMY\ar[r]^-{\delta_{MY}}\ar[d]_{H\mu_Y}&MMY\ar[d]^{\mu_Y}&MY\ar[l]_-{\eta_Y}\ar[ld]^{id_{MY}}\\
HMY\ar[r]_-{\delta_Y}&MY
}}
\end{equation}
The triple $$\mathbb{M}=(M,\mu,\eta)$$ is the monad generated by the adjoint situation  
$$
\xymatrix@1{
\mathsf{Alg}_C\,H \ar@<-5pt>[r] 
\ar@{}[r]|-\perp
& 
\mathcal A
\ar@<-5pt>[l]
}.
$$
\end{nota}

\begin{exa}\label{4.14}
We have $$MY={\mathbb N}\times Y+1$$
for the identity functor on {\bf Set}, see Example \ref{some exams}(1). And for $HX=X\times X$ we have 
$$MY=\text{binary trees with finitely many leaves, all of which are labelled in $Y$,}$$
 see Example \ref{some exams}(3). The functor $HX=\coprod_{n<\omega} X^n$ generates the monad
$$MY=\text{finitely branching trees with finitely many leaves labelled in $Y$}$$ cf. Example \ref{some exams}(4) for $\Sigma$ with one $n$-ary operation for every $n<\omega$.
\end{exa}

\begin{rem}\label{4.15}
Since $\mu_Y$ is, by definition, a homomorphism:
$$\mu_Y\cdot\delta_{MY}=\delta_Y\cdot H\mu_Y$$
the unit law $\mu_Y\cdot\eta_{MY}=id$ yields
$$\delta_Y=\mu_Y\cdot\delta_{MY}\cdot H\eta_{MY}.$$
It easy to prove that the $\delta_Y$ are the components of a natural transformation $\delta:HM\rightarrow M$. 
\end{rem}

\begin{thm}\label{the bloom alg=eilenberg moor off M_H}
Let $\mathcal A$ be a locally presentable category with constructive
monomorphisms, and let $H: \mathcal A \to \mathcal A$ be an accessible
endofunctor preserving monomorphisms. Then Bloom algebras are
precisely the Eilenberg-Moore algebras for $\mathbb M$, i.e., the
category $\mathsf{Alg}_B\,H$ is isomorphic to ${\mathcal A}^{\mathbb  M}$. 
\end{thm}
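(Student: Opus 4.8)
The plan is to exhibit the forgetful functor $U_B\colon \mathsf{Alg}_B\,H\to\mathcal A$ as a monadic functor whose induced monad is precisely $\mathbb M$; the theorem then follows at once, since the comparison functor into $\mathcal A^{\mathbb M}$ is an isomorphism. First I would record that, under the present hypotheses, $U_B$ has a left adjoint $F_B$ sending $Y$ to the free Bloom algebra $T\oplus FY$ (Corollary~\ref{free B exists}), and that this \emph{coincides} with the free corecursive algebra $MY$ of Theorem~\ref{free cor exists}, with the same universal arrow $\eta_Y$. Writing $J\colon\mathsf{Alg}_C\,H\hookrightarrow\mathsf{Alg}_B\,H$ for the full inclusion and $F_C\dashv U_C$ for the corecursive adjunction of Notation~\ref{4.13}, the identities $U_BJ=U_C$ and $JF_C=F_B$ hold: the latter because the free corecursive algebra $T\oplus FY$ is itself corecursive and so carries exactly the unique Bloom structure of the free Bloom algebra. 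Using Lemma~\ref{homos are solution prese}, every counit component $\epsilon^C$ of the corecursive adjunction is already a Bloom homomorphism, whence $J\epsilon^C=\epsilon^BJ$ by the universal property of free Bloom algebras. Evaluating $U_B\epsilon^BF_B$ at an object then yields exactly the multiplication $\mu$ of Notation~\ref{4.13}, so the monad generated by $U_B$ equals $\mathbb M$ on the nose (underlying functor, unit and multiplication).

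It remains to prove that $U_B$ is monadic. Here I would invoke Proposition~\ref{Bloom cat=slice cat} to identify $U_B$, under the isomorphism $\mathsf{Alg}_B\,H\cong(T,\tau^{-1})/\mathsf{Alg}\,H$, with the composite
$$
\mathsf{Alg}_B\,H\ \cong\ (T,\tau^{-1})/\mathsf{Alg}\,H\ \xrightarrow{\ P\ }\ \mathsf{Alg}\,H\ \xrightarrow{\ U\ }\ \mathcal A,
$$
where $P$ is the projection of the coslice (undercategory) and $U$ is the usual forgetful functor. Since free algebras exist, $H$ generates a free monad $\mathbb F$, so $\mathsf{Alg}\,H\cong\mathcal A^{\mathbb F}$ and $U$ is monadic; in particular $U$ creates coequalizers of $U$-split pairs. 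The projection $P$ of an undercategory creates connected colimits, and a parallel-pair diagram is connected, so $P$ creates coequalizers. Composing, $U_B=U\circ P$ creates coequalizers of $U_B$-split pairs, and $U_B$ reflects isomorphisms (a solution-preserving homomorphism invertible in $\mathcal A$ has an inverse which is again solution preserving). By Beck's monadicity theorem, $U_B$ is monadic.

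Combining the two parts, the comparison functor $\mathsf{Alg}_B\,H\to\mathcal A^{\mathbb M}$ is an isomorphism of categories, which is the assertion.

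I expect the main obstacle to be the bookkeeping of the first step: verifying that the monad generated by the \emph{Bloom}-algebra adjunction agrees \emph{as a monad}, not merely on underlying functors, with the monad $\mathbb M$ built from \emph{corecursive} algebras. The delicate point is the compatibility of the two counits, which rests entirely on the fact that homomorphisms between corecursive algebras automatically preserve solutions (Lemma~\ref{homos are solution prese}); this is what forces the free objects $T\oplus FY$ and their universal arrows to coincide in the two settings, and hence makes the multiplications agree.
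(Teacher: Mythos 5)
Your proposal is correct, and its second half takes a genuinely different route from the paper. The first step (identifying the monad of the Bloom adjunction with $\mathbb M$) is the same as in the paper, which disposes of it in one sentence; your more careful verification via $U_BJ=U_C$, $JF_C=F_B$ and the compatibility of counits through Lemma~\ref{homos are solution prese} is exactly the bookkeeping the paper leaves implicit, and it is sound: since $MMY$ and $MY$ are corecursive, every algebra homomorphism between them preserves solutions, so the two counit components (hence the two multiplications) coincide. For monadicity, the paper argues directly: given a $U_B$-split pair, it uses monadicity of $\mathsf{Alg}\,H$ over $\mathcal A$ (via the free monad on $H$) to create the coequalizer as an $H$-algebra, then invokes Lemma~\ref{A is B then B is B} to transport the Bloom structure along the coequalizer map and checks the universal property by hand. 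You instead factor $U_B$ through the coslice $(T,\tau^{-1})/\mathsf{Alg}\,H$ using Proposition~\ref{Bloom cat=slice cat} and compose two creation statements: $U$ creates coequalizers of $U$-split pairs (same monadicity input as the paper), and the coslice projection creates connected colimits, of which a coequalizer is one. This is slicker and replaces the explicit verification with standard general nonsense; its only extra dependency is the existence of the terminal coalgebra $T$ needed for Proposition~\ref{Bloom cat=slice cat}, which is guaranteed under the stated hypotheses (and more generally whenever free Bloom algebras exist, by Lemma~\ref{free B-> final coalg}), so nothing is lost. Both arguments ultimately rest on the same two pillars -- monadicity of $\mathsf{Alg}\,H$ over $\mathcal A$ and the transport of Bloom structure along the terminal-coalgebra leg -- but you package the latter through the coslice description rather than through Lemma~\ref{A is B then B is B}.
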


\begin{proof}
The forgetful functors $U_B:\mathsf{Alg}_B\,\rightarrow \mathcal A$ and $U_C:\mathsf{Alg}_C\,H\rightarrow \mathcal A$ are both right adjoints, and their left adjoints are, in  both cases, defined by $Y\mapsto T\oplus FY$, see Theorem \ref{thm:3.16} and Proposition \ref{cor chain converges U=T+FY}. Therefore the monads generated by $U_B$ and $U_C$ are the same, namely the monad $\mathbb M$. We are going to prove that the comparison functor from $\mathsf{Alg}_B\,H$ to ${\mathcal A}^{\mathbb M}$ is an isomorphism. By Beck's Theorem, see \cite[4.4.4]{bor}, all we need to prove is that $U_B$ creates coequalizers of $U_B$-split pairs. That is, we need to prove that given a parallel pair of solution preserving homomorphisms
$$f,g:(A,a,\dagger)\rightarrow (B,b,\ddagger)$$
and given morphisms $k,s,t$ in $\mathcal A$ as follows:
$$k:B\rightarrow C\ \ \ \text{with}\ \ \ k\cdot f=k\cdot g$$
$$s:C\rightarrow B\ \ \ \text{with}\ \ \ k\cdot s=id_A$$ and
$$t:B\rightarrow A\ \ \ \text{with}\ \ \ s\cdot k=f\cdot t\ \ \ \text{and}\ \ \ id_B=g\cdot t$$
there exists a unique structure $(C,c,*)$ of a Bloom algebra such that $k$ is a solution preserving homomorphism; moreover, $k$ is then a coequalizer in $\mathsf{Alg}_B\,H$. Since $H$ is accessible, free $H$-algebras exist, see Corollary \ref{c-access}. The corresponding monad  is the free monad on $H$ and its Eilenberg-Moore algebras are precisely the $H$-algebras, see \cite{b}. Consequently, by Beck's Theorem there exists a unique structure $c:HC\rightarrow C$ of an algebra for which $k:(B,b)\rightarrow (C,c)$ is a coequalizer in $\mathsf{Alg}\,H$.
 And by Lemma \ref{A is B then B is B}  there exists a unique
 structure $(C,c,*)$ of a Bloom algebra for which $k$ is a solution
 preserving algebra homomorphism. It remains to verify that $k$ is a coequalizer in $\mathsf{Alg}_B\,H$. To this end, let
$$h:(B,b,\ddagger)\rightarrow (D,d,+)$$ be a solution preserving
algebra homomorphism with $h\cdot f=h\cdot g$. There exists a unique homomorphism $h':(C,c)\rightarrow (D,d)$ with $h=h'\cdot k$. And $h'$ preserves solutions (that is for every $e:X\rightarrow HX$ we have $h'\cdot e^*=e^+$) because both $k$ and $h$ do:
$$h'\cdot e^*=h'\cdot k\cdot e^\ddagger=h\cdot e^\ddagger=e^+.$$

\vspace*{-17pt}
\end{proof}

\begin{rem}
In the case $\mathcal A=\mathbf{Set}$ we do not need to assume accessibility: whenever an endofunctor $H$ has free corecursive algebras, then Bloom algebras are precisely the algebras for the monad $\mathbb M$. The proof just uses Theorem \ref{equivalence 1} in lieu of \ref{cor chain converges U=T+FY}.
\end{rem}

\begin{exa}
A non-accessible set functor having free corecursive algebras. Let $\mathbb A$ and $\mathbb B$ be proper classes of infinite cardinals such that for every cardinal $\alpha \in \mathbb A$ the interval $[\alpha,2^\alpha]$ is disjoint with $\mathbb B$. The functor $H$ assigns to every set $X$ the Set
$$HX=\{M\subseteq X\ |\ \text{card $M\in \mathbb B$ or $M=\emptyset$}\}$$
and to every function $f:X\to Y$ the following function
$$Hf(M)=\left \{ \begin{array}{ll}
f[M]& \text{if $f$ is monic when restricted to $M$}\\
\emptyset& \text{otherwise}
\end{array} \right. $$

Then $H$ has the terminal coalgebra $1$ (since $H1=1$). It also has free
algebras. Indeed, this follows from the fact that $H$ has arbitrary large pre-fixpoints (see the
proof of Theorem~\ref{equivalence 1}): for every cardinal $\alpha\in \mathbb A$ the cardinal $2^\alpha$ is a pre-fixpoint due to
$$\card H(2^\alpha)\leq1+\Sigma_{\beta\in \mathbb{B},\beta\leq 2^\alpha} (2^\alpha)^\beta\leq\Sigma_{\beta<\alpha} (2^\alpha)^\beta\leq\alpha\cdot(2^\alpha)^\omega=2^\alpha.$$
However, $H$ is not $\lambda$-accessible for any $\lambda$: choose
$\beta\in \mathbb B$ with $\beta\geq\lambda$, then since $\beta\in
H\beta$ we see that $H$ does not preserve the lambda-directed colimit
of $\beta$ as the union of all subsets of $\beta$ of cardinality less than $\lambda$.
\end{exa}

\section{Corecursive monads}
The iterative theories (or iterative monads) of C. Elgot \cite{elgot} were introduced as a formalization of iteration in an algebraic setting, and in \cite{ebt} completely iterative theories were studied. We first recall the concept of a completely iterative monad, and then introduce the weaker concept of a corecursive monad. The relationship between these two concepts is analogous to the relationship between cia's, see Remark 2.7, and corecursive algebras. The following definition is, for the base category {\bf Set}, equivalent to completely iterative theories, as shown in \cite{aamv}.

\begin{defi}(See \cite{aamv})\label{def ideal monad}
\begin{enumerate}
\item An {\it ideal monad} is a sixtuple $$\mathbb{S}=(S,\eta,\mu,S',\sigma,\mu')$$
consisting of a monad $(S,\eta,\mu)$, a subfunctor $\sigma:S'\rightarrow S$ (called the {\it ideal of $\mathbb S$} ) such that $S=S'+Id$ with injections $\sigma$ and $\eta$, and a natural transformation $\mu':S'S\rightarrow S'$ restricting $\mu$, i.e., with $\sigma\cdot\mu'=\mu\cdot\sigma S.$

\item An {\it equation morphism with parameters}  for $\mathbb S$ is a morphism $e:X\rightarrow S(X+Y)$, we call $X$ the variables and $Y$ the parameters of $e$.  It is called {\it ideal} if it factorizes through $\sigma_{X+Y}$. A solution of $e$ is a morphism $e^\dagger :X\rightarrow SY$ such that the following square commutes:
\begin{equation}\label{5.1}\vcenter{
\xymatrix@C+1pc{
X\ar[r]^-{e^\dagger}\ar[d]_e&SY\\
S(X+Y)\ar[r]_-{S[e^\dagger,\eta_Y]}&SSY\ar[u]_{\mu_Y}
}}
\end{equation}
\item An ideal monad is called {\it completely iterative} provided that every ideal equation morphism has a unique solution.
\end{enumerate}
\end{defi}

\begin{exa}(See \cite{aamv})
Let $H$  be an endofunctor of $\mathcal A$ such that for every object $Y$ a terminal coalgebra $TY$ of $H(-)+Y$ exists. Then the assignment $Y\mapsto TY$ yields a monad $(T,\eta,\mu)$, which is the monad of free cia's for $H$. This is an ideal monad w.r.t. $T'=HT$ and $\mu'=H\mu$. Moreover, this monad is completely iterative, indeed, the free completely iterative monad on $H$.

For example the set functor $HX=X\times X$ generates the free completely iterative monad $\mathbb T$ with
$$TY=\text{all binary trees with leaves labelled in $Y$.}$$
\end{exa}

\begin{defi}\label{def corecursive monad}
 Let $\mathbb S$ be an ideal monad. An {\it equation morphism (without parameters)} is a morphism $e:X\rightarrow SX$, and $e$ is called {\it ideal} if it factorizes through $\sigma_X$, i.\,e., there exist $e_0:X\rightarrow S'X$ such that the diagram below commutes:
$$
 \xymatrix{
 X\ar[r]^e\ar@{-->}[rd]_-{e_0}& SX\\
                       &S'X\ar[u]_{\sigma_X}
 }
 $$
The monad $\mathbb S$ is called {\it corecursive} if every ideal
equation morphism $e$ has a unique solution $e^\dagger: X \to SY$, i.\,e.,
the square below commutes:
\[
\xymatrix{
  X 
  \ar[r]^-{e^\dagger}
  \ar[d]_e
  &
  SY
  \\
  SX
  \ar[r]_-{Se^\dagger}
  &
  SSY
  \ar[u]_{\mu_Y}
}
\]
\end{defi}

\begin{exa}\label{5.4}
Examples of corecursive monads on {\bf Set}.
\begin{enumerate}
\item All the monads of Example \ref{some exams} are corecursive, as we will see in  Theorem \ref{M is free core monad} below.

\item All completely iterative monads are corecursive, e.g,
$$SY=\text{all finitely branching trees with leaves labelled in $Y$},$$
This is the free completely iterative monad on the functor $HX=\coprod_{n<\omega} X^n$.

\item Consider the monad $$RY=\text{all rational, finitely branching trees with leaves labelled in $Y$,}$$
where \emph{rational} means that the tree has up to isomorphism only finitely many subtrees. This is a corecursive monad that is neither free on any endofunctor, nor completely iterative.

\item More generally, every submonad of $S$ in item (2) containing the
  complete binary tree  is corecursive.
\end{enumerate}
\end{exa}

\begin{prop}\label{M is ideal monad}
The monad $\mathbb{M}=(M,\eta,\mu)$ of  free corecursive algebras (of Notation \ref{4.13}) is  ideal w.r.t. the ideal $M'=HM$ where $\sigma= \delta:HM\rightarrow M$ and $\mu'=H\mu:HMM\rightarrow HM$.
\end{prop}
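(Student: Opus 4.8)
The plan is to verify directly that $\mathbb{M}=(M,\eta,\mu)$ satisfies the three defining conditions of an ideal monad from Definition~\ref{def ideal monad}(1), namely that $\delta:HM\rightarrow M$ exhibits $HM$ as a subfunctor of $M$, that $M=HM+Id$ with injections $\delta$ and $\eta$, and that $\mu'=H\mu$ restricts $\mu$ along $\sigma=\delta$. The coproduct decomposition is the main point, and it is already essentially supplied by the construction of the free corecursive algebra. Indeed, recall from Proposition~\ref{cor chain converges U=T+FY} (and the convergence of the corecursive chain) that the free corecursive algebra satisfies $MY=HMY+Y$ with the algebra structure $\delta_Y:HMY\rightarrow MY$ as the left-hand coproduct injection and the universal map $\eta_Y:Y\rightarrow MY$ as the right-hand one. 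This is precisely the equation $M=HM+Id$ with injections $\delta$ and $\eta$, so the decomposition comes for free from the fixpoint property of $MY$.

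First I would record that $\delta:HM\rightarrow M$ is a natural transformation; this is exactly the content of Remark~\ref{4.15}, where the $\delta_Y$ are shown to form a natural transformation $\delta:HM\rightarrow M$. Since each $\delta_Y$ is a coproduct injection into $MY=HMY+Y$, and coproduct injections are monomorphisms, $\delta$ is a monomorphism componentwise and hence exhibits $M'=HM$ as a subfunctor of $M$. This handles the first condition and, together with the previous paragraph, the second.

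The remaining task is to check the restriction law $\sigma\cdot\mu'=\mu\cdot\sigma M$, i.e.\ with the proposed data $\delta\cdot H\mu = \mu\cdot\delta M$. But this is simply the statement that $\mu_Y:(MMY,\delta_{MY})\rightarrow(MY,\delta_Y)$ is an $H$-algebra homomorphism, which holds by the very definition of $\mu_Y$ in Notation~\ref{4.13}: the defining diagram~(\ref{mu is homo}) gives $\mu_Y\cdot\delta_{MY}=\delta_Y\cdot H\mu_Y$, and writing $\sigma=\delta$, $\mu'=H\mu$ this is exactly $\sigma\cdot\mu'=\mu\cdot\sigma M$ evaluated at $Y$. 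I expect no genuine obstacle here: every clause of the definition is a repackaging of facts already established, the only mild point being to confirm that $H\mu$ indeed has the correct type $HMM\rightarrow HM$ and is natural, which follows immediately from naturality of $\mu$ by applying the functor $H$. Thus the sixtuple $(M,\eta,\mu,HM,\delta,H\mu)$ is an ideal monad.
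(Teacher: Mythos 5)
Your handling of the restriction law and of the naturality of $\delta$ is correct and matches the paper: both are read off from Diagram~(\ref{mu is homo}) and Remark~\ref{4.15}. The gap is in the coproduct decomposition $MY=HMY+Y$, which you claim ``comes for free'' from Proposition~\ref{cor chain converges U=T+FY} and the convergence of the corecursive chain. Proposition~\ref{M is ideal monad} is stated in the generality of Notation~\ref{4.13}: the only hypothesis is that free corecursive algebras exist. It does not assume that $\mathcal A$ is cocomplete and wellpowered with constructive monomorphisms, that $H$ preserves monomorphisms and has a terminal coalgebra, or that the corecursive chain converges --- all of which are hypotheses of Proposition~\ref{cor chain converges U=T+FY}. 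This generality matters: the proposition feeds into Theorem~\ref{M is free core monad}, which is asserted for every endofunctor with free corecursive algebras, including the non-accessible set functor exhibited later in the paper. Even where the chain construction is available, you would still need to check that the algebra structure $\delta_Y$ and the universal arrow $\eta_Y$ of the \emph{free corecursive algebra} coincide with the two coproduct injections of the fixpoint $U_\lambda=HU_\lambda+Y$; the identification of $\eta_Y$ with $\mathsf{inr}$ is extracted from the proof of Proposition~\ref{cor chain converges U=T+FY}, not from its statement.

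The decomposition therefore requires an argument, and the paper gives one that works in the stated generality and uses only the universal property: by Lemma~\ref{generalized} (applied to $(MY,\delta_Y)$ and $f=\eta_Y$) the algebra $\mathsf{inl}\cdot H[\delta_Y,\eta_Y]\colon H(HMY+Y)\to HMY+Y$ is again corecursive, so freeness of $MY$ yields a unique homomorphism $\overline{\mathsf{inr}}\colon MY\to HMY+Y$ with $\overline{\mathsf{inr}}\cdot\eta_Y=\mathsf{inr}$, and a short diagram chase shows that $\overline{\mathsf{inr}}$ and $[\delta_Y,\eta_Y]$ are mutually inverse. You should replace the appeal to the chain by this argument (or else weaken the proposition by adding the chain hypotheses, which would in turn weaken everything downstream). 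A smaller point: you justify that $\delta$ is a subfunctor by saying that coproduct injections are monomorphisms; this is not automatic in an arbitrary category, and while it is harmless in the settings the paper actually uses, it is an additional assumption rather than a formality.
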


\begin{proof}
(1) We prove $$MY=HMY+Y$$ with injections $\delta_Y$ and
$\eta_Y$. First apply Lemma \ref{generalized} to the corecursive
algebra $MY$ and $\eta_Y:Y\rightarrow MY$ to see that $HMY+Y$ is a
corecursive algebra, too. Using the freeness of $MY$ we see that for
the right-hand injection morphism $\mathsf{inr}:Y\rightarrow HMY+Y$
there exists a unique homomorphism
$\overline{\mathsf{inr}}:MY\rightarrow HMY+Y$ such that
$\overline{\mathsf{inr}} \o \eta_Y = \mathsf{inr}$. In order to prove
that $\overline{\mathsf{inr}}$ is an inverse for $[\delta_Y,\eta_Y]$
we consider the following diagram 
\begin{equation}\label{4}\vcenter{
\xymatrix{
Y\ar[r]^{\eta_Y}\ar[rd]^{\mathsf{inr}}\ar[rdd]_{\eta_Y}&MY\ar[d]^{\overline{\mathsf{inr}}}&&HMY\ar[ll]_-{\delta_Y}\ar[d]^{H\overline{\mathsf{inr}}}\\
&HMY+Y\ar[d]^{[\delta_Y,\eta_Y]}&&H(HMY+Y)\ar[d]^{H[\delta_Y,\eta_Y]}\ar[ll]^-{\mathsf{inl}\cdot H[\delta_Y,\eta_Y]}\\
&MY&&HMY\ar[ll]^-{\delta_Y}
}}
\end{equation}
The upper square and triangle commute by the definition of
$\overline{\mathsf{inr}}$, and the lower triangle and square obviously do. Thus $[\delta_Y,\eta_Y]$ is homomorphism, and the composition of the two squares clearly commutes. So $[\delta_Y,\eta_Y]\cdot\overline{\mathsf{inr}}=id_{MY}$ follows from the universal property of $\eta_Y$. Now the upper square of (\ref{4}) yields $\overline{\mathsf{inr}}\cdot\delta_Y=\mathsf{inl}$, consequently $(\overline{\mathsf{inr}}\cdot[\delta_Y,\eta_Y])\cdot\mathsf{inl}=\overline{\mathsf{inr}}\cdot\delta_Y=\mathsf{inl}$. Since also $(\overline{\mathsf{inr}}\cdot[\delta_Y,\eta_Y])\cdot\mathsf{inr}=\overline{\mathsf{inr}}\cdot\eta_Y=\mathsf{inr}$,
we conclude $\overline{\mathsf{inr}}\cdot[\delta_Y,\eta_Y]=id.$

(2) It remains to prove $\sigma\cdot\mu'=\mu\cdot \sigma S$, that is
$\delta\cdot H\mu=\mu\cdot \delta M$, which one obtains from Diagram~(\ref{mu is homo}).
\end{proof}

\begin{thm}
The monad $\mathbb M$ of free corecursive algebras is corecursive.
\end{thm}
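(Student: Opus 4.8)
The plan is to reduce corecursiveness of the monad $\mathbb M$ to the corecursiveness of the individual free algebras $MY$, which is immediate since each $MY$ is itself a corecursive $H$-algebra. Fix an object $Y$ and an \emph{ideal} equation morphism $e\colon X\to MX$; by Definition~\ref{def corecursive monad} and Proposition~\ref{M is ideal monad} this means $e=\delta_X\cdot e_0$ for a unique $e_0\colon X\to HMX$. The first ingredient is that, again by Proposition~\ref{M is ideal monad}, we have a coproduct $MX=HMX+X$ with injections $\delta_X$ and $\eta_X$. I use this to define an $H$-coalgebra
$$c\equiv[\,\mathrm{id}_{HMX},\,e_0\,]\colon MX=HMX+X\longrightarrow HMX,$$
so that $c\cdot\delta_X=\mathrm{id}_{HMX}$ and $c\cdot\eta_X=e_0$.

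Since $MY$ is corecursive, the coalgebra $(MX,c)$ admits a unique coalgebra-to-algebra morphism $h\colon MX\to MY$, i.e. $h=\delta_Y\cdot Hh\cdot c$ (Definition~\ref{def of cor.alg}). I would then put $e^\dagger\equiv h\cdot\eta_X$ and claim this is the unique solution of $e$ in $MY$. Two observations make $h$ and $e^\dagger$ interchangeable. First, precomposing $h=\delta_Y\cdot Hh\cdot c$ with $\delta_X$ and using $c\cdot\delta_X=\mathrm{id}$ gives $h\cdot\delta_X=\delta_Y\cdot Hh$, so for this particular $c$ any coalgebra-to-algebra morphism is automatically an $H$-algebra homomorphism $(MX,\delta_X)\to(MY,\delta_Y)$. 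Second, since $MX$ is the free algebra on $X$, such a homomorphism is determined by its restriction along $\eta_X$; with the monad unit law $\mu_Y\cdot\eta_{MY}=\mathrm{id}$ from Notation~\ref{4.13} and naturality of $\eta$ this forces $h=\mu_Y\cdot M(h\cdot\eta_X)=\mu_Y\cdot Me^\dagger$.

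For existence I would check that the monad solution identity $e^\dagger=\mu_Y\cdot Me^\dagger\cdot e$ is precisely the $\eta_X$-component of the defining equation of $h$: using $h=\mu_Y\cdot Me^\dagger$ and that $h$ is a homomorphism, $\mu_Y\cdot Me^\dagger\cdot e=h\cdot\delta_X\cdot e_0=\delta_Y\cdot Hh\cdot e_0=h\cdot\eta_X=e^\dagger$, where the last equality reads $h=\delta_Y\cdot Hh\cdot c$ off the summand $X$ via $c\cdot\eta_X=e_0$. For uniqueness I would run the correspondence backwards: from an arbitrary solution $e^\dagger$ form $\hat e\equiv\mu_Y\cdot Me^\dagger$, a composite of the homomorphisms $Me^\dagger$ and $\mu_Y$ (the latter by Diagram~\eqref{mu is homo}); then $\hat e\cdot\eta_X=e^\dagger$ and $\hat e\cdot e=e^\dagger$, and these two equalities say exactly that $\hat e$ is a coalgebra-to-algebra morphism for $(MX,c)$ (the $HMX$-summand is the homomorphism law, the $X$-summand is $\delta_Y\cdot H\hat e\cdot e_0=\hat e\cdot\eta_X$). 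By uniqueness of $h$ we get $\hat e=h$, whence $e^\dagger=\hat e\cdot\eta_X$ is uniquely determined.

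The genuinely load-bearing idea is the choice of the coalgebra $c$ on $MX$ assembled from the ideal factorization $e=\delta_X\cdot e_0$ and the decomposition $MX=HMX+X$; everything else is diagram chasing. The step to watch is the bijection $e^\dagger\leftrightarrow h$: that every coalgebra-to-algebra morphism for this $c$ is forced to be an algebra homomorphism, and that freeness of $MX$ together with the monad laws make $e^\dagger\mapsto\mu_Y\cdot Me^\dagger$ and $h\mapsto h\cdot\eta_X$ mutually inverse. Once that bijection is established, existence and uniqueness of $e^\dagger$ are inherited verbatim from the corecursiveness of $MY$.
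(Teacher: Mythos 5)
Your proof is correct and follows essentially the same route as the paper: your coalgebra $c=[\mathrm{id}_{HMX},e_0]$ on $MX=HMX+X$ is exactly the paper's auxiliary equation morphism $\overline e=[HMX,e_0]\cdot[\delta_X,\eta_X]^{-1}$, and your correspondence $h\mapsto h\cdot\eta_X$, $e^\dagger\mapsto\mu_Y\cdot Me^\dagger$ is the paper's $s\leftrightarrow e^\dagger$ with $\mu_Y$ playing the role of the Eilenberg--Moore structure $\overline a$. The only (harmless) difference is that the paper proves the slightly stronger statement for an arbitrary corecursive algebra $(A,a)$ in place of $MY$.
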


\begin{proof}
We need to prove that every ideal equation morphism
$e=\delta_{X}\cdot e_0$ where $e_0:X\rightarrow HMX$  has a unique
solution in all algebras $(MY,\mu_Y)$. We prove a stronger statement:
if $a:HA\rightarrow A$ is a corecursive algebra, then $e$ has a unique
solution in $A$. That means that there exists a unique morphism $e^\dagger:X\rightarrow A$ such that the square
$$
\xymatrix{
X\ar[r]^{e^\dagger}\ar[d]_e&A\\
MX\ar[r]_{Me^\dagger}&MA\ar[u]_{\overline a}
}
$$
commutes, where $\overline a$ is  the corresponding Eilenberg-Moore algebra structure (the unique homomorphism of $H$-algebras from $MA$ to $A$ such that  $\overline a\cdot \eta_A=id_A$).
Form the equation morphism
$$ \xymatrix@C+2pc{\overline e\equiv MX\ar[r]^-{[\delta_X,\eta_X]^{-1}}&HMX+X\ar[r]^-{[HMX,e_0]}&HMX.}$$ 
There is a unique solution $s$ of $\overline e$:
\begin{equation}\label{5}\vcenter{
\xymatrix{
MX\ar[r]^{s}\ar@/_/[d]_{[\delta_{X},\eta_X]^{-1}}&A\\
HMX+X\ar@/_/[u]_{[\delta_{X},\eta_X]}\ar[d]_{[HMX,e_0]}\\
HMX\ar[r]_{Hs}&HA\ar[uu]_{a\ \ \ \ \ \ \ }
}}
\end{equation}
Inspecting the two coproduct components of $HMX+X$ separately, we see
that the commutativity of (\ref{5}) is equivalent to the following two
equations: 

\begin{align}
s\cdot\delta_{X}&=a\cdot Hs\label{6}\\
s\cdot\eta_X&=a\cdot Hs\cdot e_0\label{7}
\end{align}

Equation (\ref{6})  states that $s$ is a homomorphism. Also since $\eta:Id\rightarrow M$ is a natural transformation we have the commutative diagram

$$
\xymatrix{
X\ar[d]^{\eta_X}\ar[rr]^{\eta_X}&&MX\ar[d]^{M\eta_X}\\
MX\ar[d]_{s}\ar[rr]^{\eta_{MX}}&&MMX\ar[d]^{Ms}\\
A\ar[rr]_{\eta_A}&&MA
}
$$
which gives us $\overline a\cdot Ms\cdot M\eta_X\cdot \eta_X=\overline
a\cdot \eta_A\cdot s\cdot \eta_X=s\cdot\eta_X$. From this the
universal property of $\eta_X$ implies the equation
 \begin{align}
\overline a\cdot Ms\cdot M\eta_X=s. &\label{8}
\end{align}
We prove that $e^\dagger=s\cdot \eta_X$ is a solution of $e$ by inspecting the diagram

\begin{equation}\label{5.2}\vcenter{
\xymatrix@+1pc{
X\ar[r]^{\eta_X}\ar[ddd]_e\ar[rdd]_{e_0}\ar[rd]^{\mathsf{inr}}&MX\ar[d]^-{[\delta_X,\eta_X]^{-1}}\ar[rr]^s
&&A \ar@{<-} `u[l] `[lll]_{e^\dagger} [lll]\\
&HMX+X\ar[d]|(.4){[HMX,e_0]}&HA\ar[ur]^{a}\\
&HMX\ar[ld]^{\delta_{MX}}\ar[ru]_{Hs}\ar[r]_{HMe^\dagger}&HMA\ar[u]^{H\overline a}\ar[dr]_{\delta_{MA}}\\
MX\ar[rrr]_{Me^\dagger}&&&MA\ar[uuu]_{\overline a}
}}
\end{equation}

The  left-hand triangle commutes because $e$ is  an ideal equation. The  upper right-hand triangle  is Diagram (\ref{5}), the middle  triangle commutes because of Equation (\ref{8}) and $e^\dagger = s \cdot \eta_X$, and the other parts commute obviously.

To show uniqueness, suppose that $e^\dagger$ is an arbitrary solution of $e$. We take $s=\overline a\cdot Me^\dagger:MX\rightarrow A$,  and prove that $s$ is a solution of $\overline e$. First we note that $s$ is an algebra homomorphism for $H$, since both $\overline a$ and $Me^\dagger$ are, thus Equation (\ref{6}) holds for $s$. We also have $s \cdot \eta_X=e^\dagger$ using the naturality of $\eta$:
\[
s\cdot \eta_X = \overline a \cdot Me^\dagger \cdot \eta_X = \overline a \cdot \eta_A \cdot e^\dagger = e^\dagger.
\]
Now all the parts in Diagram (\ref{5.2}) commute except, possibly, the upper right-hand triangle. So this triangle commutes when precomposed with $\eta_X:X\rightarrow MX$ from which we conclude Equation (\ref{7}). Combining (\ref{6}) and (\ref{7}) we have, equivalently, that Diagram (\ref{5}) commutes, i.\,e., $s$ is a solution of $\overline e$.
\end{proof}

\section{Free Corecursive Monad}

In this section we prove that the corecursive monad $\mathbb M$ given by the free corecursive algebras for $H$ is a free corecursive monad on $H$. For that we need the appropriate concept of morphism:

\begin{defi}\hfill
\begin{enumerate}[label=\({\alph*}]
\item An {\it ideal monad morphism} from an ideal monad
  $(S,\eta^S,\mu^S,S',\sigma,\mu'^S)$ to an ideal monad
  $(U,\eta^U,\mu^U,U',\omega,\mu'^U)$ is a pair consisting of a monad
  morphism $\lambda:(S,\eta^S,\mu^S)\rightarrow (U,\eta^U,\mu^U)$ and
  a natural transformation $\lambda':S'\rightarrow U'$ with
  $\lambda\cdot\sigma =\omega\cdot \lambda'$.

\item Given a functor $H$, a natural transformation $\lambda:H\rightarrow S$ is called {\it ideal}  if it factors through $\sigma:S'\rightarrow S$.

\item By a {\it free corecursive monad} on an endofunctor $H$ is meant a corecursive monad ${\mathbb S}=(S, \mu, \eta, S',\sigma,\mu')$ together with an ideal natural transformation $\kappa:H\rightarrow S$ with the following universal property:
For every ideal natural transformation $\lambda:H\rightarrow
\overline{\mathbb S}$, where $ \overline{\mathbb S}$ is a corecursive
monad, there exists a unique ideal monad morphism $(\hat{\lambda}, \hat\lambda'):{\mathbb S}\rightarrow \overline{\mathbb S}$ such that the triangle below commutes:
$$\xymatrix{
H\ar[r]^\kappa\ar[rd]_\lambda&S\ar[d]^{\hat{\lambda}}\\
&\overline{S}
}
$$
\end{enumerate}
\end{defi}

\begin{rem}
Let $\mathsf{CMon}(\mathcal A)$ denote the category of corecursive monads and ideal monad morphisms. We have a forgetful functor  to $\mathsf{Fun}(\mathcal A,\mathcal A)$, the category of all endofunctors of $\mathcal A$, assigning to every corecursive monad $\mathbb S$ its ideal $ S'$. A free corecursive monad on $H\in \mathsf{Fun}(\mathcal A, \mathcal A)$ is precisely a universal arrow from $H$ to the above forgetful functor.
\end{rem}

\begin{exa}\label{exam ideal monad}
If $H$ has free corecursive algebras, then we have the corecursive monad $\mathbb M$ of Proposition \ref{M is ideal monad}. And the natural transformation $$\kappa\equiv H\stackrel{H\eta}\longrightarrow HM \stackrel{\delta}\longrightarrow M$$ (see Notation \ref{4.13}) is obviously ideal. We prove that $\kappa$ has the universal property:
\end{exa}

\begin{thm}\label{M is free core monad}
If an endofunctor $H$ has free corecursive algebras, then the corresponding monad $\mathbb M$ is the free corecursive monad on $H$.
\end{thm}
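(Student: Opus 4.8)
The plan is to establish the universal property of the corecursive monad $\mathbb M$ directly. Suppose $\overline{\mathbb S}=(S,\eta^S,\mu^S,S',\omega,\mu'^S)$ is an arbitrary corecursive monad together with an ideal natural transformation $\lambda:H\to S$, factoring as $\lambda=\omega\cdot\lambda_0$ for some $\lambda_0:H\to S'$. I must produce a unique ideal monad morphism $(\hat\lambda,\hat\lambda'):\mathbb M\to\overline{\mathbb S}$ with $\hat\lambda\cdot\kappa=\lambda$, where $\kappa=\delta\cdot H\eta$.

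First I would exploit the fact that each component $SY$ carries an $H$-algebra structure. Indeed, the ideal structure gives an injection $\omega_Y:S'Y\to SY$ with $SY=S'Y+Y$, and precomposing $\omega_Y\cdot\lambda_0'$ (a suitable whiskering of $\lambda_0$) yields a natural $H$-algebra structure $\gamma_Y:HSY\to SY$. The crucial point is that, equipped with this structure, each $(SY,\gamma_Y)$ is a corecursive $H$-algebra: an $H$-coalgebra $e:X\to HX$ induces an ideal equation morphism $\omega_X\cdot\lambda_{0,X}\cdot e:X\to SX$ for $\overline{\mathbb S}$ (ideal because it factors through $\omega$), whose unique solution in the corecursive monad sense gives the unique coalgebra-to-algebra morphism into $(SY,\gamma_Y)$. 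Once this is verified, the freeness of $MY=T\oplus FY$ as a corecursive algebra furnishes, for the universal map $\eta_Y^M:Y\to MY$ viewed via $\eta^S_Y:Y\to SY$, a unique $H$-algebra homomorphism $\hat\lambda_Y:MY\to SY$ whose defining equation at $Y$ reproduces $\eta^S$. Naturality of $\hat\lambda$ follows from the uniqueness built into the universal property of the free corecursive algebra.

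Next I would check that $\hat\lambda$ is a monad morphism and that it respects the ideals. The unit law $\hat\lambda\cdot\eta^M=\eta^S$ holds by construction. For the multiplication law $\hat\lambda\cdot\mu^M=\mu^S\cdot\hat\lambda\hat\lambda$, I expect to argue by the universal property of the free corecursive algebra $MMY$ (or equivalently by the fact that both composites are $H$-algebra homomorphisms agreeing on the generators $\eta^M_{MY}$, using the unit law $\mu^M\cdot\eta^M_M=\mathrm{id}$ from Notation~\ref{4.13}). The compatibility $\hat\lambda\cdot\kappa=\lambda$ amounts to checking that $\hat\lambda\cdot\delta\cdot H\eta^M=\omega\cdot\lambda_0$, which follows from the fact that $\hat\lambda$ is an algebra homomorphism intertwining $\delta$ and $\gamma=\omega\cdot(\lambda_0\text{-whiskered})$. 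Defining $\hat\lambda':HM\to S'$ as the evident factorization (using that $\delta=\sigma$ is the ideal injection of $\mathbb M$ and that $\hat\lambda$ carries $\delta$ into $\gamma=\omega\cdot(\dots)$, which lands in the image of $\omega$), I obtain the required $\hat\lambda\cdot\sigma=\omega\cdot\hat\lambda'$.

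The main obstacle I anticipate is verifying that $(SY,\gamma_Y)$ is genuinely corecursive, i.e.\ translating the data of the corecursive monad $\overline{\mathbb S}$ into unique solutions of ordinary coalgebra equations; the delicate part is confirming both existence and \emph{uniqueness} of the coalgebra-to-algebra morphism, since uniqueness must be derived from the uniqueness of solutions of ideal equation morphisms in $\overline{\mathbb S}$, and one has to ensure the correspondence $e\mapsto(\text{ideal equation morphism})$ sets up a genuine bijection between the two notions of solution. A secondary but routine difficulty is the bookkeeping needed to produce $\hat\lambda'$ and confirm the ideal-monad-morphism axioms; these reduce to diagram chases using the decomposition $SY=S'Y+Y$ and the naturality of the transformations involved. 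Uniqueness of $(\hat\lambda,\hat\lambda')$ as a whole will follow because any ideal monad morphism commuting with $\kappa$ must, on generators, agree with $\hat\lambda$, and the freeness of each $MY$ forces agreement everywhere.
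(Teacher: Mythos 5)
Your proposal follows essentially the same route as the paper's proof: endow each $SY$ with the $H$-algebra structure $\mu^S_Y\cdot\lambda_{SY}$, show it is corecursive by translating coalgebras $e:X\to HX$ into ideal equation morphisms $\lambda_X\cdot e$ and checking that the two notions of solution correspond bijectively, then obtain $\hat\lambda$ from the freeness of $MY$ and verify the monad-morphism, ideality and uniqueness conditions by the same diagram chases. The only quibble is that your formula for the algebra structure on $SY$ should be stated explicitly as $\mu^S_Y\cdot\omega_{SY}\cdot\lambda_{0,SY}$ (your ``$\omega\cdot(\lambda_0$-whiskered$)$'' omits the multiplication), but this does not affect the correctness of the plan.
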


\begin{rem}
The proof is analogous to the corresponding theorem for free completely iterative monads, see \cite[Theorem~4.3]{m_cia}.
\end{rem}
\begin{proof}

For every corecursive monad $$\mathbb S=(S,\mu^S,\eta^S,S',\sigma,\mu'^S)$$ and every ideal natural transformation
$$
\xymatrix{
H\ar[r]^\lambda\ar[rd]_{\lambda'}&S\\
&S'\ar[u]_\sigma
}
$$
we are going to find an ideal monad morphism
$(\hat{\lambda},\hat\lambda'):\mathbb{M}\rightarrow \mathbb{S}$ with $\lambda=\hat{\lambda}\cdot\kappa$, and prove that it is unique.

\noindent\(a\ Every object $SA$, $A\in \mathcal A$, is considered as an algebra for $H$ via
\begin{align}\label{eq_5.4}
\rho_A\equiv HSA\stackrel{\lambda_{SA}}\longrightarrow SSA\stackrel{\mu_A^S}\longrightarrow SA.
\end{align}
We prove that $SA$ is  corecursive. Every equation morphism $e:X\rightarrow HX$ yields the following equation morphism $\overline e\equiv X\stackrel{e}\rightarrow HX\stackrel{\lambda_X}\longrightarrow SX$ w.r.t. the monad $\mathbb S$ and, and since $\lambda$ is an ideal natural transformation we see that $e$ an ideal equation morphism. We now prove that $e^\dagger$ is a  solution of $e$ in $SA$ if and only if it is the solution of $\overline e$ with respect to the corecursive monad $\mathbb{S}$.
Let $e^\dagger$ be a solution of $e$ in $SA$,  that is, the upper part of the diagram below commutes:
$$
\xymatrix{
X\ar[rr]^{e^\dagger}\ar[d]_{e}&&SA\\
HX\ar[r]^{He^\dagger}\ar[d]_{\lambda_X}&HSA\ar[ru]^{\rho_A}\ar[rd]_{\lambda_{SA}}&\\
SX\ar[rr]_{Se^{\dagger}}&&SSA\ar[uu]^{\mu_A^S}
}
$$
Since the  lower part commutes because of naturality of $\lambda$ and the right-hand triangle by definition of $\rho$, we see that  $e^\dagger$ is a solution of $\overline e$.

Conversely, if $e^\dagger$ is a solution of $\overline e$, then the outside of the above diagram commutes. Consequently, the upper part is commutative, showing $e^\dagger$ to be a solution of $e$ as desired. Thus, $SA$ is a corecursive algebra.\medskip

\noindent\(b\ There exists a unique homomorphism $\hat\lambda_A:MA\rightarrow SA$ of $H$-algebras such that $\hat\lambda_A\cdot \eta_A=\eta_A^S$. Now we show that $\hat\lambda$ is a natural transformation. Consider $f:A\rightarrow B$, then $Sf:SA\rightarrow SB$ is a homomorphism:
\begin{equation}\label{5.3}\vcenter{
\xymatrix{
HSA\ar[d]_{HSf}\ar[r]^{\lambda_{SA}}&SSA\ar[d]_{SSf}\ar[r]^{\mu_A}&SA\ar[d]^{Sf}\\
HSB\ar[r]_{\lambda_{SB}}&SSB\ar[r]_{\mu_B}&SB
}}
\end{equation}
The outside of the following diagram
$$
\xymatrix{
MA\ar[ddd]_{Mf}\ar[rr]^{\hat\lambda_A}&&SA\ar[ddd]^{Sf}\\
&A\ar[lu]^{\eta_A}\ar[d]_f\ar[ru]_{\eta_A^S}\\
&B\ar[ld]_{\eta_B}\ar[rd]^{\eta_B^S}\\
MB\ar[rr]_{\hat\lambda_B}&&SB
}
$$
commutes by the  universal property of $\eta_A$.\medskip

\noindent\(c\ We prove next that $\hat\lambda$ is a monad morphism. That is, the following diagrams are commutative:
$$
\xymatrix{
M\ar[rr]^{\hat\lambda}&&S&&MM\ar[r]^{\hat\lambda M}\ar[d]_{\mu}&SM\ar[r]^{S\hat\lambda}&SS\ar[d]^{\mu^S}\\
&Id\ar[ul]^{\eta}\ar[ur]_{\eta^S}& &&M\ar[rr]_{\hat\lambda}&&S
}
$$
The left-hand triangle commutes because of the definition of $\hat\lambda$. For the right hand square we note that
$S\hat\lambda_A$ is a homomorphism (cf. (\ref{5.3}) with $f=\hat{\lambda}_A$), all (components of) the other natural transformations are also clearly homomorphisms, and we have the following diagram
$$
\xymatrix{
MMA\ar[dd]_{\mu^M_A}\ar[rr]^{\hat\lambda_{MA}}&&SMA\ar[rr]^{S\hat\lambda_A}&&SSA\ar[dd]^{\mu_A^S}\\
&MA\ar[lu]^{\eta_{MA}}\ar[ru]_{\eta^S_{MA}}\ar[rr]_{\hat\lambda_A}&&SA\ar[ur]_{\eta^S_{SA}}\\
MA\ar@{=}[ur]\ar[rrrr]_{\hat\lambda_A}&&&&SA\ar@{=}[lu]
}
$$
The right-hand square commutes by naturality of $\eta^S$, hence the
outside square is commutative.\medskip

\noindent\(d\ Now we have to show that $\lambda=\hat\lambda\cdot\kappa=\hat\lambda\cdot \delta\cdot H\eta$ which follows from the commutativity of the diagram below:
$$
\xymatrix{
HA\ar[rr]^{H\eta_A}\ar[d]_{\lambda_A}&&HMA\ar[dl]^{\lambda_{MA}}\ar[dr]^{H\hat\lambda_A}\ar[rr]^{\delta_A}&&MA\ar[dd]^{\hat\lambda_A}\\
SA\ar[r]^{S\eta_A}\ar[drr]_{S\eta^S_A}&SMA\ar[dr]^{S\hat\lambda_A}&&HSA\ar[dl]_{\lambda_{SA}}\ar[rd]^{\rho_A}\\
&&SSA\ar[rr]_{\mu_A^S}&&SA\ar@{<-}`d[l]`[llll]^{id}[llllu]\\
}
$$
The left-hand upper part and the central one commute because $\lambda$ is an ideal natural transformation. The right-hand upper part commutes by the definition of $\hat \lambda$. The lower right-hand triangle commutes by the definition of $\rho$, see (\ref{eq_5.4}) and the lowest part commutes by the monad laws of $\mathbb S$.\medskip

\noindent\(e\ Next we show that $(\hat\lambda, \hat\lambda')$, where $\hat\lambda'
= {\mu'}^S \o \lambda'S \o H\hat\lambda$, is an ideal monad morphism. This follows from
$$
\xymatrix{
MA\ar[rrr]^{\hat\lambda}&&&SA\\
&&SSA\ar[ur]^{\mu^S_A}\\
HMA\ar[uu]_{\delta}\ar[r]_{H\hat\lambda_A}&HSA\ar[r]_{\lambda'_{SA}}\ar[ur]^{\lambda_{SA}}&S'SA\ar[u]^{\sigma_{SA}}\ar[r]_{\mu'^S}&S'A\ar[uu]_{\sigma}
}
$$

\noindent\(f\ It remains to prove that $(\hat \lambda,\hat\lambda')$ is
unique. Let $(\varphi, \varphi'): \mathbb M\rightarrow \mathbb S$ be
an ideal monad morphism with $\varphi \o \kappa=\lambda$. It is
sufficient to prove that $\varphi_A:MA\rightarrow SA$ is a
homomorphism of $H$-algebras w.r.t. the structure $\rho$ above and
$\varphi_A\o\eta_A=\eta_A^S$, then $\varphi_A=\hat \lambda_A$ of (b)
above. From that we derive $\varphi' = \hat\lambda'$ since $\sigma$ is
a monomorphism:
\[
\sigma \o \varphi' = \varphi \o \delta = \hat\lambda\o\delta = \sigma
\o \hat\lambda'. 
\]
 The equation $\varphi_A\o\eta_A=\eta_A^S$ follows from $\varphi$ preserving the units of the monad. And the fact that $\varphi_A$ is a homomorphism follows from the following diagram:
$$
\xymatrix{
HMA\ar[rr]^{\delta_A}\ar[rd]^{\kappa_{MA}}\ar[dd]_{H\varphi_A}&&MA\ar[dd]_{\varphi_A}\\
&MMA\ar[ru]^{\mu_A}\ar[d]_{(\varphi*\varphi)_A}\\
HSA\ar[r]_{\lambda_{SA}}&SSA\ar[r]_{\mu_A^S}&SA
}
$$
For the upper triangle see Remark~\ref{4.15} (and recall that $\kappa = \delta \cdot H\eta$), the right-hand square is the preservation of the monad multiplication, and for the left-hand one we use $(\varphi\o\kappa)M=\lambda M$ and the naturality of $\lambda$:
$$
\xymatrix{
HMA\ar[r]^{\kappa_{MA}}\ar[dd]_{H\varphi_A}\ar[rd]^{\lambda_{MA}}&MMA\ar[d]^{\varphi_{MA}}\\
&SMA\ar[d]^{S\varphi_A}\\
HSA\ar[r]_{\lambda_{SA}}&SSA
}
$$

\vspace*{-20pt}
\end{proof}

\begin{exa}\hfill
\begin{enumerate}
\item The functor $Id$ generates the free corecursive monad $$MY=\mathbb N\times Y+1,$$  see Example \ref{some exams}(1).
This is also the free completely iterative monad, since the functor $Id+Y$ has the terminal coalgebra $\mathbb N\times Y+1$.

\item The polynomial functor $H_\Sigma$ of a signature $\Sigma =(\Sigma_n)_{n<\omega}$ generates the free corecursive monad
\[\begin{array}{rcp{11.5cm}}
MY&= &all $(\Sigma+Y)$-trees in which only finitely many leaves
 are labelled in $Y$ (and other leaves labelled in $\Sigma_0$).
\end{array}
\]
 See Example \ref{some exams}(4).
\end{enumerate}
\end{exa}

\noindent Are there any other free corecursive monads than the monads $\mathbb M$ of free corecursive algebras?
Not for endofunctors of {\bf Set}:

\begin{prop}
If a set functor generates a free corecursive monad, then it has free corecursive algebras.
\end{prop}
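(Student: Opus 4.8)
The plan is to reduce the statement to Theorem~\ref{equivalence 1}: for a set functor it suffices to exhibit a terminal coalgebra and free algebras (equivalently, a corecursive fixpoint together with arbitrarily large pre-fixpoints), after which Theorem~\ref{equivalence 1} yields the free corecursive algebras and identifies them as $T\oplus FY$. Write the given free corecursive monad as $\mathbb{S}=(S,\eta,\mu,S',\sigma,\mu')$ with its ideal natural transformation $\kappa=\sigma\cdot\kappa'\colon H\to S$. All the data I extract will come from $\mathbb{S}$.

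First I would observe that each free algebra $SY$ is a corecursive $H$-algebra for the structure $\rho_Y\equiv HSY\xrightarrow{\kappa_{SY}}SSY\xrightarrow{\mu_Y}SY$. This is precisely part~(a) of the proof of Theorem~\ref{M is free core monad}, which used only that $\mathbb{S}$ is corecursive and that $\kappa$ is ideal: for a coalgebra $e\colon X\to HX$ the composite $\kappa_X\cdot e\colon X\to SX$ is an ideal equation morphism, and by naturality of $\kappa$ a morphism $f\colon X\to SY$ is a coalgebra-to-algebra morphism into $(SY,\rho_Y)$ if and only if it is a solution of $\kappa_X\cdot e$ for $\mathbb{S}$; uniqueness of solutions then makes $(SY,\rho_Y)$ corecursive.

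The technical heart is the structural isomorphism $SY\cong HSY+Y$ with coproduct injections $\rho_Y$ and $\eta_Y$, i.e.\ that the ideal satisfies $S'\cong HS$. This is the exact analogue of Proposition~\ref{M is ideal monad}(1), and I would follow that proof: by Lemma~\ref{generalized} applied to the corecursive algebra $(SY,\rho_Y)$ and $\eta_Y\colon Y\to SY$, the algebra $HSY+Y$ with structure $\mathsf{inl}\cdot H[\rho_Y,\eta_Y]$ is again corecursive, and $[\rho_Y,\eta_Y]\colon HSY+Y\to SY$ is an $H$-homomorphism; it remains to produce its inverse as the unique homomorphism $SY\to HSY+Y$ carrying $\eta_Y$ to $\mathsf{inr}$. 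I would obtain this inverse from the freeness of $SY$ as the free $\mathbb{S}$-algebra on $Y$, which requires presenting $HSY+Y$ as an $\mathbb{S}$-algebra. I expect this last point --- that every corecursive $H$-algebra carries a canonical $\mathbb{S}$-algebra structure, so that $H$-homomorphisms between corecursive algebras are automatically $\mathbb{S}$-homomorphisms --- to be the main obstacle; it should follow from the universal property of the free corecursive monad (the initiality of $\mathbb{S}$ among corecursive monads receiving an ideal map from $H$), read in the direction opposite to Theorem~\ref{M is free core monad}.

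Granting $SY\cong HSY+Y$, the conclusion is immediate. Taking $Y=\emptyset$ gives $S\emptyset\cong HS\emptyset$, a fixpoint that is corecursive by the first step; as recalled in the proof of Theorem~\ref{equivalence 1} (cf.\ Example~\ref{exam of core}(2)), such a corecursive fixpoint is the initial corecursive algebra, hence a terminal coalgebra $T$. For free algebras, the injection $\rho_Y\colon HSY\to SY$ is a coproduct injection and therefore monic, so each $SY$ is a pre-fixpoint, and these are arbitrarily large as $Y$ varies; by the criterion invoked in Theorem~\ref{equivalence 1}, arbitrarily large pre-fixpoints guarantee free algebras. Theorem~\ref{equivalence 1} then gives that free corecursive algebras exist and equal $T\oplus FY$, completing the argument.
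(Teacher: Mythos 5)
Your overall strategy is the same as the paper's: reduce to Theorem~\ref{equivalence 1} by producing a corecursive fixpoint and arbitrarily large pre-fixpoints, both extracted from the isomorphism $S\cong HS+Id$ (equivalently $S'\cong HS$), with $S\emptyset\cong HS\emptyset$ shown corecursive exactly as you do, via the bijection between solutions of $e$ in $(SY,\rho_Y)$ and solutions of the ideal equation morphism $\kappa_X\cdot e$ for $\mathbb S$. The difference lies entirely in how the key fact $S'\cong HS$ is obtained, and there your argument has a genuine gap --- one you partly flag yourself. To imitate Proposition~\ref{M is ideal monad}(1) you need the inverse of $[\rho_Y,\eta_Y]$ to arise from the freeness of $SY$, i.e.\ from $SY$ being the free Eilenberg--Moore algebra for $\mathbb S$ on $Y$; for that you must first equip $HSY+Y$ (or, as you put it, every corecursive $H$-algebra) with an $\mathbb S$-algebra structure. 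But the universal property of the free corecursive monad quantifies over ideal monad morphisms into \emph{corecursive monads}, not over corecursive algebras, so it does not hand you an action $S(HSY+Y)\to HSY+Y$. The only route to such actions in the paper is Theorem~\ref{the bloom alg=eilenberg moor off M_H}, which identifies $\mathbb S$-algebras with Bloom algebras precisely when $\mathbb S$ is the monad of \emph{free corecursive algebras} --- which is the conclusion you are trying to reach, so invoking it here would be circular.

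The paper does not attempt your shortcut: it states that $S'\cong HS$ is the main technical point and refers for its proof to the (substantial) analogous argument in Sections 5 and 6 of the cited work on free completely iterative monads, which establishes the isomorphism by a direct analysis of the ideal of the free monad rather than via an Eilenberg--Moore freeness argument. So your proposal is right in outline and in all the surrounding steps (corecursiveness of $SY$, the specialization $Y=\emptyset$, the pre-fixpoint count, the appeal to Theorem~\ref{equivalence 1}), but the central isomorphism is not established by the mechanism you propose; you would need to either import the argument from the completely iterative case or find an independent proof that corecursive $H$-algebras are $\mathbb S$-algebras that does not presuppose the result.
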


\begin{proof}

\sloppypar Let $H:\mathbf{ Set}\rightarrow \mathbf{ Set}$ generate a free corecursive monad $\mathbb S=(S,\mu^S,\eta^S, S',\sigma,\mu')$, and let $\kappa:H\rightarrow S$ be the universal arrow. Following Theorem \ref{equivalence 1} we need to prove the existence of (a) arbitrary large pre-fixpoints and (b) a corecursive fixpoint.

The main technical statement is that the ideal $S'$ is naturally
isomorphic to $HS$. This proof is analogous to the same proof
concerning free completely iterative monads, see Sections 5 and 6 in
\cite{aamv}. We therefore omit it.

Ad (a). Since $SY=S'Y+Y=HSY+Y$ for every set $Y$, we see that $SY$ is
a pre-fixpoint of cardinality at least $\card Y$.

Ad (b). The isomorphism $\sigma_{\emp}:HS\emp\rightarrow S\emp$
defines a corecursive algebra for $H$. To prove this, consider an
arbitrary equation morphism $e:X\rightarrow HX$ and form the equation
morphism $\overline e=\kappa_X\cdot e:X\rightarrow SX$. Then solutions of
$\overline e$ w.r.t $\mathbb S$ (in $S\emp$) are in bijective
correspondence with solutions $e$ in the algebra $S\emp$. This is easy
to prove, the details are as in the of proof of Theorem 6.1 of
\cite{m_cia}.
\end{proof}

\section{Hyper-Extensive Categories}
Recall that for the more general case of recursion with parameters the equational properties of $\dagger$ are captured by iteration theories of S.~Bloom and Z.~\'{E}sik \cite{be}. Recently \emph{functoriality} was ``added" to these properties; functoriality states that for two equation morphisms with parameters $e: X \to S(X+Y)$ and $f: Z \to S(Z+Y)$ (cf.~Definition~\ref{def ideal monad}) we have
\[
\vcenter{
  \xymatrix@R-1pc{
    X \ar[r]^-e \ar[dd]_h
    &
    S(X+Y)
    \ar[dd]^{S(h+Y)}
    \\ \\
    Z \ar[r]_-f
    &
    S(Z+Y)
    }
  }
  \qquad\Longrightarrow\qquad
  \vcenter{
    \xymatrix@R-1pc{
      X \ar[rd]^{e^\dagger} \ar[dd]_h
      \\
      & SY
      \\
      Z\ar[ru]_{f^\dagger}
      }
    }
\]
Being an implication, this is not equational if one takes, as in \cite{be}, the category of signatures as the base category. Instead, in \cite{amv_em2} the presheaf category
$$\mathbf{Set}^\mathbb{F}\ \text{ (sets in context)}$$
where $\mathbb F$ is the category of finite sets and functions, was
suggested as a base category. Equivalently, this is the category of
all finitary endofunctors on $\mathbf{Set}$.  Then functoriality is an
equational property in the sense of Kelly and Power~\cite{kp93}, and the functorial iteration theories are called {\it Elgot Theories} in \cite{amv_em2}. It follows from the results
in~\cite{be} that all equational properties of $\dagger$ in Domain
Theory are precisely captured by the concept of iteration theory. \smnote{I tried to clarify with the following sentence (also below Theorem).} More precisely, every equation that holds for a parametrized fixpoint operator $\dagger$ given by least fixpoints in a category of domains follows from the axioms of iteration theories (see e.g.~Simpson and Plotkin~\cite{sp00}). 

We have proved in~\cite{amv_em2} that Elgot theories are monadic over sets in context:
%
%
\begin{thm}[\cite{amv_em2}]\label{eilenberg-moore alg for M=elgot theo}
 Form the monad $\mathcal M$ on $\mathbf{Set}^\mathbb F$ by assigning to every set in context $H$ the free iterative theory on $H_\bot=H(-)+1$ of C.~Elgot \cite{elgot}. Then the Eilenberg-Moore algebras for  $\mathcal M$ are precisely the  Elgot theories.
\end{thm}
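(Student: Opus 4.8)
The plan is to prove the theorem by exhibiting the category $\mathsf{ElgotTh}$ of Elgot theories as \emph{monadic} over $\mathbf{Set}^{\mathbb F}$, with the monad induced by the monadic adjunction being precisely $\mathcal M$. The statement then follows at once, since the comparison functor of a monadic adjunction is an isomorphism onto the Eilenberg--Moore category. This is exactly the shape of the argument used for Theorem~\ref{the bloom alg=eilenberg moor off M_H}, where Bloom algebras were identified with the Eilenberg--Moore algebras of $\mathbb M$ by checking Beck's conditions for the forgetful functor $U_B$; here the base changes from $\mathcal A$ to sets in context and ``corecursive algebra'' is replaced by ``iterative theory''.

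First I would set up the forgetful functor $U\colon \mathsf{ElgotTh}\to \mathbf{Set}^{\mathbb F}$ sending an Elgot theory $(\mathbb S,\dagger)$ to its underlying set in context, i.e.\ the finitary endofunctor $S$, and argue that $U$ is monadic. The decisive point, already flagged in the surrounding text, is that over $\mathbf{Set}^{\mathbb F}$ the functoriality condition is no longer an implication but an honest equation in the sense of Kelly and Power~\cite{kp93}; together with the purely equational iteration-theory axioms of \cite{be} this presents Elgot theories as the algebras for a finitary enriched algebraic theory, whence $U$ is monadic. Alternatively, one could verify Beck's hypotheses by hand exactly as in Theorem~\ref{the bloom alg=eilenberg moor off M_H}: given a $U$-split pair, one first transports the monad (theory) structure along the split coequalizer, and then transports the operation $\dagger$ along it uniquely, in complete analogy with the way Lemma~\ref{A is B then B is B} transports a Bloom structure along an algebra homomorphism.

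The real work is then to identify the induced monad $U\Phi$ with $\mathcal M$, that is, to show that the free Elgot theory $\Phi H$ on a set in context $H$ has underlying functor the free iterative theory on $H_\bot=H(-)+1$. The conceptual reason for the shift $H\mapsto H_\bot$ is that an Elgot theory carries a \emph{total} iteration $\dagger$, defined on every $e\colon X\to S(X+Y)$ and not merely on ideal ones; in particular the equation $x=x$ forces a bottom element $\bot$, so every Elgot theory is canonically pointed, and the free one must freely adjoin exactly this constant. Concretely I would argue in two steps: (i) the free iterative theory $\mathbb R$ on $H_\bot$ (rational trees over the signature $H$ augmented by the nullary $\bot$) is itself an Elgot theory, because ideal equations have unique solutions, total iteration is recovered by resolving unguarded loops to $\bot$, and functoriality is then automatic from uniqueness, just as in Example~\ref{ex 3}(a); and (ii) $\mathbb R$ has the universal property of the free Elgot theory on $H$, namely every ideal natural transformation $\lambda\colon H\to\mathbb S$ into an Elgot theory extends to a unique $\dagger$-preserving ideal monad morphism $\mathbb R\to\mathbb S$, the constant $\bot$ being mapped to the solution of $x=x$ in $\mathbb S$. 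Granting this, $U\Phi=\mathcal M$ on objects, and a routine verification on units and multiplications identifies the whole monad, also confirming that the free-iterative-theory assignment is indeed a monad on $\mathbf{Set}^{\mathbb F}$.

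The step I expect to be the genuine obstacle is this last identification, and specifically the necessity and sufficiency of the enlargement $H\mapsto H_\bot$. Proving that the free iterative theory on $H_\bot$---rather than on $H$---carries the free Elgot structure requires showing both that $\bot$ is \emph{forced} (total functorial iteration cannot be realized freely without a default solution) and that \emph{no further} generators are needed; and one must check that the total iteration operation recovered from the unique ideal solutions really does satisfy all the Bloom--\'Esik equations and is genuinely functorial. By contrast, the monadicity machinery---whether via Beck's theorem or via the Kelly--Power equational presentation---is comparatively formal once this identification, which parallels the passage from corecursive algebras to Bloom algebras in the present paper and from iterative to iteration theories in \cite{amv_what}, is in hand.
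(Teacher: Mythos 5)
The first thing to note is that the paper does not prove this statement at all: Theorem~\ref{eilenberg-moore alg for M=elgot theo} is quoted verbatim from \cite{amv_em2} and used as a recalled result, so there is no internal proof to compare your attempt against. Judged on its own terms, your proposal identifies the right overall shape of the argument --- establish monadicity of the forgetful functor $U\colon\mathsf{ElgotTh}\to\mathbf{Set}^{\mathbb F}$ and then identify the induced monad with $\mathcal M$ by showing that the free Elgot theory on $H$ is the free iterative theory on $H_\bot$ --- and your explanation of why the shift $H\mapsto H_\bot$ is forced (a total, functorial $\dagger$ must solve $x=x$, hence needs a freely adjoined $\bot$) is the correct conceptual point. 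But what you have written is a plan, not a proof: the two steps you yourself flag as ``the real work'' --- (i) that the operation of strict/canonical solution on rational trees over $H_\bot$ satisfies \emph{all} the Bloom--\'Esik iteration-theory identities together with functoriality, and (ii) the universal property of this theory among all Elgot theories --- are exactly where the content of the theorem lives, and neither is carried out. In particular, (i) is not ``automatic from uniqueness'' as in Example~\ref{ex 3}(a): uniqueness only holds for \emph{ideal} equations, and the extension of $\dagger$ to arbitrary equation morphisms via the derived subobjects $X_\infty$ (cf.\ Definition~\ref{def_derived subobjects}) requires separate verification of each identity, much as Section~7 of this paper has to do by hand for the non-parametric case.

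One further caution on your first route to monadicity. You propose to deduce monadicity of $U$ from an equational (Kelly--Power) presentation of Elgot theories over $\mathbf{Set}^{\mathbb F}$. But in the logical organization of \cite{amv_em2} and of the surrounding discussion here, the equational presentability of Elgot theories is a \emph{consequence} of the monadicity theorem (``This result implies, using the results of Kelly and Power, that Elgot theories are equational over sets in context''), not an input to it; only the single axiom of functoriality is observed in advance to be equational over sets in context. So taking the full presentation as given risks assuming the conclusion. Your fallback --- verifying Beck's conditions for $U$-split pairs directly, transporting first the theory structure and then $\dagger$ along the split coequalizer in the style of Theorem~\ref{the bloom alg=eilenberg moor off M_H} and Lemma~\ref{A is B then B is B} --- is the sound option, but it too would need to be written out.
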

This result implies, using the results of Kelly and Power~\cite{kp93}, that Elgot theories are equational over sets in context, and we gave one axiomatization (that includes functoriality) in~\cite{amv_em2}.

\begin{exa}
  The polynomial set functor $H_\Sigma$ of Example~\ref{some exams}(4) defines a set in context (that we also denote by $H_\Sigma$) by a domain restriction to $\mathbb F$. The corresponding iterative theory $\mathcal M(H_\Sigma)$ is given by
  \[
  X \mapsto \text{all rational trees labelled in $\Sigma + X +\{\bot\}$}.
  \]
  This is a subtheory of the theory $\mathcal T_\Sigma$ of all trees labelled in $\Sigma + X +\{\bot\}$, which is the free continuous theory (see Example~\ref{ex 3}(c)).
\end{exa}

It was proved by Bloom and \'Esik in~\cite{be} that the equational properties of the operation $\dagger$ (of solving recursive equations) of the above theory $\mathcal T_\Sigma$ are precisely the equational properties that $\dagger$ has in an impressive number of applications of iteration. Thus, the axiomatization of these properties in~\cite{be} can be understood as the summary of equational properties that $\dagger$ is expected to have in applications. 

For every finitary set functor $H$ there exists a signature $\Sigma$ such that $H$ is a quotient of $H_\Sigma$ (see~\cite{at}). Therefore, $\mathcal M(H)$ is a quotient theory of the theory $\mathcal M(H_\Sigma)$ of rational trees. Thus, the equational properties of $\dagger$ in all free Elgot theories $\mathcal M(H)$ for finitary set functors $H$ are determined by those of $\dagger$ in rational trees.

In the present section we provide the first steps to an analogous result for iteration without parameters. We introduce  \emph{finitary corecursive monads} as the analogy of Elgot's iterative theories, and we prove that every finitary endofunctor on {\bf Set} generates a free finitary corecursive monad. Let $\mathcal M^*$ be the monad on $\mathbf{Set}^\mathbb F$ given by
$$\mathcal M^*(H)= \text{free finitary corecursive monad on }H_\bot.$$
Then we prove that the Eilenberg-Moore algebras for $\mathcal M^*$ are precisely the {\it Bloom theories}, i.\,e., theories with an operation $\dagger$ satisfying the equational properties that hold in non-parametric iteration. We list some of these properties. It is an open problem whether our list is complete.

In lieu of {\bf Set} we work, more generally, in a locally finitely presentable category. Thus in lieu of theories we work with finitary monads (in analogy to iterative monads of \cite{elgot}). For most of the results we need to assume the category we work with is hyper-extensive. We now start by recalling this concept from \cite{abmv_how}.

\begin{defi}(See \cite{abmv_how})
A locally finitely presentable category $\mathcal A$ is called {\it hyper-extensive} if every object is a coproduct of connected objects, i.\,e., objects $A$ such that $\mathcal A(A,-)$ preserves coproducts.
\end{defi}

\begin{rem}
In \cite{abmv_how} the definition is different, but Theorem 2.7 of \cite{abmv_how} states that the present formulation is equivalent. Every hyper-extensive category is extensive, i.~e., coproducts are
\begin{enumerate}[label=\({\alph*}]
\item disjoint (coproduct injections are monic and pairwise intersections always yield $0$)
and

\item universal (preserved by pullback along any morphism).
\end{enumerate}
Moreover, in hyper-extensive categories we have
\begin{enumerate}[label=\({\alph*},resume]
\item given pairwise disjoint monics $a_i:A_i\rightarrow B$, $i\in  \mathbb N$, if each $a_i$ is coproduct injection then so is $[a_i]:\coprod_{i\in \mathbb N}A_i\rightarrow B$.
\end{enumerate}
For locally finitely presentable categories (a)--(c) are equivalent to hyper-extensivity.
\end{rem}

\begin{exa}
Sets, posets, graphs, and every presheaf category are hyper-extensive. Given a signature $\Sigma$ the category of $\Sigma$-algebras is hyper-extensive iff all arities are 1.
\end{exa}

\begin{rem}
In a hyper-extensive category a monad $\mathbb S=(S,\mu, \eta)$ is
{\it ideal} (see Definition \ref{def ideal monad}) iff that $S$
is a coproduct $S=S'+Id$ with injections $\sigma:S'\rightarrow S$ and
$\eta:Id\rightarrow S$ and the multiplication has a
restriction $$\mu':S'S\rightarrow S'.$$
Thus, in this setting ``ideal'' is a property not an additional structure of a monad. 
\end{rem}

\begin{nota}
$\mathsf{Mon}_i(\mathcal A)$ denotes the category of ideal monads and ideal monad morphisms, i.e., morphisms $\alpha:\mathbb S\rightarrow \mathbb T$ for which a restriction to the ideals exist: we have 
\[
\alpha=(\xymatrix@1@C+2pc{
  S = S' +Id 
  \ar[r]^-{\alpha' + Id}
  & T' + Id = T
})
\]
for a natural transformation $\alpha':S'\rightarrow T'$.
\end{nota}

\begin{rem}
We shall prove below that every corecursive monad $\mathbb S$ on a hyper-extensive category has solutions for all, not only ideal, equation morphisms. For that we need to specify an element of $S0$ which then serves for defining solutions of non-ideal equations such as $x=x$. In the following definition $1$ denotes the terminal object of $\mathcal A$ and $0$ the initial one. The unique morphism from $0$ to $X$ is denoted by $!:0\rightarrow X$. Analogously $!:X\rightarrow 1$.
\end{rem}

\begin{defi}
A {\it strict endofunctor} is an endofunctor $H$ together with a morphism $\bot: 1\rightarrow H0$. A monad is {\it strict} if its underlying endofunctor is. A natural transformation $\alpha: H\rightarrow K$ between strict functors is called strict if $\alpha_0$ preserves $\bot$.

Every strict endofunctor has a special global element in every $HX$: it is the composite of $\bot:1\rightarrow H0$ and $H!:H0\rightarrow HX$. We denote it again by $\bot:1\rightarrow HX$.
\end{defi}

\begin{rem}
We now recall from \cite{abmv_how} the concept of a strict solution and the fact that every equation morphism has a unique strict solution. In \cite{abmv_how} equation morphisms with parameters $Y$ (see Definition \ref{def ideal monad}) were considered, here we restrict ourselves to $Y=0$.
\end{rem}

\begin{defi}{\rm (See \cite{abmv_how})}\label{def_derived subobjects}
(a) For every equation morphism $e:X\rightarrow SX$ we denote by $$i_\infty:X_\infty\rightarrow X$$
the intersection of the \emph{derived subobjects} $i_1,\, i_1\cdot i_2,\,i_1\cdot i_2\cdot i_3,\,\cdots$ obtained by forming recursively pullbacks as follows:
\begin{equation}\label{dia_derived subobjects}\vcenter{
\xymatrix{
\ar@{}[r]|(.65){\objectstyle \cdots}& X_3\ar[d]_{e_3}\ar@{>->}[r]^-{i_3}&{X_2}\ar[d]_{e_2}\ar@{>->}[r]^-{i_2}&{X_1}\ar[d]_{e_1}\ar@{>->}[r]^-{i_1}&X=X_0\ar[d]_{e}\\
{\textstyle\cdots}\ \ar@{>->}[r]_-{i_3}&X_{2}\ar@{>->}[r]_-{i_2}&X_{1}\ar@{>->}[r]_-{i_1}&X\ar[r]_{\eta_X}&SX
}
}
\end{equation}
\end{defi}

\begin{rem}
$X_\infty$ represents those variables for which solutions of $e$ have ``difficulties" assigning a value. For example, if $e$ represents the iterative equation $x=x$ or the system $$x=y,\quad y=x$$
then $X=X_\infty$. We resolve the difficulties by assigning the value $\bot$ to such variables:
\end{rem}

\begin{defi}(See \cite{abmv_how})
Let $e:X\rightarrow SX$ be an equation morphism. A solution $e^\dagger:X\rightarrow SY$ is called {\it strict} if its restriction to $X_\infty$ factorizes through $\bot$:

$$ \xymatrix{
 X_\infty\ar[r]^{!}\ar[d]_{i_\infty}&1\ar[d]^{\bot}\\
 X\ar[r]_{e^\dagger}&SY
 } $$
\end{defi}

\begin{thm}{\rm(}See \cite{abmv_how}{\rm)}\label{the_ equations have strict solutions}
Let $\mathbb S$ be a strict, corecursive monad on a hyper-extensive category. Then every equation morphism $e:X\rightarrow SX$ has a unique strict solution $e^\dagger :X\rightarrow SY$, for every object $Y$.

\end{thm}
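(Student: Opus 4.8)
The plan is to reduce an arbitrary equation morphism $e\colon X\to SX$ to an \emph{ideal} one, for which corecursiveness of $\mathbb S$ supplies a unique solution, and then to show that this solution is precisely the unique strict solution of $e$. The enabling observation is that, since $S=S'+Id$, we have $S0=S'0+0=S'0$; hence the distinguished element $\bot\colon 1\to S0$ factors through the ideal, and so does the induced $\bot\colon 1\to SX$ for every $X$. Consequently any equation morphism all of whose components land either in $S'X$ or in $\bot$ is automatically ideal.

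First I would analyse the substitution dynamics of $e$. By universality of coproducts, pulling $e\colon X\to SX=S'X+X$ back along the injections $\sigma_X$ and $\eta_X$ splits $X=X_d+X_s$, where $e$ restricts to an ideal map $X_d\to S'X$ and to $\eta_X\cdot f$ on $X_s$ for a \emph{substitution map} $f\colon X_s\to X$. Iterating this pullback analysis reproduces the derived subobjects $X_n$ of Definition~\ref{def_derived subobjects}; by extensivity each $i_n\colon X_n\to X_{n-1}$ is a coproduct injection, so writing $Z_n$ for the complement of $X_{n+1}$ in $X_n$ one has $Z_0=X_d$, the substitution $f$ restricts to $Z_n\to Z_{n-1}$, and hence its $n$-fold composite gives $f^{\,n}\colon Z_n\to X_d$. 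The $Z_n$ are pairwise disjoint coproduct injections into $X$, so by hyper-extensivity (property (c)) they assemble into a coproduct whose complement is exactly $X_\infty=\bigcap_n X_n$:
\[
X=\Big(\coprod_{n<\omega}Z_n\Big)+X_\infty .
\]

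Using this decomposition I would define an ideal equation morphism $\rho\colon X\to S'X\hookrightarrow SX$ by $\rho|_{Z_n}=e\cdot f^{\,n}$, which lands in $S'X$ because its endpoint lies in $X_d$, and $\rho|_{X_\infty}=\bot\cdot{!}$. Corecursiveness yields a unique solution $\rho^\dagger\colon X\to SY$. For existence I would check that $\rho^\dagger$ also solves $e$, verifying $\rho^\dagger=\mu_Y\cdot S\rho^\dagger\cdot e$ summand by summand: on $X_d$ it is the $\rho$-identity; on $Z_n$ ($n\ge1$) the unit laws reduce $\mu_Y\cdot S\rho^\dagger\cdot\eta_X\cdot f$ to $\rho^\dagger\cdot f$, which equals $\rho^\dagger|_{Z_n}$ since $\rho^\dagger$ is constant along the substitution chain ($f^{\,n-1}\cdot f=f^{\,n}$); and on $X_\infty$ both sides collapse to $\bot$, using that $\bot$ is absorbing under $S(-)$ and $\mu_Y$ and that $f$ restricts to $X_\infty\to X_\infty$. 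The same absorption gives $\rho^\dagger\cdot i_\infty=\bot\cdot{!}$, so $\rho^\dagger$ is strict.

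For uniqueness I would argue conversely that \emph{every} strict solution $e^\dagger$ of $e$ is a solution of the ideal equation $\rho$, whence $e^\dagger=\rho^\dagger$ by the uniqueness clause of corecursiveness. Solving $e$ forces $e^\dagger$ to be constant along substitution chains, $e^\dagger|_{Z_n}=e^\dagger\cdot f^{\,n}$, so that $\mu_Y\cdot Se^\dagger\cdot\rho$ agrees with $e^\dagger$ on each $Z_n$; on $X_\infty$ strictness supplies $e^\dagger|_{X_\infty}=\bot=\mu_Y\cdot Se^\dagger\cdot\rho|_{X_\infty}$. The main obstacle is the categorical bookkeeping of the second step: justifying that the derived subobjects are complemented, extracting the endpoint maps $f^{\,n}\colon Z_n\to X_d$, and assembling the countable decomposition $X=\coprod_n Z_n+X_\infty$ — this is exactly where extensivity and hyper-extensivity are indispensable. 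Once that decomposition and the $\bot$-absorption laws $S g\cdot\bot=\bot$ and $\mu_Y\cdot\bot=\bot$ are in place, the existence and uniqueness verifications are routine diagram chases.
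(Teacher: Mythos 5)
Your argument is correct and is essentially the proof the paper intends: the paper's own ``proof'' is just a pointer to \cite{abmv_how}, and the argument there is exactly your reduction --- use the derived subobjects of Definition~\ref{def_derived subobjects} to split $X=\bigl(\coprod_{n<\omega}Z_n\bigr)+X_\infty$, replace $e$ by the ideal equation morphism that is $e\cdot f^{\,n}$ on $Z_n$ and $\bot\cdot{!}$ on $X_\infty$ (ideal because $S0\cong S'0$), and identify its unique solution with the unique strict solution of $e$. The one step you delegate to ``categorical bookkeeping'' --- that the $Z_n$ are pairwise disjoint complemented subobjects assembling, by hyper-extensivity, into a coproduct with complement $X_\infty$ --- is precisely the infrastructure the cited reference provides, so no genuine gap remains.
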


In \cite{abmv_how} we proved this for completely iterative monads, the proof for the corecursive monads is the same.

\section{Finitary Bloom algebras and monads}

In this section we investigate the variant of iteration in which only equation morphisms $e:X\rightarrow SX$ with finitely presentable objects $X$ (of variables) are considered.

Throughout this section we assume that the base category is locally finitely presentable and hyper-extensive. And a finitary endofunctor $H$ is given.

\begin{defi}\hfill
\begin{enumerate}[label=\({\alph*}]
\item  An algebra $a:HA\rightarrow A$ is said to be {\it finitary corecursive} if for every coalgebra $e:X\rightarrow HX$ with $X$ finitely presentable there exists a unique solution, i.\,e., a unique coalgebra-to-algebra morphism $e^\dagger:X\rightarrow A$.

\item A {\it finitary Bloom algebra} is a triple $(A,a,\dagger)$ where $a:HA\rightarrow A$ is an algebra and $\dagger$ an operation which to every $e:X\rightarrow HX$, $X$ finitely presentable, assigns a solution $e^\dagger: X\rightarrow A$ subject to functoriality: for every coalgebra homomorphism $h:(X,e)\rightarrow (X',e')$ with $X$ and $X'$ finitely presentable the following triangle commutes:
$$\xymatrix{
X\ar[rd]_{e^\dagger}\ar[rr]^h&&X'\ar[ld]^{(e')^\dagger}\\
&A
}$$

\item Homomorphisms are defined analogously to Definition \ref{preserv
    solu}.
\end{enumerate}
\end{defi}

\begin{rem}\label{rem_finitely cores are finitely Bloom}\hfill
\begin{enumerate}[label=\({\alph*}]
\item  Every finitary corecursive algebra is a finitary Bloom algebra: the functoriality follows from the uniqueness of solutions.

\item Lemmas~\ref{homos are solution prese} and~\ref{A is B then B is B} hold also for finitary
  Bloom algebras. 
\end{enumerate}
\end{rem}

\begin{exa}
Consider unary algebras in $\mathbf{Set}$, that is, $H=Id$. 
\begin{enumerate}[label=\({\alph*}]
\item 
An algebra $a: A\to A$ is finitary corecursive iff $a$ has a unique fixpoint $t = a(t)$. Indeed, for every equation morphism $e:X\to X$ with $X$ finite the unique solution is $e^\dag=\mathsf{const}_t$.

Thus the algebra $\mathbb Z^+$ of integers with $\infty$ where the unary operation is successor (and $\infty$ is its fixpoint) is finitary corecursive. But not corecursive: consider the system of equations given by $x_i=a(x_{i+1})$ for $i\in \mathbb N$. It has more that one solution in $\mathbb Z^+$, e.g., $x_i\mapsto -i$ and $x_i \mapsto \infty$ are solutions.
\item An algebra $a: A \to A$ is a finitary Bloom algebra iff $a$ has a fixpoint -- this is the same as in Example~\ref{ex 3}(b).
\end{enumerate}
\end{exa}

\begin{rem}\hfill
\begin{enumerate}[label=\({\alph*}]
\item Recall from \cite{amv_atwork} the concept of an iterative algebra: it is an algebra $a:HA\rightarrow A$ such that every equation morphism $e:X\rightarrow HX+A$ with $X$ finitely presentable has a unique solution. That is, the algebra $[a,A]:HA+A \rightarrow A$ for $H(-)+A$ is finitary corecursive. Every iterative algebra is obviously finitary corecursive (for $H$). An example of a finitary corecursive algebra that is not iterative is the algebra of all binary trees with finitely many leaves, all of which are labelled in $Y$, see Example \ref{some exams}(3).

\item Recall further from \cite{amv_atwork} that the
  category $$\mathsf{Coalg}_f\,H$$ of all coalgebras on finitely
  presentable objects of $\mathcal A$ is filtered, and the filtered
  colimit of the forgetful functor to $\mathcal A$, $$R={\rm
    colim}\{X;\ (X,e)\in \mathsf{Coalg}_f\,H\}$$ carries the structure
  of a coalgebra $i:R\to HR$. This structure is an isomorphism, and
  its inverse $\rho:HR\rightarrow R$ is the initial iterative
  algebra. Consequently, $R$ is a finitary Bloom algebra as
  well. Indeed:
\end{enumerate}
\end{rem}

\begin{prop}\label{R is i.f.c.a}
$R$ is the initial finitary corecursive algebra.
\end{prop}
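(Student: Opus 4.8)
The plan is to first note that $R$ already is a finitary corecursive algebra and then to establish its universal property directly from the colimit description of $R$. That $(R,\rho)$ is finitary corecursive is immediate: the preceding remark identifies $\rho\colon HR\to R$ with the initial iterative algebra, and every iterative algebra is finitary corecursive. So the real content is initiality, i.e.\ producing for each finitary corecursive algebra $(A,a)$ a unique $H$-algebra homomorphism $h\colon R\to A$. The two facts I would lean on throughout are that the colimit injections $c_{(X,e)}\colon X\to R$ (for $(X,e)\in\mathsf{Coalg}_f\,H$) are jointly epic, and that they are coalgebra homomorphisms into $(R,i)$, that is, $i\cdot c_{(X,e)}=Hc_{(X,e)}\cdot e$, where $i=\rho^{-1}$.

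For existence I would send each $(X,e)$ to its unique solution $e^\dagger\colon X\to A$ in the finitary corecursive algebra $A$. These form a cocone over the diagram defining $R$: for a coalgebra homomorphism $g\colon(X,e)\to(X',e')$ one checks that $(e')^\dagger\cdot g$ solves $e$, since $(e')^\dagger\cdot g=a\cdot H(e')^\dagger\cdot e'\cdot g=a\cdot H((e')^\dagger\cdot g)\cdot e$, whence $(e')^\dagger\cdot g=e^\dagger$ by uniqueness (this is exactly the functoriality recorded in Remark~\ref{rem_finitely cores are finitely Bloom}(a)). The universal property of the colimit then yields a unique $h\colon R\to A$ with $h\cdot c_{(X,e)}=e^\dagger$ for all $(X,e)$.

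Next I would verify that $h$ is an $H$-algebra homomorphism, i.e.\ $h\cdot\rho=a\cdot Hh$, equivalently $h=a\cdot Hh\cdot i$. Since the $c_{(X,e)}$ are jointly epic it suffices to compare the two sides after precomposition with each $c_{(X,e)}$. On the left we get $h\cdot c_{(X,e)}=e^\dagger$; on the right, using $i\cdot c_{(X,e)}=Hc_{(X,e)}\cdot e$ and $h\cdot c_{(X,e)}=e^\dagger$, we get $a\cdot Hh\cdot i\cdot c_{(X,e)}=a\cdot H(h\cdot c_{(X,e)})\cdot e=a\cdot He^\dagger\cdot e=e^\dagger$, the last equality because $e^\dagger$ solves $e$. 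Hence the two composites agree for every $(X,e)$, and joint epicness gives $h=a\cdot Hh\cdot i$.

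For uniqueness I would reverse this computation. If $h'\colon R\to A$ is any $H$-algebra homomorphism then $h'=a\cdot Hh'\cdot i$, so for each $(X,e)$ the morphism $h'\cdot c_{(X,e)}$ satisfies $h'\cdot c_{(X,e)}=a\cdot H(h'\cdot c_{(X,e)})\cdot e$ and is therefore a solution of $e$ in $A$; by uniqueness of solutions it equals $e^\dagger=h\cdot c_{(X,e)}$. Joint epicness of the $c_{(X,e)}$ then forces $h'=h$. The only point that requires care is the homomorphism check: one must pass the structure through the canonical coalgebra $i$ on the colimit and argue with the jointly epic colimit cocone $c_{(X,e)}$, rather than attempting to precompose with the equation morphisms $e$ themselves (which need not be epic); the finitarity of $H$ is what guarantees that $(R,i)$ really is the colimit in $\mathsf{Coalg}\,H$ and hence that the identities $i\cdot c_{(X,e)}=Hc_{(X,e)}\cdot e$ hold.
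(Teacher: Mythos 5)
Your proof is correct and follows essentially the same route as the paper's: both exploit that $R$ is iterative (hence finitary corecursive), that the cocone $c_{(X,e)}$ is the colimit cocone with $i\cdot c_{(X,e)}=Hc_{(X,e)}\cdot e$, and the same jointly-epic precomposition computation for the homomorphism property. The only (harmless) difference is that the paper phrases the universal property for arbitrary finitary Bloom algebras and solution-preserving morphisms, which is slightly more general than the literal statement you prove.
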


\begin{proof}
We know that $R$ is finitary corecursive because it is even iterative. Let $(A,a)$ be a finitary Bloom algebra. Given a (solution-preserving) homomorphism $h:R\rightarrow A$, for every $e:X\rightarrow HX$ in $\mathsf{Coalg}_f\,H$ the triangle
$$\xymatrix{
&X\ar[ld]_{e^\sharp}\ar[rd]^{e^\dag}\\
R\ar[rr]_h&&A\\
}$$
commutes, where $e^\sharp$ denotes the solution in $R$. As proved in \cite{amv_atwork}, these morphisms $e^\sharp$ form the colimit cocone of $R$ (as a colimit of the forgetful functor of $\mathsf{Coalg}_f\,H$). Thus, the above triangles determine, since $\dag$ is functorial, a unique morphism $h$ which is solution-preserving. It remains to prove that $h$ is a homomorphism. For that recall from \cite{amv_atwork} that the algebra structure $\rho :HR\rightarrow R$ is defined as the inverse of the unique isomorphism $i:R\rightarrow HR$ with $$i\cdot e^\sharp =He^\sharp\cdot e\ \text{for all }\ e:X\rightarrow HX \in \mathsf{Coalg}_fH.$$
Thus in order to prove $h\cdot \rho=a\cdot Hh$ we use that $h=a\cdot Hh\cdot i:R\rightarrow A$ which follows from the fact that $e^\sharp$ are collectively epic: $h\cdot e^\sharp=e^\dag=a\cdot He^\dag\cdot e=a\cdot Hh\cdot He^\sharp\cdot e=a\cdot Hh\cdot i\cdot e^\sharp$.
\end{proof}

\begin{exa}\hfill
\begin{enumerate}[label=\({\alph*}]
\item For $HX=X\times X+1$ the algebra $R$ consists of all binary trees that are rational, i.\,e., have finitely many subtrees up to isomorphism, see \cite{ginali}.

\item More generally, given a finitary signature $\Sigma$ the polynomial functor $H_\Sigma:\mathbf{Set}\rightarrow \mathbf{Set}$ with
$$H_\Sigma X=\Sigma_0+\Sigma_1\times X+\Sigma_2\times X^2+\cdots$$
has the initial iterative algebra
$$R_\Sigma=\text{all rational $\Sigma$-trees}.$$

\item For the finite power-set functor $R$ is the algebra of all rational, finitely branching, strongly extensional trees in the sense of J. Worrell \cite{w}.
\end{enumerate}
\end{exa}

\begin{thm}
Every object $Y$ generates a free finitary Bloom algebra. This is the
coproduct $R\oplus FY$ of $R$ and the free algebra $FY$ in $\mathsf{Alg}\,H$.
\end{thm}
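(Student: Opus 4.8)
The plan is to mirror the proof of Theorem~\ref{thm:3.16}, with the rational fixpoint $R$ playing exactly the role that the terminal coalgebra $T$ played there. First I would record the standing ingredients. Since $H$ is finitary it is in particular accessible, so the free algebra $FY$ (with universal arrow $\eta:Y\to FY$ and structure $\varphi_Y:HFY\to FY$) exists by Corollary~\ref{c-access}; and since $\mathcal A$ is locally finitely presentable, $\mathsf{Alg}\,H$ is cocomplete, so the coproduct $R\oplus FY$ exists. The one conceptual point to isolate is that $R$ is the \emph{initial} finitary Bloom algebra: this is precisely what the proof of Proposition~\ref{R is i.f.c.a} delivers, since there one constructs, for an arbitrary finitary Bloom algebra $(A,a,\dagger)$, a \emph{unique} solution-preserving homomorphism $R\to A$.

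Next I would endow $R\oplus FY$ with a finitary Bloom structure $\dagger$ induced by the injection $\mathsf{inl}:R\to R\oplus FY$, using the finitary version of Lemma~\ref{A is B then B is B}, which is available by Remark~\ref{rem_finitely cores are finitely Bloom}(b). As $R$ is finitary corecursive, hence finitary Bloom, this makes $R\oplus FY$ a finitary Bloom algebra in which every solution has the form $\mathsf{inl}\cdot e^\dagger$, where $e^\dagger$ is the solution of $e$ in $R$. I claim the universal arrow is $\mathsf{inr}\cdot\eta:Y\to R\oplus FY$. To verify the universal property, given a finitary Bloom algebra $(B,b,\ddag)$ and a morphism $g:Y\to B$, I would form the unique algebra homomorphism $\overline g:(FY,\varphi_Y)\to(B,b)$ with $g=\overline g\cdot\eta$, together with the unique solution-preserving homomorphism $f:R\to B$ supplied by the initiality of $R$. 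Then $[f,\overline g]:R\oplus FY\to B$ is solution-preserving, since $[f,\overline g]\cdot(\mathsf{inl}\cdot e^\dagger)=f\cdot e^\dagger=e^\ddag$ because $f$ preserves solutions, and it satisfies $[f,\overline g]\cdot\mathsf{inr}\cdot\eta=\overline g\cdot\eta=g$.

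For uniqueness, suppose $h:R\oplus FY\to B$ is solution-preserving with $h\cdot\mathsf{inr}\cdot\eta=g$. Then $h\cdot\mathsf{inl}:R\to B$ is a solution-preserving homomorphism, hence equals $f$ by initiality of $R$; and $h\cdot\mathsf{inr}:FY\to B$ is an algebra homomorphism with $(h\cdot\mathsf{inr})\cdot\eta=g$, hence equals $\overline g$ by freeness of $FY$. Therefore $h=[f,\overline g]$, and $R\oplus FY$ is free on $Y$ in the category of finitary Bloom algebras.

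The step that carries all the genuine content is the first one, namely that $R$ is initial among finitary Bloom algebras and not merely among finitary corecursive ones; but this is already furnished by Proposition~\ref{R is i.f.c.a}, whose proof produces a unique solution-preserving homomorphism into an \emph{arbitrary} finitary Bloom algebra. With that in hand, the remainder is a faithful transcription of the argument for Theorem~\ref{thm:3.16}, so the only thing demanding care is to consistently invoke the finitary analogues of the auxiliary lemmas (Lemma~\ref{A is B then B is B} and the functoriality of $\dagger$) rather than their unrestricted counterparts.
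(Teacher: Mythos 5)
Your proof is correct and follows essentially the same route as the paper: induce the Bloom structure on $R\oplus FY$ via $\mathsf{inl}$ using the finitary version of Lemma~\ref{A is B then B is B}, obtain $f:R\to B$ from Proposition~\ref{R is i.f.c.a} (whose proof indeed gives initiality of $R$ among all finitary Bloom algebras, not just corecursive ones) and $\overline g:FY\to B$ from freeness, and conclude with $[f,\overline g]$. Your explicit uniqueness argument merely spells out what the paper leaves implicit.
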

\begin{proof}
The category $\mathsf{Alg}\,H$ is locally finitely presentable, see \cite{ar}, thus $\oplus$ exists in $\mathsf{Alg}\,H$. The coproduct injection $\mathsf{inl}:R\rightarrow R\oplus FY$ induces a finitary Bloom algebra structure on $R\oplus FY$, see Lemma  \ref{A is B then B is B}. And we have a canonical morphism $\mathsf{inr}\cdot \eta_Y:Y\rightarrow R\oplus FY$ in $\mathcal A$.

The universal property of $\mathsf{inr}\cdot \eta_Y$ is clear:
let $(B,b,\ddag)$ be a finitary Bloom algebra and $g:Y\rightarrow B$ a morphism in $\mathcal A$.
$$\xymatrix{
R\ar[r]^-{\mathsf{inl}}\ar[rd]_-f&R\oplus FY\ar[d]_-{[f,\overline{g}]}&FY\ar[l]_-{\mathsf{inr}}\ar[ld]_-{\overline{g}}&Y\ar[l]_-{\eta_Y}\ar[lld]^-g\\
&B
}$$
We have a unique homomorphism $\overline g:FY\rightarrow B$ and, due to Proposition \ref{R is i.f.c.a}, a unique solution-preserving homomorphism $f:R\rightarrow B$. The unique homomorphism $[f,\overline g]:R\oplus FY\rightarrow B$ is solution-preserving since $f$ is and the Bloom algebra structure of $R\oplus FY$ is induced by $\mathsf{inl}$.
\end{proof}

\begin{cor}
$R\oplus FY$ is a free finitary corecursive algebra on $Y$.
\end{cor}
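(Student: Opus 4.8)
The plan is to obtain the corollary from the preceding theorem, which already exhibits $R\oplus FY$ as the free finitary Bloom algebra on $Y$ with universal arrow $\mathsf{inr}\cdot\eta_Y$. The only thing missing is that this free finitary Bloom algebra is in fact finitary corecursive; once that is known, the upgrade from ``free finitary Bloom'' to ``free finitary corecursive'' is purely formal. Indeed, suppose $R\oplus FY$ is finitary corecursive and let $(B,b)$ be any finitary corecursive algebra with a morphism $g:Y\to B$. Viewing $(B,b)$ as a finitary Bloom algebra, the theorem gives a unique Bloom-algebra homomorphism $h:R\oplus FY\to B$ with $h\cdot\mathsf{inr}\cdot\eta_Y=g$, and $h$ is in particular an $H$-algebra homomorphism. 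Conversely, since both $R\oplus FY$ and $B$ are finitary corecursive, \emph{every} $H$-algebra homomorphism between them preserves solutions, by the finitary version of Lemma~\ref{homos are solution prese} (Remark~\ref{rem_finitely cores are finitely Bloom}(b)); hence the $H$-algebra homomorphisms extending $g$ are exactly the Bloom-algebra homomorphisms extending $g$, and there is precisely one of these. Thus $R\oplus FY$ is the free finitary corecursive algebra on $Y$.

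It therefore remains to prove that $R\oplus FY$ is finitary corecursive: for every $e:X\to HX$ with $X$ finitely presentable there is a unique coalgebra-to-algebra morphism into $R\oplus FY$. \emph{Existence} is immediate from the Bloom structure: by the finitary version of Lemma~\ref{A is B then B is B} the structure induced by $\mathsf{inl}:R\to R\oplus FY$ assigns to $e$ the solution $\mathsf{inl}\cdot e^\sharp$, where $e^\sharp:X\to R$ is the unique solution in the finitary corecursive algebra $R$ of Proposition~\ref{R is i.f.c.a}. For \emph{uniqueness} I would run the argument from the last paragraph of the proof of Proposition~\ref{cor reflected of B}: if $R\oplus FY$ admits a monomorphism of $H$-algebras $m:R\oplus FY\to B$ into a finitary corecursive algebra $B$, then any solution $s$ of $e$ in $R\oplus FY$ gives the solution $m\cdot s$ in $B$, which is uniquely determined; as $m$ is monic, $s$ is uniquely determined as well.

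For $B$ I would take the free iterative algebra on $Y$, denoted $R_Y$, which exists for every finitary $H$ by \cite{amv_atwork} and has $R$ as its value on the initial object. Being iterative, $R_Y$ has unique solutions of all flat finitary equations, hence in particular a unique solution for every $e:X\to HX$ with $X$ finitely presentable, so it is finitary corecursive. The comparison morphism $m=[f,\overline g]:R\oplus FY\to R_Y$ is the copairing of the homomorphism $f:R\to R_Y$ supplied by the initiality of $R$ among finitary corecursive algebras (Proposition~\ref{R is i.f.c.a}) with the homomorphism $\overline g:FY\to R_Y$ extending the universal arrow $Y\to R_Y$; it is an $H$-algebra homomorphism, and it is even solution preserving, since the Bloom structure of $R\oplus FY$ is induced by $\mathsf{inl}$ and $f$ preserves solutions.

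The main obstacle is to show that $m$ is a monomorphism; I do not expect this to follow from the universal properties alone. I would establish it by mirroring the embedding arguments used in the non-finitary case in Theorem~\ref{free cor exists} and Proposition~\ref{free B then free alg}: realize $R\oplus FY$ together with its comparison into $R_Y$ as compatible colimits built from $H$, and verify monicity stepwise, using that $H$ is finitary and that monomorphisms are well behaved in the hyper-extensive, locally finitely presentable base category. With $m$ monic the uniqueness argument above closes, so $R\oplus FY$ is finitary corecursive, which completes the proof.
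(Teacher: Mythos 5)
Your proposal follows essentially the same route as the paper: the paper's entire proof is the remark that uniqueness of solutions in $R\oplus FY$ is ``completely analogous'' to Theorem~\ref{free cor exists}, i.e.\ one exhibits $R\oplus FY$ as a subalgebra of a finitary corecursive algebra, and your choice of the free iterative algebra on $Y$ as the embedding target is exactly the finitary analogue of the free cia $TY$ used there, with the formal upgrade from free finitary Bloom algebra handled just as in the non-finitary case. The monicity of the comparison morphism, which you honestly flag as the remaining obstacle and propose to settle by mirroring the chain arguments of Theorem~\ref{free cor exists}, is precisely the step the paper itself leaves implicit, so your sketch is at least as detailed as the published argument.
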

Indeed, the proof that solutions in $R\oplus FY$ are unique is completely analogous to that of Theorem \ref{free cor exists}.

\begin{exa}
Let $\Sigma$ be a finitary signature. The free finitary corecursive algebra $M^*Y=R_\Sigma\oplus F_\Sigma Y$ is the algebra of all rational trees with finitely many leaves labelled in $Y$ and all other nodes (with $n$ successors) labelled by an $n$-ary operation in $\Sigma$ for $n=0, 1, 2, \ldots$.
\end{exa}

\begin{nota}
The monad of free finitary Bloom algebras is denoted by $\mathbb M_H^*$. It is defined on objects by assigning to $Y$ the underling object of $R\oplus FY$. In the case of $H=H_\Sigma$ we write $\mathbb M^*_\Sigma$.
\end{nota}

\begin{thm}
The category of finitary Bloom algebras is isomorphic to the Eilenberg-Moore category of ${\mathbb M_H^*}$.
\end{thm}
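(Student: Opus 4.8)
The plan is to follow the proof of Theorem~\ref{the bloom alg=eilenberg moor off M_H} almost verbatim, with the initial finitary corecursive algebra $R$ (Proposition~\ref{R is i.f.c.a}) playing the role that the terminal coalgebra $T$ played there, and with the solution operator restricted to coalgebras on finitely presentable objects. Write $\mathsf{Alg}_B^f\,H$ for the category of finitary Bloom algebras. By the preceding theorem the forgetful functor $U:\mathsf{Alg}_B^f\,H\to\mathcal A$ has a left adjoint $Y\mapsto R\oplus FY$, and the monad it generates is, by definition, $\mathbb M_H^*$. Hence there is a comparison functor $K:\mathsf{Alg}_B^f\,H\to\mathcal A^{\mathbb M_H^*}$, and by Beck's monadicity theorem (see \cite[4.4.4]{bor}) it suffices to prove that $U$ creates coequalizers of $U$-split pairs.

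So I would start from a parallel pair of solution-preserving homomorphisms $f,g:(A,a,\dagger)\to(B,b,\ddagger)$ together with a $U$-split coequalizer $k,s,t$ in $\mathcal A$, exactly as in Theorem~\ref{the bloom alg=eilenberg moor off M_H}. First I would put an $H$-algebra structure on $C$: since $H$ is finitary it is accessible, so free $H$-algebras exist (Corollary~\ref{c-access}) and $H$-algebras are precisely the Eilenberg--Moore algebras for the free monad on $H$ (see \cite{b}). Applying Beck's theorem to that monad yields a unique $c:HC\to C$ for which $k:(B,b)\to(C,c)$ is a coequalizer in $\mathsf{Alg}\,H$. Next I would equip $C$ with a finitary Bloom structure: the finitary version of Lemma~\ref{A is B then B is B}, available by Remark~\ref{rem_finitely cores are finitely Bloom}(b) and applied to the homomorphism $k$, produces a unique finitary Bloom structure $(C,c,*)$ for which $k$ is solution-preserving. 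This settles both existence and uniqueness of the created structure.

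It then remains to check that $k$ is a coequalizer in $\mathsf{Alg}_B^f\,H$. Given a solution-preserving homomorphism $h:(B,b,\ddagger)\to(D,d,+)$ with $h\cdot f=h\cdot g$, the coequalizer property of $k$ in $\mathsf{Alg}\,H$ yields a unique $H$-algebra homomorphism $h':(C,c)\to(D,d)$ with $h=h'\cdot k$; and the same computation as in the infinitary case,
\[
h'\cdot e^*=h'\cdot k\cdot e^\ddagger=h\cdot e^\ddagger=e^+,
\]
shows that $h'$ preserves solutions, hence is a morphism of finitary Bloom algebras. This gives the required universal property.

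The argument is essentially formal, and I do not expect a genuine obstacle. The only places where the finitary restriction enters are that solutions $e^\dagger$ are now defined only for equation morphisms $e:X\to HX$ with $X$ finitely presentable, and that the induced-structure lemma must be the finitary one; the displayed identities are unaffected, since each is asserted pointwise over such $e$. The one point to be careful about is therefore simply to confirm that the finitary analogue of Lemma~\ref{A is B then B is B} indeed delivers a \emph{finitary} Bloom structure, i.e.\ that the induced dagger $e^*=k\cdot e^\ddagger$ is functorial on $\mathsf{Coalg}_f\,H$; this is immediate from the functoriality of $\ddagger$.
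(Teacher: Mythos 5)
Your proposal is correct and is exactly the route the paper intends: the paper's own ``proof'' consists of the single sentence that the argument is analogous to that of Theorem~\ref{the bloom alg=eilenberg moor off M_H}, and you have filled in that analogy faithfully --- Beck's theorem via creation of coequalizers of $U$-split pairs, the algebra structure $c$ on $C$ from the free monad on $H$, and the finitary version of Lemma~\ref{A is B then B is B} (licensed by Remark~\ref{rem_finitely cores are finitely Bloom}(b)) to induce the unique solution-preserving structure. No gaps.
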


The proof is analogous to that of Theorem \ref{the bloom alg=eilenberg moor off M_H}.

\begin{rem}
Analogous to the concept of a corecursive monad, see Definition \ref{def corecursive monad}, we call an ideal monad $\mathbb S$ \emph{finitary corecursive} if every ideal equation morphism $e:X\to SX$ with $X$ finitely presentable has a unique solution $e^\dag:X\to SY$. The following result is completely analogous to Proposition \ref{M is ideal monad} and Theorem \ref{M is free core monad}.
\end{rem}

\begin{thm}
For every finitary endofunctor $H$ the monad $\mathbb M^*_H$ of free finitary corecursive algebras is a free finitary corecursive monad on $H$. This monad is ideal with the ideal $HM$ and $\mu'=H\mu$.
\end{thm}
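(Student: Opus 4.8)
The plan is to transcribe the two results cited in the preceding remark, Proposition~\ref{M is ideal monad} and Theorem~\ref{M is free core monad}, into the finitary setting: everywhere ``corecursive'' becomes ``finitary corecursive'' and every equation morphism $e:X\to HX$ (resp.\ $e:X\to SX$) is restricted to a finitely presentable object $X$ of variables. Almost all of the bookkeeping is insensitive to this restriction, so the real work is to check that the handful of places that actually use solvability still have the equations they need available within the finitary class.

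For the ideal structure I would follow Proposition~\ref{M is ideal monad}. The core identity is $M^*_HY = HM^*_HY+Y$ with injections $\delta_Y$ and $\eta_Y$. First I observe that the finitary analogue of Lemma~\ref{generalized} holds, its proof being uniform in the object $X$ of variables and hence restricting verbatim to finitely presentable $X$. Applying it to the free finitary corecursive algebra $M^*_HY$ and the universal arrow $\eta_Y:Y\to M^*_HY$ shows that $HM^*_HY+Y$, with structure $\mathsf{inl}\cdot H[\delta_Y,\eta_Y]$, is again finitary corecursive. Freeness of $M^*_HY$ then produces a unique homomorphism $\overline{\mathsf{inr}}:M^*_HY\to HM^*_HY+Y$ with $\overline{\mathsf{inr}}\cdot\eta_Y=\mathsf{inr}$, and the diagram chase of Proposition~\ref{M is ideal monad} shows it is inverse to $[\delta_Y,\eta_Y]$. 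Reading the identity $\delta\cdot H\mu=\mu\cdot\delta M$ off the defining square of $\mu$ exactly as before then establishes that $\mathbb M^*_H$ is ideal with ideal $HM^*_H$, $\sigma=\delta$ and $\mu'=H\mu$.

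For the universal property I would run the six-step argument of Theorem~\ref{M is free core monad}. Let $\overline{\mathbb S}=(S,\mu^S,\eta^S,S',\sigma,\mu'^S)$ be a finitary corecursive monad and $\lambda:H\to S$ an ideal natural transformation. Step (a) equips each $SA$ with $\rho_A=\mu^S_A\cdot\lambda_{SA}:HSA\to SA$ and shows it is a \emph{finitary} corecursive $H$-algebra: for finitely presentable $X$ and any $e:X\to HX$ the morphism $\overline e=\lambda_X\cdot e:X\to SX$ is ideal (as $\lambda$ is ideal) with finitely presentable domain, so it has a unique solution, and the same diagram as in Theorem~\ref{M is free core monad} puts solutions of $e$ in $SA$ in bijection with solutions of $\overline e$. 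Steps (b)--(f) --- the homomorphism $\hat\lambda_A:M^*_HA\to SA$ from the universal property of the free finitary corecursive algebra, its naturality, the monad-morphism laws for $\hat\lambda$, the identity $\lambda=\hat\lambda\cdot\kappa$, the ideality of $(\hat\lambda,\hat\lambda')$ with $\hat\lambda'={\mu'}^S\cdot\lambda'S\cdot H\hat\lambda$, and uniqueness --- invoke only the universal property of $M^*_HA$ and the monad axioms, and so transcribe literally.

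The one genuinely finitary point, which I expect to be the main obstacle to phrase precisely, is step (a): the solution correspondence yields a \emph{unique} solution of $e$ in $SA$ only because $\overline e$ lies in exactly the class --- ideal with finitely presentable domain --- for which the finitary corecursive monad $\overline{\mathbb S}$ guarantees unique solvability. One should additionally record that $\mathbb M^*_H$ is itself a finitary monad, so that it is a legitimate object of the category over which the universal property is asserted; this follows because $F$, the forgetful functor $\mathsf{Alg}\,H\to\mathcal A$, and $R\oplus(-)$ all preserve filtered colimits.
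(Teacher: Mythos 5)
Your proposal is correct and matches the paper's intent exactly: the paper offers no separate proof, stating only that the result is ``completely analogous'' to Proposition~\ref{M is ideal monad} and Theorem~\ref{M is free core monad}, which is precisely the transcription you carry out. Your explicit identification of the one step where finitariness genuinely matters (the solution correspondence in step (a) staying within ideal equation morphisms with finitely presentable domain) is a worthwhile addition beyond what the paper records.
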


\begin{nota}\hfill
\begin{enumerate}[label=\({\alph*}]
\item Let $\mathbb F$ be a full subcategory of $\mathcal A$ representing all finitely presentable objects. The functor category $\mathcal A^{\mathbb F}$ is equivalent to the category of all finitary endofunctors of $\mathcal A$.

\item $\mathcal A_\bot^\mathbb F$ denotes the non-full subcategory of all strict finitary endofunctors (and strict natural transformations). The embedding $\mathcal A_\bot^\mathbb F\hookrightarrow\mathcal A^\mathbb F$ has a left adjoint $H\mapsto H_\bot$ where $H_\bot X=HX+1$.

\item We denote by $$\mathcal M^*$$ the monad on $\mathcal A^\mathbb F$ given by free finitary corecursive monads: $$\mathcal M^*:H\mapsto M^*_{H_\bot}.$$
More precisely $\mathcal M^*$ is the monad obtained by the composite adjoint situation
$$\xymatrix{
\mathcal{A}^\mathbb {F}
\ar@<5pt>[r]^{(-)_\bot}
\ar@{}[r]|-{\perp}
&
\mathcal{ A}^\mathbb{F}_\bot
\ar@<5pt>[l]
\ar@<5pt>[r]^-{M^*_{(-)}}
\ar@{}[r]|-\perp
&
\mathsf{FC}_{\bot}(\mathcal{ A})
\ar@<5pt>[l]^-U
}$$
where $U$ is the forgetful functor of the category
$\mathsf{FC}_{\bot}(\mathcal{ A})$ of all strict finitary corecursive
monads.
\end{enumerate}
\end{nota}

\begin{defi}
A \emph{Bloom monad} on $\mathcal A$ is an Eilenberg-Moore algebra for the monad $\mathcal M^*$.
\end{defi}

\begin{rem}
Following Theorem \ref{eilenberg-moore alg for M=elgot theo}, this concept is, for $\mathcal A=\mathbf{Set}$, completely analogous to iteration theories of Bloom and \'Esik: whereas iteration theories formalizes equational properties of parametrized iteration, Bloom monads on $\mathbf{Set}$ formalize equational properties of non-parametrized iteration. But what are Bloom monads?

\begin{enumerate}
\item Every Bloom monad is a finitary monad on $\mathcal A$. Indeed, let
$\mathcal F$ be the free-monad on $\mathcal A^\mathbb{F}$: to every
finitary endofunctor $H$ it assigns the free monad $F_H$ on $H$. It is
well-known that the Eilenberg-Moore algebras for $\mathcal F$ are
precisely the finitary monads on $\mathcal A$,
see~\cite{lack}.

For every $H$ in $\mathcal A^\mathbb F$ we have the unique monad morphism $$\phi_H:\mathcal F(H)\to \mathcal M^*(H)=M^*_{H_\bot}$$ given by the universal property of $\mathcal F(H)$. These morphisms form components of a monad morphism $\phi:\mathcal F\to \mathcal M^*$ (over $\mathcal A^\mathbb F$). Thus, every Eilenberg-Moore algebra for $\mathcal M^*$ is automatically one for $\mathcal F$, too.

\item Every Bloom monad $\mathbb S=(S,\eta,\mu)$ comes equipped with an operation $\dagger$ assigning to every equation morphism $e:X\to SX$ with $X$ finitely presentable and every object $Y$ a morphism $e^\dag:X\to SY$ which is a solution:
$$e^\dag=\mu_Y\cdot Se^\dag\cdot e.$$
Indeed, the Eilenberg-Moore structure
$$\sigma:M^*_{S_\bot}\to S$$
is a monad morphism, and we have also the universal arrow (see Example \ref{exam ideal monad})
$$\xymatrix{\overline \kappa\equiv S\ar[r]^-{\mathsf{inl}}&S_\bot\ar[r]^-\kappa&M^*_{S_\bot}}$$
which, due to the unit law of $\sigma $, fulfils
\begin{equation}\label{eq g.kappa=id}
\sigma\cdot \overline \kappa=id_S.
\end{equation}
For every equation morphism $e:X\to SX$ the unique strict solution (see Theorem \ref{the_ equations have strict solutions}) of $\overline{\kappa}_X\cdot e:X\to M^*_{S_\bot}X$ w.r.t. $\mathbb M^*_{S_\bot}$ is denoted by $e^\ddag:X\to M^*_{S_\bot}Y$. Then
$$e^\dag=\sigma_Y\cdot e^\ddag:X\to SY$$
is a (canonical) solution of $e$ w.r.t. $\mathbb S$. Indeed, the following diagram, where $\overline \mu$ denotes the multiplication of $\mathbb M^*_{S_\bot}$, commutes: 
$$\xymatrix{
X\ar[rr]^-{e^\ddag}\ar[dd]_-e&&M^*_{S_\bot}Y\ar[r]^{\sigma_Y}&SY\\
&M^*_{S_\bot}X\ar[r]^-{M^*_{S_\bot}e^\ddag}&M^*_{S_\bot}M^*_{S_\bot}Y\ar[u]_-{\overline \mu_Y}\ar[rd]_-{(\sigma *\sigma)_Y}\\
SX\ar[ur]_-{\overline{\kappa}_X}\ar[rrr]_-{S(\sigma_Y\cdot e^\ddag)}&&&SSY\ar[uu]_-{\mu_Y}
}$$
The left-hand part commutes since $e^\ddag$ is a solution of $\overline{\kappa}_X\cdot e$. The right-hand part (with $\sigma*\sigma=S\sigma\cdot\sigma M^*_{S_\bot}$) commutes because $\sigma:\mathbb M^*_{S_\bot}\to \mathbb S$ is a monad morphism. And to prove the lower part, we just need to verify
$$Se^\ddag=\sigma_{M^*_{S_\bot}Y}\cdot M^*_{S_\bot}e^\ddag\cdot \overline \kappa_X$$
which follows easily from $\sigma_X\cdot \overline \kappa_X=id$ and
the naturality of $\sigma$.
\end{enumerate}
\end{rem}

\begin{rem}\label{re equational prop}
(a) The operation $\dag$ above satisfies all the equational laws that the formation of strict solutions in all finitary corecursive monads satisfies. This follows from the fact that the Bloom monad $(\mathbb S,\dag)$ is by definition a quotient algebra (for $\mathcal M^*$) of the finitary corecursive monad $(\mathbb M^*_{S_\bot},\ddag)$.

(b) For the base category $\mathbf{Set}$ we can say more. Since $S$ is
finitary, there exists a finitary signature $\Sigma$ such that $S$ is
a quotient of $H_\Sigma$ (see~\cite{at}). Let $\Sigma_\bot=\Sigma\cup \{\bot\}$ be the extension by a nullary operation $\bot$. Then $S_\bot$ is a quotient of $H_{\Sigma_\bot}$. Indeed, there exists a signature $\Gamma$ such that for suitable natural transformations $\alpha_1,\alpha_2:H_\Gamma\to H_\Sigma$ we have a coequalizer
$$\xymatrix{H_\Gamma\ar@<2pt>[r]^-{\alpha_1}\ar@<-2pt>[r]_-{\alpha_2}&H_{\Sigma_\bot}\ar[r]^e&S_\bot},$$
see \cite{amm12}. The functor $\mathbb M^*_{(-)}$ of free  finitary corecursive monads is a left adjoint, thus, it preserves coequalizers. Therefore $\mathcal{M}S_\bot$ is a quotient of $\mathcal{M}H_{\Sigma_\bot}$.

Consequently, a Bloom monad in $\mathbf{Set}$ is a finitary monad with a solution operation $\dag$ satisfying all the equational laws that the operation of unique strict solutions for the rational-tree monads $\mathbb M^*_{\Sigma_\bot}$ satisfies.
\end{rem}

In the following example we list some equational properties of $\dag$
in Bloom monads. It is an open problem whether this list is complete
in the sense that every equational property of $\dag$ holding in all
Bloom monads can be derived from the properties stated.

\begin{exa}
Equational properties of $\dag$ in Bloom monads. We use the terminology of the monograph \cite{be}.
\begin{enumerate}[label=\({\alph*}]
\item Fixpoint identity. This is the equation
$$e^\dag=\mu_Y\cdot Se^\dag\cdot e$$
of Definition \ref{def corecursive monad}.

\item Functoriality (called functorial dagger implication in~\cite{be}). For every coalgebra homomorphism
$$\xymatrix{
X\ar[r]^e\ar[d]_-h&SX\ar[d]^-{Sh}\\
\overline{X}\ar[r]_-{\overline{e}}&S\overline X
}$$
we have
$$e^\dag={\overline e}^\dagger\cdot h.$$
To prove this recall that we can restrict ourselves to finitary
corecursive monads where strict solutions are unique (Remark
\ref{re equational prop}(a)).
Thus, it is sufficient to observe that $\overline{e}^\dag\cdot h$ solves $e$:
$$\mu_Y\cdot S(\overline{e}^\dag\cdot h)\cdot e=\mu_Y\cdot S\overline e^\dag\cdot \overline e\cdot h=\overline e^\dag\cdot h$$
and that it is strict. The latter follows from the fact that $\overline e^\dag$ is strict for $\overline e$ and the subobjects $i_n:X_n\to X_{n-1}$ of Definition \ref{def_derived subobjects} for $e$ are related to the corresponding subobject $\overline i_n:\overline X_n\to \overline X_{n-1}$ for $\overline e$ by morphisms $h_n:X\to \overline X_n$ such that the squares below commute:
\[
\xymatrix{
  \cdots\  
  \ar@{>->}[r]^{i_3} 
  &
  X_2
  \ar@{>->}[r]^{i_2} 
  \ar[d]^{h_2}
  &
  X_1
  \ar@{>->}[r]^{i_1} 
  \ar[d]^{h_1}
  &
  X
  \ar[d]^{h = h_0}
  \\
  \cdots\ 
  \ar@{>->}[r]_{\overline i_3} 
  &
  \overline X_2
  \ar@{>->}[r]_{\overline i_2}
  &
  \overline X_1
  \ar@{>->}[r]_{\overline i_1}
  &
  \overline X 
}
\]
For example, $h_1$ is the unique morphism for which the following diagram commutes:
\[
\xymatrix{
  X_1
  \ar[ddd]_{e_1}
  \ar@{>->}[rrr]^-{i_1}
  \ar[rd]^{h_1}
  &&&
  X
  \ar[ddd]^{e}
  \ar[ld]_h
  \\
  &
  \overline X_1
  \ar@{>->}[r]^-{\overline i_1}
  \ar[d]_{\overline e_1}
  &
  \overline X
  \ar[d]^{\overline e}
  \\
  &
  \overline X
  \ar@{>->}[r]_-{\eta_X}
  &
  S\overline X
  \\
  X
  \ar@{>->}[rrr]_-{\eta_X}
  \ar[ru]^h
  &&&
  SX
  \ar[lu]_{Sh}
}
\]

satisfying $h_{n-1}\cdot i_n=\overline i_n\cdot h_n$ (where $h_0=h$). Therefore, we obtain $h_\infty :X_\infty\to\overline X_\infty$ such that $h\cdot i_\infty=\overline i_\infty\cdot h_\infty$. This gives us the derived factorization
$$(\overline e^\dag\cdot h)\cdot i_\infty=\overline e^\dag \cdot \overline i_\infty\cdot h_\infty=\bot\cdot !\cdot i_\infty=\bot\cdot !$$

\item Parameter identity. Given $e:X\to SX$, with $X$ finitely presentable, and a morphism $h:Y\to SZ$, then the corresponding morphism of free Eilenberg-Moore algebras
$$\hat h=\mu_Z\cdot Sh:SY\to SZ$$
makes the triangle
$$\xymatrix{
X\ar[r]^-{e_Z^\dag}\ar[d]_{e^\dag_Y}&SZ\\
SY\ar[ru]_-{\hat h}
}$$
commutative. Indeed, $\hat h\cdot e^\dag_Y$ is strict because $e^\dag_Y$ is strict: from the strictness of $S$ we get $\hat h\cdot \bot=\bot$, thus
$$\hat h\cdot e^\dag_Y\cdot i_\infty=\hat h\cdot \bot\cdot !=\bot\cdot !$$
And $\hat h\cdot e^\dag_Y$ is a solution of $e$ in $SZ$:
$$\xymatrix{
X\ar[r]^{e_Y^\dag}\ar[dd]_-e&SY\ar[r]^-{\hat h}&SZ\\
&SSY\ar[u]_-{\mu_Y}\ar[rd]^-{S\hat h}\\
SX\ar[ru]^-{Se_Y^\dag}\ar[rr]_-{S(\hat h\cdot e^\dag_Y)}&&SSZ\ar[uu]_-{\mu_Z}
}$$

\item Double iteration identity. For every $e:X\to SX$, form $\hat e: SX \to SX$, then we have $$e^\dag=(\hat e\cdot e)^\dag$$
Indeed, $e^\dag$ is a solution of $\hat ee=\mu_X\cdot Se\cdot e$:
$$\xymatrix{
X\ar[rr]^-{e^\dag}\ar[d]_-e&&SY\\
SX\ar[r]^{Se^\dag}\ar[d]_-{Se}&SSY\ar[ru]^-{\mu_Y}\\
SSX\ar[r]^{SSe^\dag}\ar[d]_-{\mu_X}&SSY\ar[u]_-{S\mu_Y}\ar[rd]^-{\mu_{SY}}\\
SX\ar[rr]_-{Se^\dag}&&SSY\ar[uuu]_-{\mu_Y}
}$$
The upper part states $e^\dag$ is a solution, and the middle part
follows. The right-hand part is a monad axiom of $\mathbb S$ and the
lower part is the naturality of $\mu$. The strictness of $e^\dag$
w.r.t. $\hat e\cdot e$ is trivial: since $\mathbb S$ is an ideal monad on a hyper-extensive category,
we have the following diagram of pullback squares:
$$\xymatrix{
X_2\ar[r]^-{e_2}\ar[d]_-{i_2}&X_1\ar[r]^-{e_1}\ar[d]_-{i_1}&X\ar@{=}[r]\ar[d]_-{\eta_X}&X=X_0\ar[dd]^-{\eta_X}\\
X_1\ar[r]^-{e_1}\ar[d]_-{i_1}&X\ar[d]_-{\eta_X}\ar[r]^-e&SX\ar[d]_{\eta_{SX}}\\
X\ar[r]_-e&SX\ar[r]_-{Se}&SSX\ar[r]_-{\mu_X}&SX
}$$
So the first derived subobject of $\hat e\cdot e$ is $X_{2}$, and
similarly the $n$-th one is $X_{2n}$, where the $X_i$
are the derived subobjects of $e$. If follows that $X_\infty$ is the same for $e$ and $\hat e\cdot e$, thus, $e^\dag$ is a strict solution of $\hat e\cdot e$.

Observe that the above ``double iteration" extends to ``$n$ times iteration", e.g. for $n=3$ we get $e^\dag=(\hat e\hat e e)^\dag$.

(e) Dinaturality.
Given morphisms $f:X\to SZ$ and $g:Z\to SX$ with $X$ and $Z$ finitely presentable, 
form equation morphisms
$$\hat g\cdot f:X\to SX\qquad\text{and}\qquad\hat f\cdot g:Z\to SZ.$$
Their solutions are related by the dinaturality equation
$$(\hat g\cdot f)^\dag=\widehat{(\hat f\cdot g)^\dag}\cdot f:X\to SY$$
for every object $Y$.

Indeed, the right-hand side morphism is a solution of $\hat g\cdot f$ since it is the composite $\mu_Y\cdot S(\hat f\cdot g )^\dag\cdot f$, and the following diagram
$$\xymatrix{
X\ar[r]^-f\ar[d]_-f&SZ\ar[r]^{S(\hat f\cdot g )^\dag}\ar@{=}[dl]\ar[d]^-{Sg}&SSY\ar[r]^\mu&SY\\
SZ\ar[d]_-{Sg}\ar@{=}[ur]&SSX\ar[d]^-{SSf}\\
SSX\ar[d]_-{\mu}\ar[r]_-{SSf}\ar@{=}[ur]&SSSZ\ar[d]^-{S\mu}\\
SX\ar[r]_-{Sf}&SSZ\ar[r]_-{SS(\hat f\cdot g )^\dag}&SSSY\ar[r]_-{S\mu}\ar[uuu]_-{S\mu}&SSY\ar[uuu]_-\mu
}$$
commutes: the middle part follows from $(\hat f\cdot g )^\dag$ solving the equation morphism $\hat f\cdot g=\mu_Z\cdot Sf\cdot g$.

To see that $\widehat{(\hat f\cdot g)^\dag}\cdot f$ is a strict solution of $\hat g\cdot f$, we first need to relate the derived subobjects of $\hat g\cdot f$ and $f\cdot \hat g$. For this consider the two following chains of pullbacks:
\[\xymatrix{
\ar@{}[d]|{\objectstyle\cdots}&X_3\ar@{>->}[r]^-{i_3}\ar[d]_-{f_3}&X_2\ar@{>->}[r]^-{i_2}\ar[d]_-{f_2}&X_1\ar@{>->}[r]^-{i_1}\ar[d]_-{f_1}&X=X_0\ar[d]_-{f}\\
Z_3\ar@{>->}[r]_-{j_3}&Z_2\ar@{>->}[r]_-{j_2}&Z_1\ar@{>->}[r]_-{j_1}&Z\ar[r]_-{\eta_Z}&SZ
}
\]
\[
\xymatrix{
\ar@{}[d]|{\objectstyle\cdots}&Z_3\ar@{>->}[r]^-{j_3}\ar[d]_-{g_3}&Z_2\ar@{>->}[r]^-{j_2}\ar[d]_-{g_2}&Z_1\ar@{>->}[r]^-{j_1}\ar[d]_-{g_1}&Z=Z_0\ar[d]_-{g}\\
X_3\ar@{>->}[r]_-{i_3}&X_2\ar@{>->}[r]_-{i_2}&X_1\ar@{>->}[r]_-{i_1}&X\ar[r]_-{\eta_X}&SX
}
\]
Using them and the fact that $\mathbb{S}$ is an ideal monad in
connection with (hyper-)extensivity we can compute the derived
subobjects of $\hat g\cdot f$ as follows (all squares in following the diagram are pullbacks):
\[
\xymatrix{
&&X_4\ar[d]_-{f_4}\ar@{>->}[r]^-{i_4}&X_3\ar@{>->}[r]^-{i_3}\ar[d]_-{f_3}&X_2\ar@{>->}[r]^-{i_2}\ar[d]_-{f_2}&
X_1\ar@{>->}[r]^-{i_1}\ar[d]_-{f_1}&X\ar[d]^-f\\
&\cdots&Z_3\ar@{>->}[r]_-{j_3}\ar[d]_-{g_3}&Z_2\ar@{>->}[r]_-{j_2}\ar[d]_-{g_2}&Z_1\ar@{>->}[r]_-{j_1}\ar[d]_-{g_1}&Z\ar[d]_-{g}
\ar[r]_-{\eta_Z}&SZ\ar[d]^-{Sg}\\
X_4\ar@{>->}[r]_-{i_4}\ar@{=}[d]&X_3\ar@{=}[d]\ar@{>->}[r]_-{i_3}&X_2\ar@{=}[d]\ar@{>->}[r]_-{i_2}&X_1\ar@{=}[d]\ar@{>->}[r]_-{i_1}&X\ar@{=}[d]
\ar[r]_-{\eta_X}&SX\ar[r]_-{S\eta_X}&SSX\ar[d]^-{\mu_X}
\\
X_4\ar@{>->}[r]_-{i_4}&X_3\ar@{>->}[r]_-{i_3}&X_2\ar@{>->}[r]_-{i_2}&X_1\ar@{>->}[r]_-{i_1}&X\ar[rr]_-{\eta_X}&&SX\ar@{<-}`r[u]`[uuu]_-{\hat g\cdot f}[uuu]
}
\]
So we see that the first derived subobject of $\hat g\cdot f$ is
$i_1\cdot i_2:X_2\rightarrowtail X$, and the second one is $i_1\cdot
i_2\cdot i_3\cdot i_4:X_4\rightarrowtail X$ etc. similarly the $n$-th
derived subobject of $\hat f\cdot g$ is  
$$j_1\cdot j_2\cdot \cdots\cdot j_{2n}:Z_{2n}\rightarrowtail Z.$$
Now recall from \cite[Lemma~6.4]{abmv_how} that the intersections $X_\infty$ and
$Z_\infty$ are obtained after finitely many steps in the computation
of the derived subobjects, i.\,e.\ there exists an $n$ such that
$X_\infty=X_k$ and $Z_\infty =Z_k$ for all $k\geq n$. Thus, we
establish that $\widehat{(\hat f\cdot g)^\dag}\cdot f$ is a strict solution of $\hat g\cdot f$ by the commutative diagram below:
$$\xymatrix{
X_\infty=X_{n+1}\ar[r]^-{f_n}\ar[dd]_-{i_\infty}&Z_\infty=Z_{n}\ar[r]^-{!}\ar[d]_-{j_\infty}&1\ar[d]_-\bot\ar[rdd]^-\bot
\ar@{<-}`u[l]`[ll]_-{!}[ll]
\\
&Z\ar[r]_-{(\hat f \cdot g)^\dag}\ar[d]_-{\eta_Z}&SZ\ar[d]_-{\eta_{SZ}}\ar@{=}[rd]\\
X\ar[r]_-f&SZ\ar[r]_-{S(\hat f \cdot g)^\dag}&SSZ\ar[r]_-{\mu_Z}&SZ
}$$
This completes the proof.
\end{enumerate}
\end{exa}

\section{Conclusions and Open problems}

For coalgebras, recursivity caN be defined by the existence of unique algebra-to-coalgebra homomorphisms (no parameters are used). Or, equivalently, assuming the given endofunctor preserves weak pullbacks, by the unique solutions of all recursive systems with parameters. In contrast, in the dual situation we need to study non-equivalent variations. The present paper is dedicated to corecursive algebras $A$, where corecursivity means that  every recursive system of equations represented by a coalgebra has a unique solution in $A$. The formulation above is strictly weaker than the concept of a completely iterative algebra, where every parametrized recursive system of equations has a unique solution. For example, if we consider the endofunctor $X\mapsto X\times X$ of one binary operation in {\bf Set}, the algebra of all binary trees with finitely many leaves is corecursive, but not completely iterative.

The main result of our paper is the description of the free
corecursive algebra on $Y$ as the coproduct $MY=T\oplus FY$ of the
terminal coalgebra $T$ and the free algebra $FY$ in the category of all
algebras. The above example of binary trees is the free corecursive
algebra $M1$ on one generator. Our description is true for all
accessible ($=$ bounded)  endofunctors on {\bf Set} and, more generally, for
all endofunctors on {\bf Set} having free corecursive algebras. For
accessible monos-preserving endofunctors on more general base categories (posets, groups,
monoids etc.) the above description of the free corecursive algebras
also holds.

We introduce the concept of a corecursive monad, a weakening of
completely iterative monad. We prove that the assignment $Y\mapsto
MY=T\oplus FY$ is the free corecursive monad on the given accessible
endofunctor. And we characterize the Eilenberg-Moore algebras for this
monad. We call them Bloom algebras in honor of Stephen Bloom. They
play the analogous role that Elgot algebras, studied in \cite{amv3},
play for iterative monads: solutions of recursive equations are not
required to be unique, but have to satisfy some ``basic''
properties. In the case of Bloom algebras, the only property needed is
functoriality.

We further treat finitary equations: If we consider systems of
recursive equations as coalgebras $e:X\rightarrow HX$, then finite
systems of recursive equation are represented by coalgebras in which
$X$ is a finite set (or more generally, a finitely presentable
object). We can speak about finitary corecursive algebras as those in
which these finite systems have unique solutions. We prove that if $R$
is the initial iterative algebra, then $R \oplus FY$ is a free
finitary corecursive algebra.

Another question is: what is the analogy of the notion of an iteration
monad of S.~Bloom and Z.~\'{E}sik \cite{be} in the realm of
corecursive algebras? We do not know the answer. But at least we can
formulate the question precisely. The idea of iteration monads is to
collect all ``equational" properties that the operation $e\mapsto
e^\dagger $ of solving recursive systems $e$ has in trees for a
signature. This can be understood as forming the monad of free
iterative theories (or monads) on the category
$\mathbf{Set}^\mathbb{F}$ of sets in context, and characterizing
monadic algebras: these are, as proved in \cite{amv_em2},
precisely the iteration theories of S. Bloom and Z. \'{E}sik that are
functorial. So the open problem we state is this: form the monad of
free finitary corecursive theories on $\mathbf{Set}^\mathbb{F}$,
what are its monadic algebras? We called them Bloom monads, and listed
some of their properties.

\bibliography{ourpapers}
\bibliographystyle{plain}
\vspace{-40 pt}
\end{document}